\DeclareMathOperator{\area}{area}
\DeclareMathOperator{\Tr}{Tr}
\theoremstyle{plain}
\newtheorem{thm}{Theorem}[section]
\newtheorem{lem}[thm]{Lemma}
\newtheorem{prop}[thm]{Proposition}
\newtheorem{cor}[thm]{Corollary}
\theoremstyle{definition}
\title{Properly ordered dimers, $R$-charges, and an efficient inverse algorithm}
\author{Daniel R. Gulotta \\ Department of Physics, Princeton University \\
Princeton, NJ 08544, USA \\ \email{dgulotta@princeton.edu}}
\preprint{PUPT-2273}
\abstract{The $\mathcal{N}=1$ superconformal field theories that
arise in AdS-CFT from
placing a stack of D3-branes at the singularity of a toric Calabi-Yau
threefold can be described succinctly by dimer models.  We present
an efficient algorithm for constructing a dimer model from the geometry of
the Calabi-Yau.  Since not all dimers produce consistent field theories, we
perform several consistency checks on the field theories
produced by our algorithm: they have the correct number of gauge groups,
their cubic anomalies agree with the Chern-Simons coefficients in the
AdS dual, and all gauge invariant chiral operators satisfy the unitarity bound.
We also give bounds on
the ratio of the central charge of the theory to the area of the toric diagram.
To prove these results, we introduce the concept of a properly ordered dimer.}
\keywords{Anomalies in Field and String Theories, Gauge Symmetry, AdS-CFT Correspondence}
\begin{document}
\section{Introduction}
The AdS-CFT correspondence \cite{Maldacena:1997re, Gubser:1998bc, Witten:1998qj}
tells us that Type IIB string theory on
$AdS_5 \times X_5$, where $X$ is a five-dimensional 
Sasaki-Einstein manifold, is dual to a
four-dimensional $\mathcal{N}=1$ superconformal gauge theory.
We can study the gauge theory by placing D3-branes at a singularity of
$Y_6$, the cone over $X_5$, which is a Calabi-Yau threefold.

In the case where $Y_6$ is toric,
dimer models \cite{Kasteleyn-1967,Hanany:2005ve,Franco:2005rj, Franco:2005sm, Hanany:2005ss, Feng:2005gw, stienstra-2007}
are a convenient way of encoding the field content and
superpotential of the CFT.
One can try to compute the geometry from the dimer or vice versa.
There are algorithms for solving the former problem by taking the determinant
of the Kasteleyn matrix \cite{Kasteleyn-1967,Hanany:2005ve,Franco:2005rj,Franco:2005sm,Hanany:2005ss, Feng:2005gw, stienstra-2007} and by counting the windings of zigzag paths \cite{Hanany:2005ss, Feng:2005gw, stienstra-2007}.
The latter problem can be solved by the
``Fast Inverse Algorithm'' \cite{Hanany:2005ss, Feng:2005gw, stienstra-2007},
although the algorithm is computationally infeasible for all but very
simple toric varieties due to the large amount of trial and error required.
We resolve this problem by eliminating the need for trial and error.
Our algorithm uses some ideas from the Fast Inverse Algorithm
and the method of partial resolution of the toric singularity
\cite{Douglas:1997de, Morrison:1998cs, Feng:2000mi, GarciaEtxebarria:2006aq}.

One difficulty in constructing dimers is that not every dimer describes
a consistent field theory.  One way of determining that a field theory
is not consistent is by counting its faces.  Each face represents a gauge
group, and a consistent theory should have as many gauge groups as there are
cycles for Type IIB D-branes to wrap in the AdS theory.
Previously there was not a simple, easy to check criterion for determining that
a dimer is consistent.  We propose that any dimer that has the correct number
of faces and that has no nodes of valence one is consistent.
We will present several pieces of evidence to support our proposal.

If the dimer is consistent, then the cubic anomalies of the CFT should be
equal to the Chern-Simons coefficients
of the AdS dual \cite{Witten:1998qj, Benvenuti:2006xg}.  We show that equality
holds in dimers that meet our two criteria.

In a four-dimensional SCFT the unitarity bound says that
each gauge invariant scalar operator should have dimension at least
one \cite{Mack-1977}, and the $R$-charge of a chiral
primary operator is two-thirds
of its dimension \cite{Dobrev-1985}.  However, when we try to
compute the $R$-charge of a gauge
invariant chiral primary operator in an inconsistent dimer theory, the answer is
sometimes less than two-thirds.  We will show that in dimers that meet our two
criteria, the $R$-charges of chiral primary operators are always at least
two-thirds if the number of colors is sufficiently large.

We also show that dimers that meet our two criteria have the properties
that corner perfect matchings are unique, and that the
zigzag path windings agree with the $(p,q)$-legs of the toric diagram.

While studying $R$-charges, we prove that
$\frac{27N^2 K}{8 \pi^2} < a \le \frac{N^2 K}{2}$ for toric theories,
where $a$ is the cubic 't Hooft anomaly $\frac{3}{32}(3 \Tr R^3 - \Tr R)$,
$N$ is the number of colors of each gauge group, and $K$ is the area of the
toric diagram (which is half the number of gauge groups).

\section{\label{sec:defs}Definitions}

A \emph{dimer model} \cite{Kasteleyn-1967,Hanany:2005ve,Franco:2005rj, Franco:2005sm, Hanany:2005ss, Feng:2005gw, stienstra-2007}
consists of a graph whose vertices are colored
black or white, and every edge connects a white vertex to a black vertex,
i.~e.\ the graph is bipartite.  We will use dimer models embedded on the torus $T^2$ to
describe toric quiver gauge theories.

A \emph{perfect matching} of the dimer is a set of edges of the dimer such that
each vertex is an endpoint of exactly one of the edges.  The difference of two
perfect matchings is the set of edges that belong to exactly one of the
matchings.

The \emph{Kasteleyn matrix}
is a weighted adjacency matrix of the dimer.
There is one row for each white vertex and one column for each black
vertex.  Let $\gamma_w$ and $\gamma_z$ be a pair of curves whose winding
numbers generate the homology group $H^1(T^2)$.  The weight of an edge
is $c w^a z^b$ where $c$ is an arbitrary nonzero complex number\footnote{
The original definition of the Kasteleyn matrix imposes constraints
on $c$ for the purpose of counting perfect matchings \cite{
Hanany:2005ve,Franco:2005rj, Franco:2005sm, Hanany:2005ss}.  However, these
constraints are not necessary for determining the Newton polygon.
We follow the convention of \cite{Feng:2005gw}, which points out that
it is useful for the purposes of mirror symmetry to allow arbitrary
nonzero coefficients.},
$w$ and $z$ are variables,
$a$ is the number of times $\gamma_w$ crosses the edge
with the white edge endpoint on its left minus the number of times
$\gamma_w$ crosses the edge with the white endpoint on its right and
$b$ is defined similarly with $\gamma_w$ replaced by $\gamma_z$.
The determinant of this matrix tells us the geometry of the field
configuration.

The \emph{Newton polygon} of a multivariate polynomial is the convex hull of the
set of exponents of monomials appearing in the polynomial.
The Newton polygon of the determinant is known as the
\emph{toric diagram}.  If we choose a different basis for computing the
Kasteleyn matrix, then the toric diagram changes by an affine transformation.

A \emph{$(p,q)$-leg} of a toric diagram is a line segment drawn perpendicular
to and proportional in length to a segment joining consecutive boundary
lattice points of the diagram.

A \emph{zigzag path} is a path of the dimer on which edges alternate between
being clockwise adjacent around a vertex and being counterclockwise adjacent
around a vertex.  A zigzag path is uniquely determined by a choice of
an edge and whether to turn clockwise or counterclockwise to find the next
edge.  Therefore each edge belongs to two zigzag paths.  (These paths could
turn out to be the same, although it will turn out that we want to work with
models in which they are always different.)

In \cite{Hanany:2005ss} it is conjectured that in a consistent field
theory, the toric diagram can also be computed by looking at the windings of the
zigzag paths: they are in one-to-one correspondence with the $(p,q)$-legs.
The conjecture was proved using mirror symmetry in \cite{Feng:2005gw}.

The \emph{unsigned crossing number} of a pair of closed paths on the torus
is the number of times they intersect.  The \emph{signed crossing number}
of a pair of oriented closed paths on the torus is the number of times they
intersect with a positive orientation (the tangent vector to the second path
is counterclockwise from the tangent to the first at the point of intersection)
minus the number of times they intersect with a negative orientation.  It
is a basic fact from homology theory that the signed crossing number of
a path with winding $(a,b)$ and a path with winding $(c,d)$ is $
(a,b)\wedge(c,d)=ad-bc$.

We will work with the zigzag path diagrams of \cite{Hanany:2005ss}
(referred to there as rhombus loop diagrams).
We obtain a zigzag path diagram from a dimer as follows.  For each edge of the
dimer we draw a vertex of the zigzag path diagram at a point on that edge.
To avoid confusion between the vertices of this diagram and the vertices
of the dimer we will call the latter nodes.
We connect two vertices of the zigzag path diagram if the dimer edges they
represent are consecutive along a zigzag path. (This is equivalent to them
being consecutive around a node and also to them being consecutive around a
face.)  We orient the edges of the zigzag path diagram as follows.  If the
endpoints
lie on dimer edges that meet at a white (resp. black) node, then the edge
should go counterclockwise (resp. clockwise) as seen from that node.
With this definition, each node of the dimer becomes a face of the zigzag path
diagram, with all edges oriented counterclockwise for a white node, or
clockwise for a black node.  The other faces of the zigzag
path diagram correspond to faces of the dimer, and the orientations of their
edges alternate.  Figure \ref{fig:exampledimer} shows an example
of a dimer and its corresponding zigzag path diagram.

Conversely, we can obtain a dimer from a zigzag path diagram provided that
the orientations of the intersections alternate along each path.
Around each vertex of such a zigzag path diagram, there is one face with
all counterclockwise oriented edges, one face with all clockwise oriented
edges, and two faces whose edge orientations alternate.  Draw a white
node at each counterclockwise oriented face and a black node
at each clockwise oriented face, and connect nodes whose faces share a
corner.

\section{\label{sec:proper}Consistency of dimer field theories}

\subsection{Criteria for consistency and inconsistency}

One difficulty in dealing with dimer models is that not all of them
produce valid field theories.  While there are a number of ways of 
determining that a dimer produces an invalid field theory
there has not yet been a simple criterion for showing
that a dimer theory is valid.

One way of proving that a dimer produces an invalid field theory is
by counting the number of faces of the dimer, i.~e.\ the number of gauge
groups.  If the dimer theory is
consistent, then the number of gauge groups should equal
the number of 0, 2, and 4-cycles in the Calabi-Yau around which
D3, D5, and D7-branes, respectively, can wrap \cite{Benvenuti:2004dy}.
The Euler characteristic of the Calabi-Yau is the number of even dimensional
cycles minus the number of odd dimensional cycles.
There are no odd dimensional cycles, so the number of gauge groups should be
equal to the Euler characteristic.
The Euler characteristic of a toric variety
equals twice the area of the toric diagram \cite{Fulton-1993}.

We propose that a dimer will produce a valid field theory if
the dimer has no nodes of valence one and
it has a number of faces equal to twice the area of the lattice
polygon whose $(p,q)$-legs are the winding numbers of the zigzag paths.
(Recall that this polygon is the same as the Newton polygon of the determinant
of the Kasteleyn matrix for consistent theories.)  In this section, we will show
that dimers satisfying our two criteria also have the properties that
their cubic anomalies agree with the Chern-Simons coefficients of
the AdS dual, the
$R$-charges of gauge invariant chiral primary
operators are greater than or equal to two-thirds, the windings of the
zigzag paths are in one-to-one correspondence
with the $(p,q)$-legs of the toric diagram, and the corner perfect matchings
are unique.

It will be convenient to
introduce a property that we call ``proper ordering'', which will
turn out to be equivalent to the property of having the correct number of
faces and no valence one nodes.
We call a node of the dimer \emph{properly ordered} if the order of the
zigzag paths around that node is the same as the circular order of the
directions of their windings.  (We do not allow two zigzag paths with
the same winding to intersect, nor do we allow zigzag paths of winding zero,
since these scenarios make the ordering ambiguous.)
We call a dimer properly ordered if
each of its nodes is properly ordered.

\begin{thm}
  A connected dimer is properly ordered iff it has no valence one nodes
  and it has a number of faces equal to twice the area of the 
  convex polygon whose $(p,q)$-legs
  are the winding numbers of the zigzag paths of the dimer.
\end{thm}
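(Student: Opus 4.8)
The plan is to recast the face-count condition homologically and then localize it to the individual nodes, where ``proper ordering'' lives. First I would apply the Euler characteristic of the torus to the dimer graph itself: writing $V$, $E$, $F$ for the numbers of nodes, edges and faces, we have $V-E+F=0$, so $F=E-V$, and the asserted equality $F=2\,\area$ becomes $E-V=2\,\area$, where $\area$ is the area of the polygon assembled from the zigzag windings. I would then pass to the zigzag-path diagram: its vertices are exactly the $E$ dimer edges, and each such vertex is a transverse crossing of the two zigzag paths running through that edge. Hence $E$ equals the total number of crossings of the zigzag-path arrangement on $T^2$.

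Next I would compute $2\,\area$ as a signed crossing count. Ordering the paths $\zeta_1,\dots,\zeta_n$ by the angle of their windings $\vec v_1,\dots,\vec v_n$, the convex polygon they bound has $2\,\area=\sum_{i<j}\vec v_i\wedge\vec v_j$. Since the signed intersection number of two oriented closed curves on the torus equals the wedge of their windings, $\vec v_i\wedge\vec v_j=\sum_p\varepsilon_p$ over the crossings $p$ of $\zeta_i$ with $\zeta_j$, where $\varepsilon_p=\pm1$ is the local sign taken with the lower-angle path first. Summing over pairs gives $2\,\area=\sum_p\varepsilon_p$ over all crossings between \emph{distinct} paths; self-crossings contribute nothing to the area but do contribute to $E$. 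Writing $\#\mathrm{self}$ for the self-crossings and $\#\mathrm{neg}$ for the negatively-signed distinct-path crossings, one obtains the identity $E-2\,\area=\#\mathrm{self}+2\,\#\mathrm{neg}$, which is manifestly nonnegative. Thus $F=2\,\area+\bigl(\#\mathrm{self}+2\,\#\mathrm{neg}-V\bigr)$, and the whole theorem reduces to showing that the ``defect'' $\#\mathrm{self}+2\,\#\mathrm{neg}$ equals $V$ precisely when the dimer is properly ordered, and exceeds $V$ otherwise.

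I would establish this by a node-by-node count. Around a node of valence $d$ the boundary of the corresponding zigzag-diagram face passes through $d$ crossings, and at the $k$-th the two crossing paths are the ones cyclically consecutive around the node; call their windings $\vec u_{k-1},\vec u_k$. Reading off $\sign(\vec u_{k-1}\wedge\vec u_k)$ as we go once around the node yields a cyclic word of signs, and any cyclic sequence of distinct directions has at least one ``descent'' (a step where the angle decreases), with exactly one descent iff the directions are sorted cyclically, i.e.\ iff the node is properly ordered. The no-valence-one hypothesis forces $d\ge2$ so this local word is nondegenerate, and the exclusions built into proper ordering (no equal or zero windings meeting) keep every wedge nonzero. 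I would then show that the total number of descents over all nodes is exactly the defect $\#\mathrm{self}+2\,\#\mathrm{neg}$, giving $\#\mathrm{self}+2\,\#\mathrm{neg}\ge V$ with equality iff every node is properly ordered; with the formula above this proves both directions. (Connectedness is used to ensure the arrangement is a single cell complex, so the Euler count and the one-full-turn behaviour at each face are valid.)

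The step I expect to be the real obstacle is the bookkeeping in that last sentence: relating the purely combinatorial per-node descents, defined through the \emph{windings} (homology classes), to the genuinely geometric crossing signs $\varepsilon_p$, defined through the \emph{local tangent directions} of the strands. These agree when each pair of paths crosses minimally and no path self-intersects, but a priori they can diverge, since a bigon between two paths contributes one crossing of each geometric sign while the windings see only their difference. I would have to argue that every excess (non-minimal) crossing and every self-crossing is forced to register as an additional descent at some node, so that the descent total is genuinely the defect rather than merely a lower bound for part of it; and in the converse direction I must extract from $F=2\,\area$ not only minimal pairwise crossings and embeddedness of each path but also the two exclusions (no crossing of equal-winding paths, no zero winding) that are part of the definition. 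Matching the shared crossings correctly, so that the count from each crossing's white node and its black node adds up with consistent signs, is where the argument has to be most delicate.
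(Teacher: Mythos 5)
Your reduction is the same one the paper uses: Euler characteristic gives $F=E-V$, the tip-to-tail/shoelace identity gives $2\,\area=\sum_{i<j}\mathbf{w}_i\wedge\mathbf{w}_j$, and the theorem becomes the statement that a per-node angular count (your ``descents,'' the paper's ``winding excess'') is nonnegative, vanishes exactly at properly ordered nodes, and sums over all nodes to $E-V-\sum_{i<j}\mathbf{w}_i\wedge\mathbf{w}_j$. The problem is that the step you defer to the end --- ``the total number of descents over all nodes is exactly the defect $\#\mathrm{self}+2\,\#\mathrm{neg}$'' --- is the entire content of the theorem, and you explicitly leave it as an acknowledged obstacle rather than proving it. As written, the proposal establishes the easy identities and then asserts the hard one. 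The paper closes exactly this gap by a change in the order of summation: fix a single generic ray $\mathcal{R}$, observe that the winding excess of a node equals the number of wedges between cyclically consecutive windings at that node that contain $\mathcal{R}$, minus one, and then reorganize the sum over all nodes as a sum over pairs of zigzag paths. Each oppositely oriented \emph{pair} of crossings of two paths contributes two wedges containing $\mathcal{R}$ (one at each of the four incident node-slots on average), while the $|\mathbf{w}_i\wedge\mathbf{w}_j|$ unpaired crossings contribute $|\mathbf{w}_i\wedge\mathbf{w}_j|-\mathbf{w}_i\wedge\mathbf{w}_j$ wedges containing $\mathcal{R}$; adding these gives $E-\sum_{i<j}\mathbf{w}_i\wedge\mathbf{w}_j$ with no case analysis on local tangent directions at all. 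This is precisely the device that lets you avoid the delicate matching of geometric crossing signs $\varepsilon_p$ to node-local angular data that you correctly identify as the sticking point.

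A second gap is the degenerate cases in the converse direction, which you mention but do not resolve, and which genuinely require the connectedness hypothesis. A node at which two consecutive windings coincide (e.g.\ a crossing of two paths with equal winding, or a self-crossing) or at which some winding is zero can still have winding excess zero --- but only if it has valence exactly two, under the convention that such a step counts half a revolution. The paper's argument is that the offending edge has another endpoint, and by connectedness some node along the way has valence greater than two, which then forces strictly positive excess somewhere; hence ``all excesses zero'' really does imply ``properly ordered'' and not merely ``excess zero at every node.'' Your descent framework does not assign a well-defined descent count to such nodes (there is no canonical way to say whether the angle ``decreases'' across a zero or repeated direction), so without an explicit convention and the accompanying connectivity argument the converse implication does not go through. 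Neither gap is a wrong turn --- the skeleton is the paper's own --- but both must be filled for the proof to stand.
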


\begin{proof}
  A properly ordered dimer cannot have a valence one node, since such a node
  would be the endpoint of an edge that is an intersection of a zigzag path
  with itself.
  Therefore it suffices to prove that a dimer with no valence one nodes
  is properly ordered iff it has a number of faces equal to twice the area of
  the convex polygon whose $(p,q)$-legs
  are the winding numbers of the zigzag paths of the dimer.

  Define the ``winding excess'' of a node $v$ of the dimer as follows.
  Let $\mathbf{w}_0, \mathbf{w}_1, ..., \mathbf{w}_{n-1}$ be the
  winding numbers of the zigzag paths passing through $v$
  (in the order that the paths appear around $v$).
  Start at $\mathbf{w}_0$ and turn counterclockwise to $\mathbf{w}_1$, then
  counterclockwise to $\mathbf{w}_2$, etc., and finally counterclockwise
  back to $\mathbf{w}_0$.  Then the winding excess is defined as the number
  of revolutions that we have made minus one.
  (In the special case where $\mathbf{w}_i$ and $\mathbf{w}_{i+1}$ are
  equal or one of them is zero, we count one-half of a revolution.)
  A node is properly ordered iff it has winding excess zero and none of the
  $\mathbf{w}_i$ are zero and no two consecutive windings are equal.
  A node with a $\mathbf{w}_i=0$ or $\mathbf{w}_i = \mathbf{w}_{i+1}$
  can have winding excess zero only if it has exactly two edges
  (and hence two zigzag paths passing through it).
  There must be some other node
  that is an endpoint of one of the edges where the two zigzag paths intersect,
  and that has more than two edges (since the graph is connected).
  This node cannot have winding excess zero.
  So all nodes are properly ordered iff all nodes have winding excess zero.
  A node has negative winding excess iff it has just one edge, and we have
  assumed that the dimer has no such nodes.
  Therefore the dimer is properly ordered iff the sum of all
  of the winding excesses is zero.

  If we choose a node and draw all of the wedges between
  the consecutive winding numbers, then the winding excess is the number
  of wedges containing any given ray minus one.
  (We can think of the special case
  of consecutive winding numbers being the same as the average of
  a full wedge and an empty wedge, and the case of a zero winding number as
  the average of wedges of all angles.)
  Now consider the sum of the winding excess over all vertices.
  A pair of oppositely oriented intersections between two zigzag paths
  forms two full wedges and therefore contributes two to the sum.  The sum of
  the contributions from
  unpaired intersections can be computed as follows.  Label the winding
  numbers $\mathbf{w}_0, \mathbf{w}_1, ..., \mathbf{w}_{n-1}$, ordered by
  counterclockwise angle from some ray $\mathcal{R}$.  (A zigzag path with
  zero winding number has no unpaired intersections, so it is not included.)
  Then for $i<j$ the
  unpaired wedges formed by $\mathbf{w}_i$ and $\mathbf{w}_j$ will contain
  $\mathcal{R}$ iff $\mathbf{w}_i \wedge \mathbf{w}_j < 0$.  There are
  $2 |\mathbf{w}_i \wedge \mathbf{w}_j|$ unpaired wedges
  ($|\mathbf{w}_i \wedge \mathbf{w}_j|$ unpaired crossings of the zigzag
  paths, and each appears in two vertices).
  So the number of unpaired wedges formed by
  $\mathbf{w}_i$ and $\mathbf{w}_j$ containing $\mathcal{R}$ equals
  $\max(- 2 \mathbf{w}_i \wedge \mathbf{w}_j, 0)=|\mathbf{w}_i \wedge \mathbf{w}_j|-\mathbf{w}_i \wedge \mathbf{w}_j$.
  Since $\sum_{i<j} |\mathbf{w}_i \wedge \mathbf{w}_j|$ is the number of
  unpaired edges, it follows that the number of wedges containing $\mathcal{R}$
  is the number of paired edges plus the number of unpaired edges minus
  $\sum_{i<j} \mathbf{w}_i \wedge \mathbf{w}_j$, or
  $E-\sum_{i<j} \mathbf{w}_i \wedge \mathbf{w}_j$, where $E$ is the total
  number of edges of the dimer.  The sum of the
  winding excesses is $E-V-\sum_{i<j} \mathbf{w}_i \wedge \mathbf{w}_j=
  F-\sum_{i<j} \mathbf{w}_i \wedge \mathbf{w}_j$,
  where $V$ and $F$ are the number of nodes and faces of the dimer,
  respectively.  We have
  $\sum_{i<j} \mathbf{w}_i \wedge \mathbf{w}_j = \sum_{i} \mathbf{w}_i \wedge
  \sum_{j>i} \mathbf{w}_j$.  If we lay the winding vectors tip-to-tail, then
  $\mathbf{w}_i \wedge \sum_{j>i} \mathbf{w}_j$ is twice the area of the triangle
  formed by the tail of $\mathbf{w}_0$ and the tip and tail of $\mathbf{w}_i$.
  Hence $\sum_{i} \mathbf{w}_i \wedge \sum_{j>i} \mathbf{w}_j$ is twice the
  area of the convex polygon formed by all the winding vectors.  If we rotate
  the polygon 90 degrees then we get a polygon whose $(p,q)$-legs are the
  winding numbers.  So the
  sum of the winding deficiencies of the nodes is zero iff $F$ equals twice
  the area of the lattice polygon whose $(p,q)$-legs are the zigzag path winding
  numbers.
\end{proof}

\subsection{Some perfect matchings of properly ordered dimers}

\begin{FIGURE} {
    \label{fig:cornermatching}
    \centerline{
      \epsfig{figure=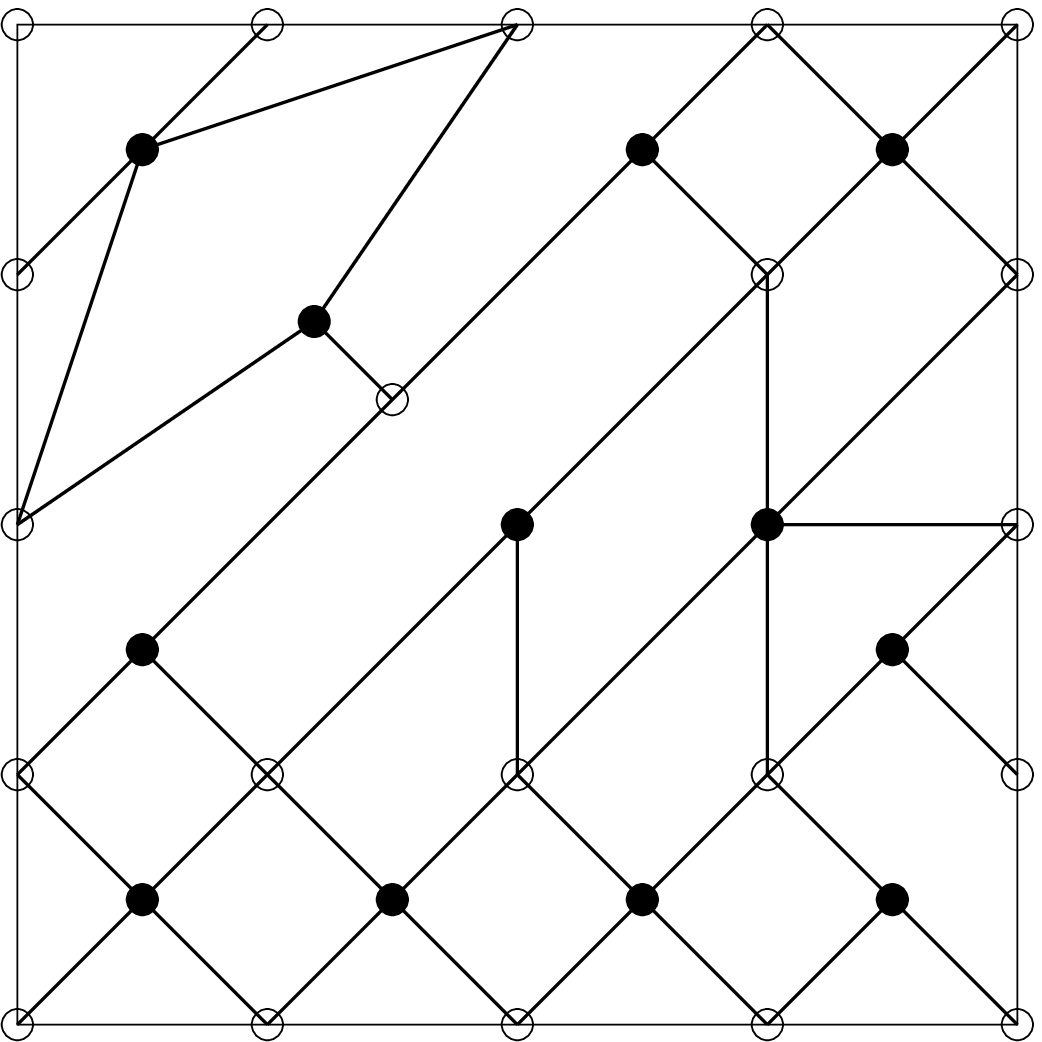,width=1.25in}
      \quad
      \epsfig{figure=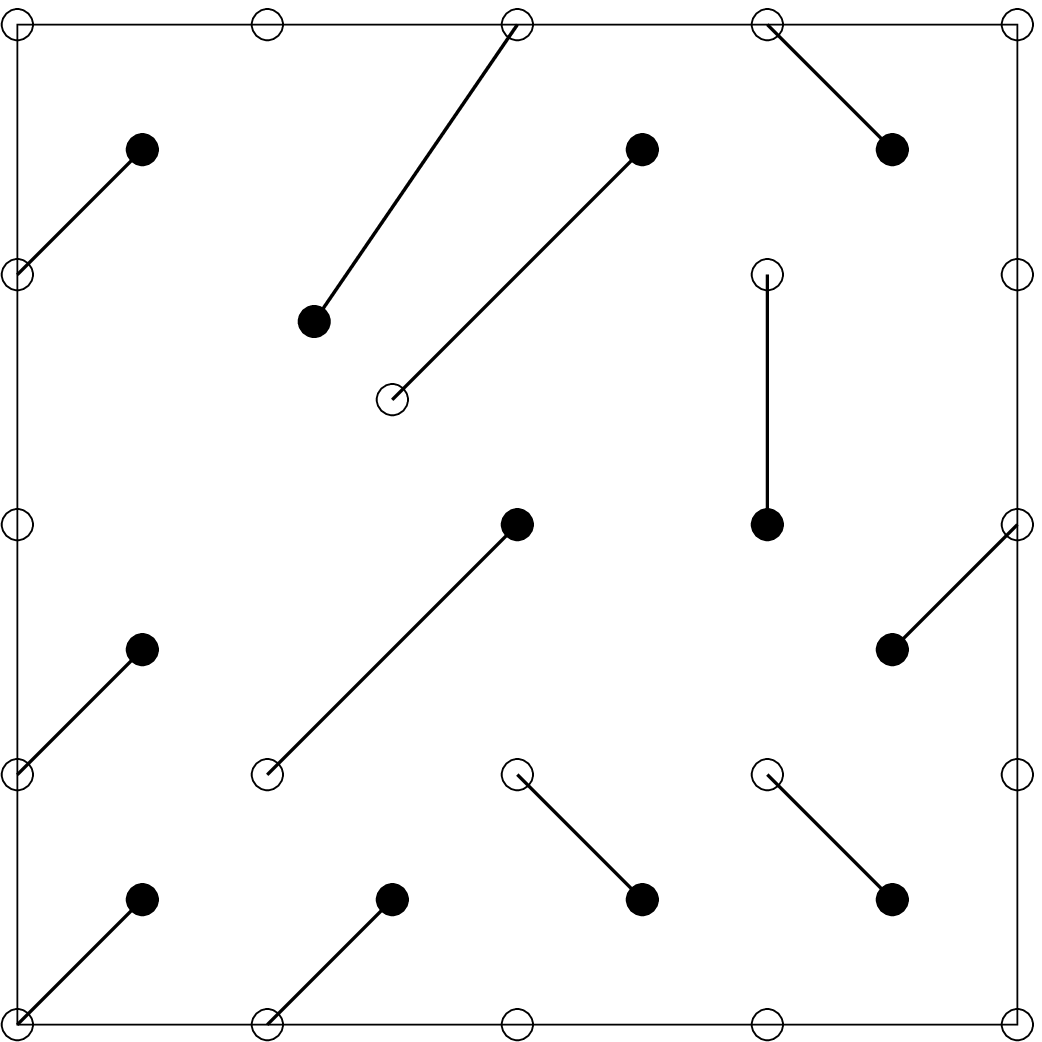, width=1.25in}
    }
    \centerline{}
    \centerline{
      \epsfig{figure=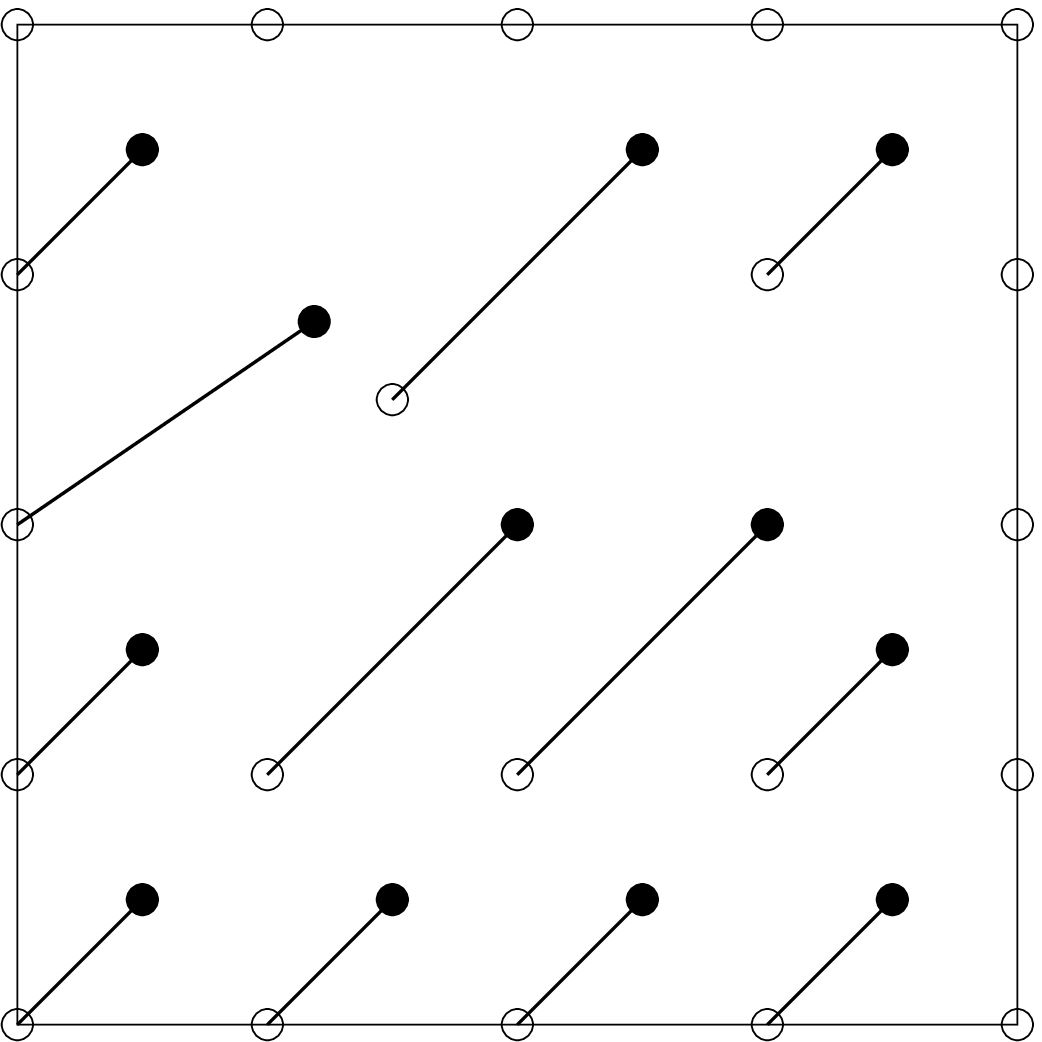,width=1.25in}
      \quad
      \epsfig{figure=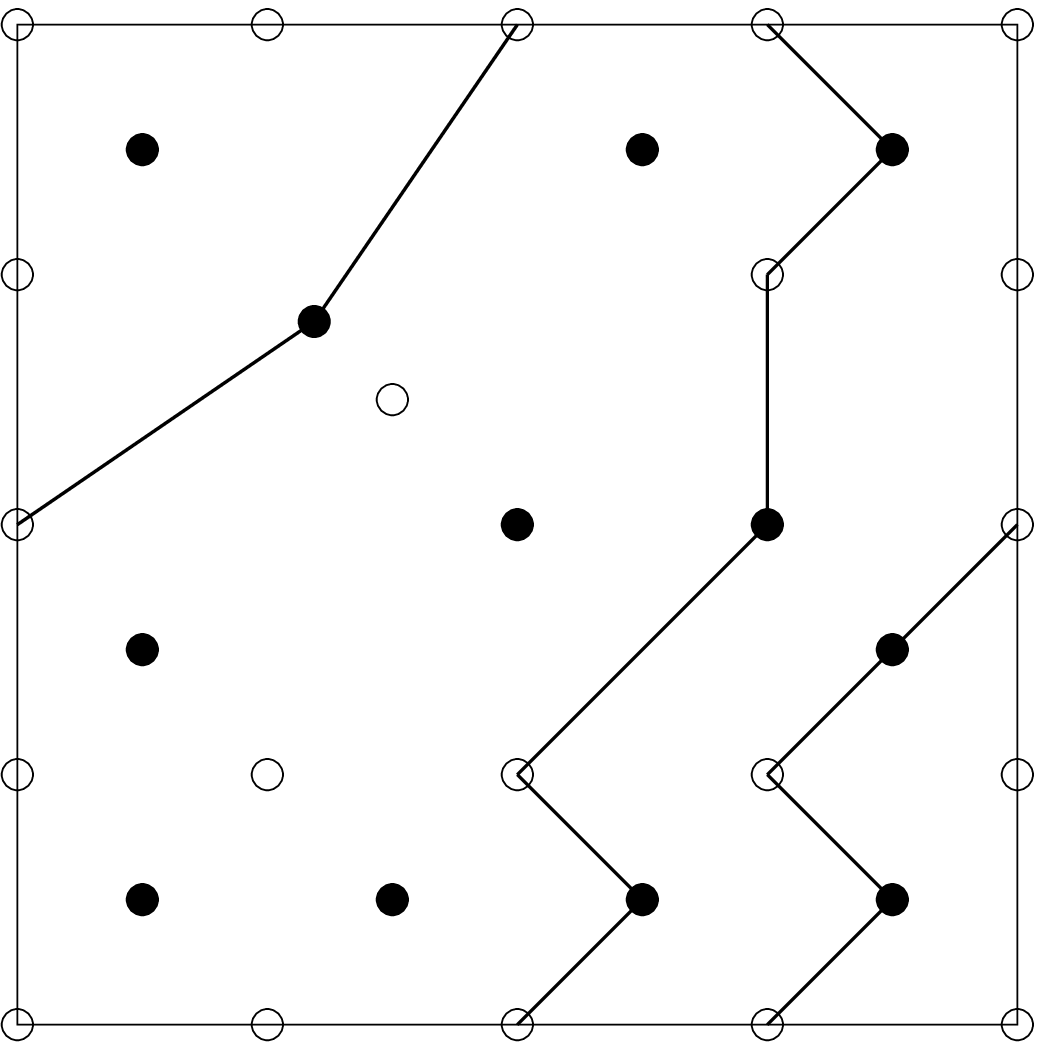, width=1.25in}
    }
    \caption{A dimer, two of its corner perfect
      matchings, and their difference, which is a zigzag path.}
}
\end{FIGURE}

\begin{FIGURE}
{\label{fig:rays}
\centerline {
  \epsfig{figure=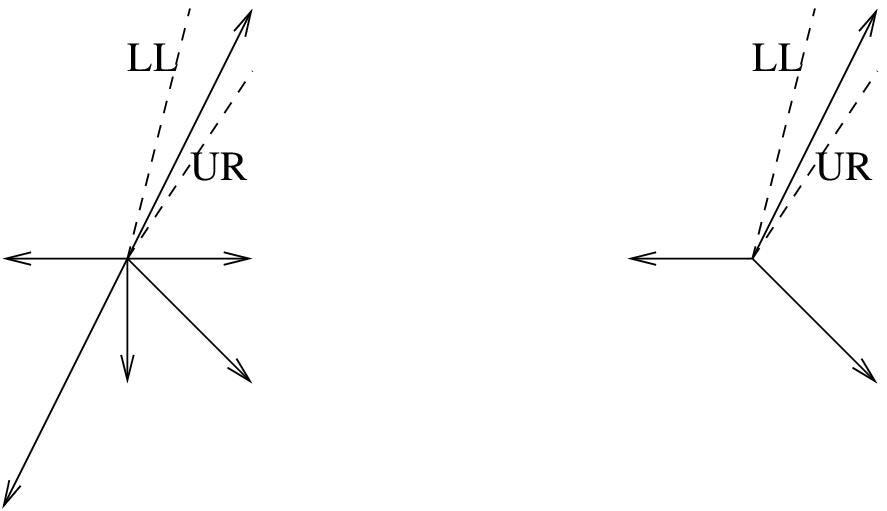, height=1.25in}
}
\caption{Left: The windings of the zigzag paths of the dimer in figure
\ref{fig:cornermatching}.  The dotted lines labeled UR and LL are rays that
yield perfect matchings shown in the upper right and lower left quadrants of 
figure \ref{fig:cornermatching}, respectively.
Right: The windings of the paths passing through the bottom right
black node.
For any node and any edge
ending at that node, the proper ordering criterion implies that the two zigzag
paths to which the edge belongs have adjacent winding directions.  Therefore
in the right diagram, there is a natural correspondence between
edges passing through the node and wedges formed by consecutive arrows.
When constructing a perfect matching $M(\mathcal{R})$,
we choose the wedge containing $\mathcal{R}$.  In the left diagram, there
is a one-to-one correspondence between wedges and corners of the toric
diagram.}
} \end{FIGURE}

We will construct some perfect matchings that will turn out to
correspond to the corners of the toric diagram.
Our construction of the perfect matchings
is similar to Theorem 7.2 of \cite{stienstra-2007}.
Let $\mathcal{R}$ be any ray whose direction does not coincide with
that of the winding number of any zigzag path.  For any node $v$,
consider the zigzag paths passing through $v$ whose winding numbers make
the smallest clockwise and smallest counterclockwise angles with $\mathcal{R}$.
(These paths are unique because the proper ordering condition requires that
all paths through $v$ have different winding numbers.)
By proper ordering, these two zigzag paths
must be consecutive around $v$.  Therefore they share an edge that has $v$ as an
endpoint.  Call this edge $e(v)$.  Let $v'$ be the other endpoint of $e(v)$.
The same two zigzag paths must be consecutive about $v'$ since they
form the edge $e$.  Since $v'$ is properly ordered it must then be the case
that those two paths make the smallest clockwise and
counterclockwise angles with
$\mathcal{R}$ among all paths passing through $v'$.  Hence $e(v)=e(v')$.
So the pairing of $v$ with $v'$ is a perfect matching.  We will call
this matching $M(\mathcal{R})$.  Figure \ref{fig:rays} depicts the
relationship between rays and perfect
matchings.

The following characterization of the boundary perfect matchings
containing a given edge follows immediately from our definition and will
be useful later.
\begin{lem} \label{lem:edgeinpm}
For any edge $e$ of the dimer, let $Z_r$ and $Z_s$ be the zigzag paths
such that $e$ is a positively oriented intersection of $Z_r$ with $Z_s$.
(Equivalently, $e$ is a negatively oriented intersection of $Z_s$ with $Z_r$.)
Let $\mathbf{w}_r$ and $\mathbf{w}_s$ be the windings of $Z_r$ and $Z_s$,
respectively.  Let $\mathcal{R}$ be a ray.  Then $e$ is in $M(\mathcal{R})$
iff $\mathcal{R}$ is in the wedge that goes counterclockwise from
$\mathbf{w}_r$ to $\mathbf{w}_s$.

In particular each edge is in at least one corner perfect matching.
\end{lem}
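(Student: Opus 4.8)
The plan is to unwind the definition of $M(\mathcal{R})$ given just above the lemma and reduce the claim to a purely local statement at one endpoint of $e$, the only genuine work being to reconcile the three orientation conventions in play: positive crossings, the counterclockwise order of windings around a node, and the counterclockwise orientation of the zigzag path diagram around a white node.

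First I would recall the construction: an edge lies in $M(\mathcal{R})$ precisely when, at an endpoint $v$, the two zigzag paths through $e$ are the ones whose windings make the smallest clockwise and smallest counterclockwise angles with $\mathcal{R}$ among all paths through $v$. By proper ordering the windings of the paths through $v$ occur in the same circular order as the paths themselves, so the edges incident to $v$ are in bijection with the wedges cut out by consecutive windings, and $e$ corresponds to the wedge between $\mathbf{w}_r$ and $\mathbf{w}_s$. Hence $e \in M(\mathcal{R})$ iff $\mathcal{R}$ lies in that wedge, and everything comes down to showing it is specifically the wedge swept counterclockwise from $\mathbf{w}_r$ to $\mathbf{w}_s$, rather than its complement.

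To fix the orientation it suffices to work at a white endpoint $v$ of $e$, since the construction already guarantees that the two endpoints agree on $e(v)$. I would list the dimer edges $e_0,\dots,e_{m-1}$ around $v$ in counterclockwise order, with $e=e_i$. The two zigzag paths through $e_i$ are then exactly the path $Z^{(i-1)}$ carrying the transition $e_{i-1}\!\leftrightarrow\! e_i$ and the path $Z^{(i)}$ carrying $e_i\!\leftrightarrow\! e_{i+1}$; in the zigzag path diagram both of these edges bound the face of $v$ and, $v$ being white, are oriented counterclockwise around it. Reading off the path directions at the crossing $P_{e_i}$, the outgoing tangent of $Z^{(i)}$ points toward $P_{e_{i+1}}$ while $Z^{(i-1)}$ passes through from $P_{e_{i-1}}$ into the opposite (black-node) region; comparing the two tangents shows the tangent of $Z^{(i)}$ is counterclockwise from that of $Z^{(i-1)}$. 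By the definition of a positive crossing this is exactly the statement that $e$ is a positive intersection of $Z^{(i-1)}$ with $Z^{(i)}$, so $Z_r=Z^{(i-1)}$ and $Z_s=Z^{(i)}$. Since $\mathbf{w}_r=\mathbf{w}_{Z^{(i-1)}}$ and $\mathbf{w}_s=\mathbf{w}_{Z^{(i)}}$ are consecutive in counterclockwise order, the wedge associated to $e_i$ is precisely the one swept counterclockwise from $\mathbf{w}_r$ to $\mathbf{w}_s$, which is the claim.

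I expect the tangent-direction bookkeeping in the previous step to be the main obstacle: one must be scrupulous that the counterclockwise orientation of the white-node face really forces $Z^{(i)}$ to be outgoing toward $P_{e_{i+1}}$ and $Z^{(i-1)}$ to be incoming from $P_{e_{i-1}}$, since a single sign slip flips the wedge and reverses the conclusion. For the final assertion I would observe that proper ordering forbids coincident or zero windings, so the counterclockwise wedge from $\mathbf{w}_r$ to $\mathbf{w}_s$ has strictly positive angular width; any generic ray $\mathcal{R}$ in its interior avoids every winding direction and yields a corner matching $M(\mathcal{R})$ containing $e$, so every edge lies in at least one corner perfect matching.
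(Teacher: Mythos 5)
Your proposal is correct and follows the same route the paper intends: the paper offers no written proof, asserting only that the lemma ``follows immediately from our definition'' of $M(\mathcal{R})$, and your argument is exactly that unwinding, carried out in full. The one piece of genuine content --- checking that the white-node orientation convention makes $e$ a \emph{positive} crossing of $Z^{(i-1)}$ with $Z^{(i)}$, so that the empty wedge between consecutive windings is the one running counterclockwise from $\mathbf{w}_r$ to $\mathbf{w}_s$ rather than its complement --- is handled correctly, and your closing observation that proper ordering gives the wedge positive angular width (so a generic ray in it exists) correctly yields the ``each edge lies in some corner matching'' corollary.
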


\subsection{\label{sec:zzp}Zigzag paths and $(p,q)$-legs}

As we mentioned in Section \ref{sec:defs}, it is known
\cite{Hanany:2005ss,Feng:2005gw} that dimers that produce a consistent field
theory have the property that the $(p,q)$-legs of the toric diagram are in
one-to-one correspondence with the winding numbers of the zigzag paths.

\begin{thm} \label{thm:zzp}
  In a dimer with properly ordered nodes, the zigzag paths are in
  one-to-one correspondence with the $(p,q)$-legs of the toric diagram.
\end{thm}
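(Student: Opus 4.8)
The plan is to identify the toric diagram---the Newton polygon of the Kasteleyn determinant---with the polygon built from the zigzag windings in the theorem at the start of this section, after which the correspondence with $(p,q)$-legs is immediate. The bridge is the standard fact that the nonzero terms of the Kasteleyn determinant are indexed by the perfect matchings, so each matching $M$ contributes a lattice point $\mathbf{x}(M)=\sum_{e\in M}(a_e,b_e)$ and the toric diagram is the convex hull of these points. The difference $\mathbf{x}(M)-\mathbf{x}(M')$ is the total signed intersection of the cycles of $M\triangle M'$ with $\gamma_w$ and $\gamma_z$; since the signed crossing number of a winding-$(1,0)$ curve with a winding-$(c,d)$ curve is $d$ and that of a winding-$(0,1)$ curve is $-c$, a cycle of winding $\mathbf{w}$ shifts the exponent by $\mathbf{w}$ rotated $90^\circ$ clockwise. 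Thus movements in the toric diagram are exactly rotated windings of alternating cycles.

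First I would track how the corner matching $M(\mathcal{R})$ varies as $\mathcal{R}$ rotates counterclockwise. By Lemma~\ref{lem:edgeinpm}, the membership of an edge $e$ (a positive intersection of $Z_r$ with $Z_s$) flips only when $\mathcal{R}$ passes $\mathbf{w}_r$ or $\mathbf{w}_s$, so $M(\mathcal{R})$ is constant on each wedge between consecutive winding directions and changes only as $\mathcal{R}$ sweeps past some $\mathbf{w}_i$. At that instant every edge that is an intersection \emph{from} $Z_i$ enters the matching and every edge that is an intersection \emph{into} $Z_i$ leaves it; since these two kinds of edges alternate along $Z_i$ (the defining alternation of a zigzag path), the symmetric difference is exactly the cycle $Z_i$. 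Consequently $\mathbf{x}(M(\mathcal{R}))$ jumps by $\mathbf{w}_i$ rotated $90^\circ$ at each crossing, and over a full revolution the corner matchings trace a closed lattice polygon $Q$ whose edge vectors are the rotated zigzag windings, listed in angular order. A polygon whose edges occur in angular order is convex, so $Q$ is precisely the rotation of the tip-to-tail winding polygon from the earlier theorem, and every vertex of $Q$ is a genuine matching exponent, so $Q$ lies inside the toric diagram.

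The main obstacle is the reverse inclusion: showing no matching exponent lies outside $Q$, equivalently that each $M(\mathcal{R})$ is extremal in the toric diagram in the direction $-\mathcal{R}$ (minimal in the direction $\mathcal{R}$). I would prove this by a half-plane argument on alternating cycles: for any matching $M'$, the cycles of $M(\mathcal{R})\triangle M'$ should each have winding $\mathbf{v}$ with $\mathcal{R}\wedge\mathbf{v}\ge 0$, so that passing from $M(\mathcal{R})$ to $M'$ only moves the exponent into the half-plane bounded by the supporting line of $Q$. The content is that, node by node, $M(\mathcal{R})$ selects the unique edge whose two zigzag paths straddle $\mathcal{R}$, so replacing these edges cannot wind a cycle against $\mathcal{R}$; turning this local statement into the global sign condition on the cycle winding is the delicate step. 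As a cross-check one can instead compare areas: the earlier theorem gives $\area(Q)=F/2$, so with $Q$ inside the toric diagram and of equal area one would force equality, reducing the problem to computing the area of the Newton polygon.

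Finally, with the toric diagram equal to $Q$, its boundary decomposes into the rotated zigzag windings, and by the definition of $(p,q)$-legs each primitive boundary segment yields one leg perpendicular to it. To get a genuine bijection I would note that proper ordering forbids a zigzag path from self-intersecting (that would place two equal windings at a node), so each zigzag path is a simple closed curve on $T^2$ and hence has primitive winding; then a side of $Q$ of lattice length $k$ is the sum of $k$ primitive windings from $k$ distinct zigzag paths (pairwise non-intersecting, since equal-winding paths cannot cross) and carries exactly $k$ legs. Matching the $k$ paths to the $k$ legs along each side gives the desired one-to-one correspondence between zigzag paths and $(p,q)$-legs.
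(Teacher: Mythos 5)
Your setup is sound and matches the paper's strategy in outline: the corner matchings $M(\mathcal{R})$ trace a convex lattice polygon $Q$ whose edge vectors are the rotated zigzag windings in angular order, each vertex of $Q$ is a matching exponent so $Q$ sits inside the Newton polygon, and the final bookkeeping (primitivity of windings from non-self-intersection, $k$ paths per side of lattice length $k$) is exactly the paper's concluding step. But the proof has a genuine gap precisely where you flag one: you never establish that the Newton polygon is contained in $Q$, i.e.\ that each $M(\mathcal{R})$ really is extremal. The cycle-by-cycle half-plane argument you sketch is not carried out (you call it ``the delicate step'' and stop), and your fallback via areas is circular: you would need to know independently that the Newton polygon has area $F/2$, which is not available --- in this paper that equality is a \emph{consequence} of identifying the toric diagram with the winding polygon, not an input.

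The paper closes this gap with a short argument you should adopt (Lemmas \ref{lem:coordfunc} and \ref{lem:bdyiszover2}). Fix a zigzag path $Z$ and choose the homology basis curve $\gamma_z$ to run alongside $Z$, crossing exactly the edges of $Z$; then one coordinate of the Newton polygon literally equals $|M\cap Z|$ (up to the affine change of basis). Since $Z$ visits $|Z|$ nodes and a perfect matching uses at most one edge per node, every matching satisfies $|M\cap Z|\le |Z|/2$, so the entire Newton polygon lies in a half-plane. Proper ordering forces $M(\mathcal{R}_1)$ and $M(\mathcal{R}_2)$ (for the two rays adjacent to $Z$'s winding direction) to select a $Z$-edge at every node of $Z$, hence both saturate the bound and lie on the boundary line. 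Running over all zigzag paths, each $M(\mathcal{R})$ lies on two distinct supporting lines and is therefore a corner, and consecutive rays share a supporting line, so the Newton polygon is exactly the convex hull of the $M(\mathcal{R})$'s, i.e.\ your $Q$. This replaces the unproven global sign condition on cycles of $M(\mathcal{R})\triangle M'$ with a trivial per-node counting bound, which is the one idea your proposal is missing.
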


Our proof of Theorem \ref{thm:zzp}
resembles that of Theorem 9.3 of \cite{stienstra-2007}.

\begin{FIGURE} {\label{fig:gammaz}
\centerline{
  \epsfig{figure=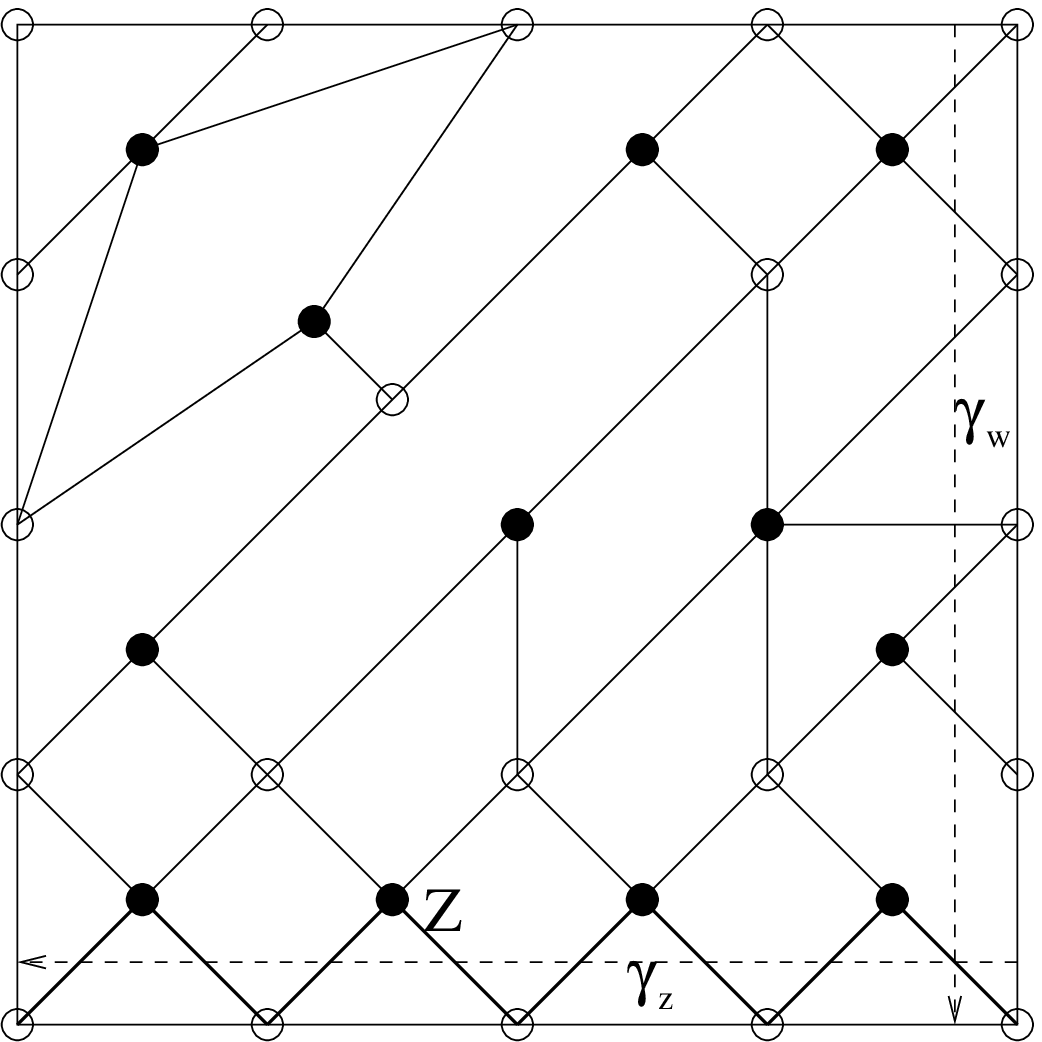,width=2.5in}
}
\caption{The path $\gamma_z$ intersects each edge of the
zigzag path $Z$ and no other edges.  We may choose any path $\gamma_w$ that
completes the basis.}
} \end{FIGURE}

\begin{lem} \label{lem:coordfunc}
For any zigzag path $Z$ in any dimer, the number of intersections of a perfect
matching with $Z$ is a degree one polynomial function of its coordinates.
\end{lem}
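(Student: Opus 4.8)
The plan is to read off $|M\cap Z|$ as one of the two toric coordinates by choosing the homology basis adapted to $Z$. Following Figure \ref{fig:gammaz}, I would take $\gamma_z$ to be the curve traced out by the zigzag path $Z$ itself (oriented as a zigzag path) and complete it to a basis by any $\gamma_w$. By the construction of the zigzag path diagram, $Z$ crosses a dimer edge $e$ transversally exactly when $e$ is one of the edges of $Z$, and it crosses no other edge. Hence in the weight $c\,w^a z^b$ of an edge $e$, the exponent $b_e$ (the signed crossing number of $\gamma_z$ with $e$ under the ``white endpoint on the left'' convention) vanishes for $e\notin Z$ and equals $\pm 1$ for $e\in Z$. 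The exponent of $z$ in the monomial that $M$ contributes to the determinant of the Kasteleyn matrix is $\sum_{e\in M} b_e=\sum_{e\in M\cap Z}(\pm 1)$, so it already has the shape of a signed count of the edges of $M$ lying on $Z$.

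The hard part will be showing that all of these signs agree, i.e.\ that $b_e=+1$ for every edge $e$ of $Z$, so that the signed count is the honest count $|M\cap Z|$. I would prove this locally at the point where $\gamma_z=Z$ crosses an edge. Writing the nodes of $Z$ in order as $\dots,v_{i-1},v_i,\dots$ with $e_i=\{v_{i-1},v_i\}$, bipartiteness forces the colors of the $v_i$ to alternate, and the defining alternation of a zigzag path makes $Z$ turn counterclockwise around each white node and clockwise around each black node. A short case analysis on the colors of the endpoints of $e_i$, using these turning directions, shows that in either case the white endpoint of $e_i$ lies to the left of $Z$'s direction of travel across $e_i$: when the white endpoint is the node $Z$ is about to circle counterclockwise, and when it is the node $Z$ has just circled clockwise, the two configurations are mirror images that put the white node on the same side. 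Thus every crossing contributes $+1$, giving $|M\cap Z|=\sum_{e\in M}b_e$.

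Once the sign is pinned down the conclusion is immediate: $\sum_{e\in M}b_e$ is by definition the $z$-exponent of $M$, one of the two coordinates of $M$ in the toric diagram, so $|M\cap Z|$ is a degree-one (indeed linear) function of the coordinates in this basis, and since any other basis differs by an affine transformation it is degree one in any coordinates. If one prefers not to assume $[Z]$ is primitive (so that $\{\gamma_w,Z\}$ need not be an honest basis), I would phrase the same computation homologically: fixing a reference matching $M_0$ and orienting every edge from white to black, the difference $M-M_0$ is a $1$-cycle, and $|M\cap Z|=Z\cdot M=Z\cdot M_0+[Z]\wedge[M-M_0]$, where $Z\cdot M:=\sum_{e\in M}b_e$ and $\wedge$ is the intersection pairing $(a,b)\wedge(c,d)=ad-bc$. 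Since $[M-M_0]$ is an affine function of the toric coordinates and $[Z]\wedge(\cdot)$ is linear, this is again degree one, with the sign analysis above guaranteeing that the topological count $Z\cdot M$ really is the unsigned $|M\cap Z|$. The only genuinely delicate point in the whole argument is this sign consistency; everything else is the standard homological bookkeeping of Kasteleyn coordinates.
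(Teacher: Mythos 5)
Your proof is correct and follows essentially the same route as the paper: choose $\gamma_z$ to follow $Z$ so that the corresponding coordinate of a matching $M$ counts $|M\cap Z|$, then invoke the affine change of coordinates for any other basis. The paper's version is only a few lines and leaves implicit the two points you elaborate (that all crossings carry the same sign, and what to do when $[Z]$ is not primitive), so your extra care is welcome but not a different argument.
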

\begin{proof}
In computing the Kasteleyn matrix we can choose the path $\gamma_z$ to follow
$Z$, so that the number of times $\gamma_z$ intersects a perfect matching $M$ is
just the number of edges that $M$ and $Z$ have in common.
(See figure \ref{fig:gammaz}.)
For this choice of $\gamma_z$, the point corresponding to $M$
has $y$-coordinate equal to $|M \cap Z|$.
For a different choice of $\gamma_z$, the coordinates differ by an affine
transformation.
\end{proof}

\begin{lem} \label{lem:bdyiszover2}
Let $Z$ be a zigzag path of a properly ordered dimer, and let
$\mathcal{R}_1$ and $\mathcal{R}_2$ be rays such that the winding direction of
$Z$ lies between them and all of the other winding directions do not.
Then there exists a boundary line of the toric diagram passing through
$M(\mathcal{R}_1)$ and $M(\mathcal{R}_2)$ such that all perfect
matchings on this line intersect
$Z$ exactly $\frac{|Z|}{2}$ times, and all perfect matchings not on the line
intersect $Z$ fewer than $\frac{|Z|}{2}$ times.
\end{lem}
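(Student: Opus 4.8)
The plan is to combine a matching-theoretic upper bound on $|M \cap Z|$ with Lemma~\ref{lem:coordfunc}. Lemma~\ref{lem:coordfunc} tells us that $M \mapsto |M \cap Z|$ is an affine function of the position of $M$, so once I know this function is bounded above by $\frac{|Z|}{2}$ and that the bound is attained, the set of matchings attaining it will automatically be a supporting line of the toric diagram; the remaining work is to identify $M(\mathcal{R}_1)$ and $M(\mathcal{R}_2)$ as two \emph{distinct} points on that line, which upgrades it from a vertex to a boundary edge and pins down its location.

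First I would establish the upper bound. Traversing $Z$ as a cyclic sequence of (distinct) edges $e_0, e_1, \dots, e_{|Z|-1}$, the nodes along $Z$ alternate in color, so $|Z|$ is even, and each consecutive pair $e_i, e_{i+1}$ shares a node. A perfect matching contains at most one edge incident to any node, hence at most one edge from each of the disjoint pairs $\{e_0,e_1\}, \{e_2,e_3\}, \dots$, giving $|M \cap Z| \le \frac{|Z|}{2}$ for every $M$. Choosing $\gamma_z$ to follow $Z$ as in Lemma~\ref{lem:coordfunc}, the coordinate $y(M)=|M\cap Z|$, so the whole toric diagram lies in $y \le \frac{|Z|}{2}$ and the line $y=\frac{|Z|}{2}$ is a supporting line.

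Next I would show the two matchings $M(\mathcal{R}_1), M(\mathcal{R}_2)$ attain the bound and are distinct. Since $\mathbf{w}_Z$ lies in the narrow sector between $\mathcal{R}_1$ and $\mathcal{R}_2$ and no other winding direction does, in particular no other path shares $Z$'s winding direction. Along $Z$ the orientations of the intersections alternate (as noted in Section~\ref{sec:defs}), so in the convention of Lemma~\ref{lem:edgeinpm} the path $Z$ plays the role of the ``first'' path $Z_r$ at every other edge and of the ``second'' path $Z_s$ at the intervening edges. Taking $\mathcal{R}_1$ just counterclockwise of $\mathbf{w}_Z$, the criterion ``$\mathcal{R}$ lies in the counterclockwise wedge from $\mathbf{w}_r$ to $\mathbf{w}_s$'' selects exactly the edges where $Z=Z_r$ and rejects those where $Z=Z_s$; these two families alternate, so $M(\mathcal{R}_1)\cap Z$ is one alternating half and $|M(\mathcal{R}_1)\cap Z|=\frac{|Z|}{2}$. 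Symmetrically $M(\mathcal{R}_2)\cap Z$ is the complementary half. For any edge $e\notin Z$, both of its crossing paths have winding directions outside the sector $(\mathcal{R}_1,\mathcal{R}_2)$, so the wedge criterion returns the same answer for $\mathcal{R}_1$ and $\mathcal{R}_2$; hence the two matchings agree off $Z$ and their symmetric difference is exactly the cycle $Z$. As $Z$ has nonzero winding $\mathbf{w}_Z$ and the pairing of windings with the basis $(\gamma_w,\gamma_z)$ is nondegenerate, the two matchings map to distinct lattice points, both with $y=\frac{|Z|}{2}$.

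These two distinct points therefore determine the boundary line $y=\frac{|Z|}{2}$; every matching on it has $|M\cap Z|=\frac{|Z|}{2}$ and every matching off it has $|M\cap Z|<\frac{|Z|}{2}$, which is the assertion. I expect the main obstacle to be the bookkeeping in the third paragraph—carefully verifying the alternation of the $Z_r/Z_s$ roles along $Z$ and that flanking rays select complementary halves while agreeing on all edges off $Z$—rather than the upper bound, which is immediate, or the affineness, which is supplied by Lemma~\ref{lem:coordfunc}.
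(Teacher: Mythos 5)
Your proof is correct and follows essentially the same route as the paper's: the upper bound $|M\cap Z|\le\frac{|Z|}{2}$ from consecutive edges of $Z$ sharing a node, affineness from Lemma~\ref{lem:coordfunc}, and the observation that $M(\mathcal{R}_1)$ and $M(\mathcal{R}_2)$ attain the bound (the paper gets this directly from the node-wise construction of $M(\mathcal{R})$, you get it via Lemma~\ref{lem:edgeinpm} plus the alternation of intersection orientations along $Z$ --- the same fact in two guises). One caveat on your extra distinctness paragraph, which the lemma does not actually require: the hypothesis excludes other winding \emph{directions} from the sector, not other \emph{paths} with the same winding as $Z$, so the symmetric difference of the two matchings is the union of all zigzag paths with winding $\mathbf{w}_Z$ (as used later in the proof of Theorem~\ref{thm:zzp}), not necessarily $Z$ alone; the two matchings are still distinct, so nothing breaks.
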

\begin{proof}
Since the winding number of $Z$ is adjacent to $\mathcal{R}_1$,
$M(\mathcal{R}_1)$ must choose one of the two $Z$-edges of each node
that has them.  Hence $|M(\mathcal{R}_1) \cap Z|=\frac{|Z|}{2}$ and similarly
$|M(\mathcal{R}_2) \cap Z|=\frac{|Z|}{2}$.
No perfect matching can contain more than half of the edges
of the path.  Therefore the toric diagram lies in the half plane
that, in the coordinate system of Lemma \ref{lem:coordfunc}, is given
by the equation $y \le \frac{|Z|}{2}$. 
$M(\mathcal{R}_1)$ and $M(\mathcal{R}_2)$ are both on the boundary.
\end{proof}

\begin{prop} \label{prop:corner}
  The matchings $M(\mathcal{R})$ lie on the corners of the toric diagram.  The
  order of the corners around the boundary is the same as the order of the
  ray directions.
\end{prop}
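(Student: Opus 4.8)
The plan is to obtain a clean variational description of each $M(\mathcal{R})$: I will show that $M(\mathcal{R})$ is precisely the point of the toric diagram $T$ that maximizes the linear functional $P\mapsto\langle P,\mathcal{R}\rangle$, and then the proposition becomes the standard fact that the maximizing vertex of a convex polygon runs through its vertices in the rotational order of the direction. The first task is therefore to identify the outer normal of the supporting face each zigzag path cuts out. Taking $\gamma_z=Z$ as in Lemma \ref{lem:coordfunc}, the coordinate $y_Z(M)=|M\cap Z|$ is the signed number of crossings of $\gamma_z$ with $M$; writing the toric position $P(M)$ as the pair of signed crossing numbers against a fixed homology basis, a short homological computation using $(a,b)\wedge(c,d)=ad-bc$ shows that for any closed curve of winding $\mathbf{c}$ the signed-crossing functional is affine in $P(M)$ with linear part exactly $\mathbf{c}$. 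Hence, in a fixed coordinate system, $\nabla y_Z=\mathbf{w}_Z$, so the half-plane $\{y_Z\le\frac{|Z|}{2}\}$ of Lemma \ref{lem:bdyiszover2} has outer normal $\mathbf{w}_Z$, and the supporting line $L_Z=\{y_Z=\frac{|Z|}{2}\}$ meets $T$ in a face $F_Z$ with outer normal $\mathbf{w}_Z$. The point to extract is that distinct winding directions give distinct outer normals. (I deliberately derive this from Lemma \ref{lem:coordfunc} rather than from Theorem \ref{thm:zzp}, which I expect is proved downstream of this proposition.)

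Next I would fix a ray $\mathcal{R}$ lying strictly between two consecutive winding directions $\mathbf{u}$ (clockwise side) and $\mathbf{v}$ (counterclockwise side). Applying Lemma \ref{lem:bdyiszover2} to a zigzag path $Z_u$ of winding $\mathbf{u}$, using $\mathcal{R}$ as the counterclockwise bracketing ray, yields $M(\mathcal{R})\in L_{Z_u}$; symmetrically, using a path $Z_v$ of winding $\mathbf{v}$ with $\mathcal{R}$ as the clockwise bracketing ray gives $M(\mathcal{R})\in L_{Z_v}$. Since $\mathbf{u}\neq\mathbf{v}$, these two supporting lines have distinct normals, so their unique common point is a vertex of $T$ (a boundary point whose normal cone is two-dimensional). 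This already establishes that $M(\mathcal{R})$ lies on a corner. Moreover the normal cone at this vertex is the wedge spanned by $\mathbf{u}$ and $\mathbf{v}$; because $T$ is bounded its edge normals positively span the plane, so consecutive winding directions subtend an angle less than $\pi$, and the wedge containing $\mathcal{R}$ is exactly this normal cone. Therefore $\mathcal{R}$ lies in the normal cone of $M(\mathcal{R})$, i.e. $M(\mathcal{R})$ is the maximizer of $\langle P,\mathcal{R}\rangle$ over $P\in T$.

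The ordering statement then follows from this maximizer description together with elementary convex geometry: as $\mathcal{R}$ rotates counterclockwise, the maximizing vertex of $T$ advances monotonically around the boundary, jumping from one vertex to the next exactly when $\mathcal{R}$ sweeps past a winding direction, which is an edge normal. Concretely, the two consecutive wedges $(\mathbf{u},\mathbf{v})$ and $(\mathbf{v},\mathbf{v}')$ both produce matchings lying on $L_{Z_v}$, hence on the common face $F_{Z_v}$ with normal $\mathbf{v}$, so their vertices are the two endpoints of that edge. Thus adjacent ray-wedges give adjacent corners, and the cyclic order of the corners around $\partial T$ matches the cyclic order of the ray directions.

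I expect the main obstacle to be the first step: establishing that the outer normal of $F_Z$ is exactly $\mathbf{w}_Z$ (equivalently, that the direction of $L_Z$ is pinned down by the winding) without circularly appealing to Theorem \ref{thm:zzp}. Once this normal--winding identification is combined with the half-plane bound of Lemma \ref{lem:bdyiszover2}, both the vertex claim and the ordering are routine. A secondary technical point I would verify is that no winding direction supports $T$ at a single point rather than along a genuine edge, so that consecutive wedges produce \emph{distinct} corners; the maximizer characterization keeps the ordering claim robust even should this degeneracy arise, but ruling it out in a properly ordered dimer makes the correspondence with corners a clean bijection.
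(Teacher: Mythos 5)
Your proof is correct and follows essentially the same route as the paper: both rest on Lemma \ref{lem:bdyiszover2}, placing each $M(\mathcal{R})$ on the two supporting lines associated to the winding directions adjacent to $\mathcal{R}$ (hence at a corner) and using the shared supporting line of two consecutive wedges to conclude that their matchings sit on consecutive corners. Your explicit homological identification of the outer normal of $L_Z$ with $\mathbf{w}_Z$, and the resulting normal-cone/maximizer phrasing, is a worthwhile detail that the paper leaves implicit at this point (it surfaces only later, in the proof of Theorem \ref{thm:zzp}).
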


\begin{proof}
The intersection of all half planes described in the proof of
Lemma \ref{lem:bdyiszover2} is the convex hull
of all of the $M(\mathcal{R})$'s.  Conversely, each $M(\mathcal{R})$ is
in the toric diagram.
So the toric diagram must be
the convex hull of the $M(\mathcal{R})$'s.

Each $M(\mathcal{R})$ must be at a corner of the toric diagram since it is
contained in two different boundary lines (one for the first counterclockwise
zigzag path direction from $\mathcal{R}$ and another for the first clockwise
zigzag path direction).  Furthermore, if $\mathcal{R}_1$ and $\mathcal{R}_2$
have only one winding direction between them, then they share a
boundary line and hence $M(\mathcal{R}_1)$ and $M(\mathcal{R}_2)$ lie
on consecutive corners.
\end{proof}

\begin{proof}[Proof of Theorem \ref{thm:zzp}]

Let $\mathbf{w}$ be the winding number of a zigzag path, and let $n$ be the
number of zigzag paths with that winding.
Let $\mathcal{R}_1$ and $\mathcal{R}_2$ be rays such that $\mathbf{w}$ lies
between them and all other winding directions do not.  By Proposition
\ref{prop:corner}, $M(\mathcal{R}_1)$ and $M(\mathcal{R}_2)$ lie on
consecutive corners of the toric diagram. 
An edge belonging to one of the zigzag paths of winding $\mathbf{w}$ will
be in either $M(\mathcal{R}_1)$ or $M(\mathcal{R}_2)$ but not both, while
all other edges are in neither or both perfect matchings.
Therefore the difference of the two perfect matchings is just the union
of the zigzag paths with winding $\mathbf{w}$.
Therefore the toric diagram points corresponding to
$M(\mathcal{R}_1)$ and $M(\mathcal{R}_2)$ are separated by
$-n \mathbf{w}^{\perp}$, where $-\mathbf{w}^{\perp}$ is the 90 degree clockwise
rotation of $\mathbf{w}$.
This proves the theorem.
\end{proof}

\subsection{\label{sec:unique}Unique corner perfect matchings}

It is generally believed that dimers that have more than one perfect
matching at a corner of the toric diagram are inconsistent
\cite{Hanany:2005ve, Butti:2005vn, Hanany:2005ss}.  We show that properly
ordered dimers have unique corner perfect matchings.

\begin{thm}
If a dimer is properly ordered, then each corner of the toric
diagram has just one perfect matching.
\end{thm}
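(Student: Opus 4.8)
The plan is to show that the corner matching $M(\mathcal{R})$ is the \emph{only} perfect matching sitting at its lattice point. Fix the corner; by Proposition \ref{prop:corner} it is $M(\mathcal{R})$ for any ray $\mathcal{R}$ in a fixed empty wedge $W$ bounded clockwise by a winding direction $\mathbf{w}_a$ and counterclockwise by $\mathbf{w}_b$ (the two zigzag windings adjacent to $W$). Let $M$ be any perfect matching at the same corner. The central observation is a converse reading of Lemma \ref{lem:edgeinpm}: writing $I(e)$ for the wedge that runs counterclockwise from $\mathbf{w}_r$ to $\mathbf{w}_s$ attached to an edge $e$, we have $e\in M(\mathcal{R})$ iff $\mathcal{R}\in I(e)$. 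Consequently, if every edge of $M$ satisfied $\mathcal{R}\in I(e)$, then $M\subseteq M(\mathcal{R})$, and since both are perfect matchings (hence have the same number of edges) we would get $M=M(\mathcal{R})$. So the whole problem reduces to showing that every edge of an arbitrary corner matching $M$ passes this wedge test for $\mathcal{R}$.

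To get a foothold I would first pin down $M$ on the two adjacent windings. By Lemma \ref{lem:coordfunc} each $|M\cap Z|$ is an affine function of the toric coordinates, so $M$ and $M(\mathcal{R})$, lying at the same lattice point, satisfy $|M\cap Z|=|M(\mathcal{R})\cap Z|$ for \emph{every} zigzag path $Z$. For a path $Z$ of winding $\mathbf{w}_a$ or $\mathbf{w}_b$ this common value is $\frac{|Z|}{2}$ (Lemma \ref{lem:bdyiszover2}), the maximum allowed; since a matching can contain no two consecutive edges of $Z$, it must then take every other edge, i.e.\ all the positively oriented crossings or all the negatively oriented crossings of $Z$. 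A short angular computation with Lemma \ref{lem:edgeinpm} (using that $W$ contains no winding direction) shows which class $M(\mathcal{R})$ takes: all positive crossings on each $\mathbf{w}_a$-path and all negative crossings on each $\mathbf{w}_b$-path. The remaining content of the wedge test is therefore that $M$ makes the \emph{same} choice as $M(\mathcal{R})$ on these adjacent paths and, more seriously, that $M$ and $M(\mathcal{R})$ cannot diverge on the other windings either.

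To finish I would study the symmetric difference $D=M\ominus M(\mathcal{R})$, a disjoint union of cycles alternating between the two matchings. Because $M$ and $M(\mathcal{R})$ represent the same lattice point, $D$ is null-homologous on $T^2$, so it is the level boundary of a single-valued integer height function $h$ on the faces of the dimer. Assuming $D\neq\emptyset$, I would pass to a maximal region $R$ of $h$; its boundary lies entirely in $D$ and is a consistently oriented cycle, and I would read off the winding directions of the zigzag paths meeting $\partial R$. The goal is to show that the consistent orientation of $\partial R$ forces all of these winding directions to lie strictly to one side of $\mathcal{R}$, which is incompatible with $\partial R$ closing up while $W$ is an empty wedge; this contradiction gives $D=\emptyset$ and hence $M=M(\mathcal{R})$. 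I expect this last extremal step---translating ``maximal height region'' into a constraint on zigzag windings and colliding it with proper ordering---to be the main obstacle; the reduction to the wedge test and the adjacent-winding bookkeeping are comparatively routine once Lemmas \ref{lem:coordfunc} and \ref{lem:edgeinpm} are in hand.
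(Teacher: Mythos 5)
Your reduction to the ``wedge test'' and the bookkeeping on the two adjacent windings are fine (the observation that $|M\cap Z|=|M(\mathcal{R})\cap Z|$ for every zigzag path $Z$, via Lemma \ref{lem:coordfunc}, is indeed the right lever, and your identification of which orientation class $M(\mathcal{R})$ selects on the $\mathbf{w}_a$- and $\mathbf{w}_b$-paths is correct). But the proof has a genuine gap exactly where you flag it: the final contradiction is stated only as a ``goal.'' You never establish why the boundary of a maximal region of the height function would force all the zigzag windings it meets to lie on one side of $\mathcal{R}$, nor why that would be incompatible with the cycle closing up --- a null-homologous cycle on $T^2$ places no a priori constraint on the winding directions of the zigzag paths it happens to traverse. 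The homological information in the symmetric difference $D$ is too coarse; the constraint that actually does the work is the path-by-path count $|M\cap Z|$, and your height-function framing does not exploit it beyond the two adjacent windings.

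For comparison, the paper closes the argument by an angular extremality argument rather than a topological one. Among all zigzag paths carrying an edge of $D$, it takes $Z$ with winding closest to $\mathcal{R}$ in the counterclockwise direction. Proper ordering forces $M(\mathcal{R})$ to select a $Z$-edge at every node where the two matchings differ; combined with $|M'\cap Z|=|M(\mathcal{R})\cap Z|$ this forces $M'$ to select the \emph{other} $Z$-edge at each such node, so the alternating cycle of $D$ through such an edge lies entirely in $Z$ and (by non-self-intersection) is all of $Z$. Hence both matchings contain half of $Z$, which pins $Z$'s winding to be $\mathbf{w}_a$ itself; the symmetric argument produces $Z'$ with winding $\mathbf{w}_b$, disjoint from $M'\cap Z$ considerations. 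The contradiction then comes from a single node carrying exactly one $Z$--$Z'$ intersection (which exists because consecutive sides of the toric diagram have nonzero crossing number): $M'$ must take an edge of $Z$ and an edge of $Z'$ there, forcing it to take the unique common edge, which it cannot. If you want to salvage your approach, this is the extremal principle you would need to import --- ordering the paths in $D$ by angle from $\mathcal{R}$ --- and at that point the height function becomes superfluous.
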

\begin{proof}
Suppose there exists a perfect matching $M'$ that shares a toric diagram point
with $M(\mathcal{R})$ but is not equal to $M(\mathcal{R})$.
Consider the set of zigzag paths that contain an edge that is in 
$M(\mathcal{R})$ or $M'$ but not both.  Let $Z$ be one with
minimal counterclockwise angle from $\mathcal{R}$.
Let $v$ be a node of the dimer through which $Z$ passes.
If $v$ includes a zigzag path with
winding between $\mathcal{R}$ and that of $Z$, then $M(\mathcal{R})$
and $M'$ are the same at that vertex.  If not, then
$M(\mathcal{R})$ chooses one of the edges of $Z$ at $v$.
Recall that Lemma \ref{lem:coordfunc} says that the number of
intersections with $Z$ depends only on the toric diagram point.
Therefore $M'$ has the same number of edges in $Z$ as $M(\mathcal{R})$.
Since $M(\mathcal{R})$ chooses an edge of $Z$ at every node where
$M(\mathcal{R})$ and $M'$ differ, equality can hold only if
$M'$ chooses the other edge of $Z$ at every such node.
If we start at an edge of $Z$ that is in $M(\mathcal{R})$ but not $M'$
and alternately follow edges of the $M(\mathcal{R})$ and $M'$, then we will
traverse a cycle that lies entirely in $Z$.
Since zigzag paths in properly ordered dimers do not intersect themselves,
the cycle must be $Z$.
Then both $M(\mathcal{R})$ and $M'$ contain half the edges
of $Z$.  So the winding number of $Z$ is either the closest or
farthest from $\mathcal{R}$ in the counterclockwise direction.
If $Z$ were the farthest, then $M(\mathcal{R})$ and $M'$ would have to be the
same because every edge of the dimer would be in at least one zigzag path whose
winding is closer to $\mathcal{R}$ in the counterclockwise direction than
$Z$'s.  So $Z$ must be the closest in the counterclockwise direction.

Now let $Z'$ be a zigzag path with minimal clockwise
angle from $\mathcal{R}$ on which $M(\mathcal{R})$ and $M'$ differ.
By the same reasoning as above, we find the winding direction of $Z'$ is
the closest to $\mathcal{R}$ in the clockwise direction and 
that $M(\mathcal{R})$ and $M'$ have no edges of $Z'$ in common.
Since $Z$ and $Z'$ represent consecutive sides of the toric diagram,
the crossing number of $Z$ and $Z'$ must be nonzero.
A node can have two edges belonging to both $Z$ and $Z'$ only if they have
opposite orientations, i.~e.\ they contribute zero to the signed crossing
number.
Therefore there must be a node with only one $Z$-$Z'$
intersection.  $M'$ must include this edge because it includes an edge of $Z$
and $Z'$ at every node that has one, but it cannot include this edge because
it does not share any edges of $Z$ with $M(\mathcal{R})$.
Therefore our assumption that there existed a matching $M'$ differing from
$M(\mathcal{R})$ but sharing the same toric diagram point must be false.
\end{proof}

Once we know that the corner matchings are unique, we can also classify
all of the boundary perfect matchings.
\begin{cor} \label{cor:bdymatchings}
Consider a point $A$ on the boundary of the toric diagram such that the
nearest corner $B$ in the counterclockwise direction is $p$ segments away
and the nearest corner $C$ in the clockwise direction is $q$ segments away.
Then each perfect matching at $A$ may be obtained from the perfect matching
associated to $B$ by flipping $p$ zigzag paths and from the perfect matching
associated to $C$ by flipping $q$ zigzag paths.  The number of perfect
matchings at $A$ is $\binom{p+q}{q}$.
\end{cor}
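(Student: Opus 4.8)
The plan is to realize every perfect matching at $A$ as a flip of the corner matching $B$ along a subset of the zigzag paths that sit over the side of the toric diagram containing $A$, and then to count the admissible subsets. Let $\mathbf{w}$ be the winding direction associated, via Theorem \ref{thm:zzp}, to the side of the toric diagram on which $A$ lies. That side runs between the corners $B$ and $C$, which by Proposition \ref{prop:corner} are the corner matchings $M(\mathcal{R}_1)$ and $M(\mathcal{R}_2)$ for rays $\mathcal{R}_1,\mathcal{R}_2$ that bracket $\mathbf{w}$, and it consists of $n$ unit segments where $n$ is the number of zigzag paths $Z_1,\dots,Z_n$ of winding $\mathbf{w}$ (the displacement $-n\,\mathbf{w}^{\perp}$ from the proof of Theorem \ref{thm:zzp}); thus $n=p+q$. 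I would then record the structural facts I need: (i) proper ordering forbids two paths of equal winding from meeting at a node, so the $Z_i$ are pairwise node-disjoint; and (ii) by Lemma \ref{lem:bdyiszover2} each $Z_i$ meets $B$ in exactly $|Z_i|/2$ edges, and since $\mathbf{w}$ is adjacent to $\mathcal{R}_1$ the matching $B$ selects a $Z_i$-edge at every node of $Z_i$, so $B\cap Z_i$ is one of the two alternating halves of the cycle $Z_i$ and is flippable. Because the $Z_i$ are disjoint, these flips are independent and commute, and flipping a subset $S$ yields a perfect matching $M_S$ with $M_S\bigtriangleup B=\bigcup_{Z_i\in S}Z_i$; hence the toric point of $M_S$ is displaced from $B$ by $-|S|\,\mathbf{w}^{\perp}$. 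Flipping all $n$ of them sends $B$ to $C$, so flipping $S$ from $B$ is the same as flipping $S^{c}$ from $C$.

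For the forward direction I would note that $M_S$ lands on $A$ precisely when $|S|=p$ (equivalently $|S^{c}|=q$), so each such $M_S$ is obtained from $B$ by $p$ flips and from $C$ by $q$ flips; distinct subsets give matchings differing on $S\bigtriangleup S'$, hence are distinct. This produces $\binom{n}{p}=\binom{p+q}{q}$ matchings at $A$ together with both flip-descriptions in the statement.

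The real content is completeness: every perfect matching $M'$ at $A$ is one of the $M_S$. I would first use that $A$ lies on the $\mathbf{w}$-side, so by Lemma \ref{lem:coordfunc} (choosing $\gamma_z$ along $Z_i$) together with Lemma \ref{lem:bdyiszover2} we have $|M'\cap Z_i|=|Z_i|/2$ for every $i$; since $M'\cap Z_i$ is then a perfect matching of the even cycle $Z_i$, it is one of its two alternating halves, so on each $Z_i$ the matching $M'$ agrees with $B$ or with its flip. Setting $S=\{Z_i: M'\text{ differs from }B\text{ on }Z_i\}$, the matchings $M'$ and $M_S$ agree on every $\mathbf{w}$-path, so the residual $N:=M'\bigtriangleup M_S$ uses no edge of any $\mathbf{w}$-path; moreover at any node of a $\mathbf{w}$-path both $M'$ and $M_S$ are pinned to the same $\mathbf{w}$-edge, so $N$ uses no such node either. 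The goal is to show $N=\emptyset$, whence $M'=M_S$ and, comparing toric points, $|S|=p$.

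The main obstacle is exactly this last step, ruling out a nonempty residual, and my plan is to adapt the argument of the preceding theorem on unique corner matchings. The feature that makes this possible is that the intersection number with each $\mathbf{w}$-path is constant ($=|Z_i|/2$) along the whole side, so the Lemma \ref{lem:coordfunc} equalities driving that argument remain available even though $M'$ and $B$ sit at different lattice points. I would take the zigzag path meeting $N$ whose winding makes the smallest counterclockwise angle with $\mathcal{R}_1$, and the one with the smallest clockwise angle; using that $\mathbf{w}$ is the globally nearest winding to $\mathcal{R}_1$ on this side and that $N$ avoids all $\mathbf{w}$-path nodes, I would show that at every node of $N$ the reference matching $M_S$ selects an edge of one of these two extreme paths. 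I would then invoke the same consecutive-sides crossing-number contradiction as in the corner case: two paths representing consecutive sides have nonzero signed crossing number, so some node carries a single $Z$--$Z'$ intersection edge that $N$ is forced both to contain and to exclude. I expect the crossing-number step to be the delicate part, since one must track the intersection orientations carefully and confirm that the two extreme residual paths genuinely behave like consecutive sides relative to $M_S$ rather than $B$.
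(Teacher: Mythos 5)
Your forward direction is sound and matches the paper: node-disjointness of the winding-$\mathbf{w}$ paths from proper ordering, the alternating-half/flip structure of $B$ on each $Z_i$, the one-segment displacement per flip, and $\binom{p+q}{p}$ distinct matchings from distinct subsets. The gap is in completeness. Your plan is to compare $M'$ directly with the candidate $M_S$ at the point $A$ and kill the residual $N=M'\bigtriangleup M_S$ by rerunning the corner-uniqueness argument. But that argument hinges on Lemma \ref{lem:coordfunc} giving $|M'\cap Z|=|M_S\cap Z|$ for the \emph{extremal} paths $Z$ of the residual, which are not of winding $\mathbf{w}$; that equality is only available once you know $M'$ and $M_S$ share a toric point, i.e.\ once you know $|S|=p$ --- which in your write-up is only extracted at the very end. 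So as sketched the plan is circular, and even after repairing that you would still have to redo the delicate crossing-number step for a reference matching that is not of the form $M(\mathcal{R})$ (so the ``smallest-angle wedge'' structure of $M(\mathcal{R})$ at every node is not available); you flag this yourself as the part you have not carried out.

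The paper avoids all of this by never comparing two matchings at $A$. Given an arbitrary $M'$ at $A$, it observes that flipping a single $\mathbf{w}$-path of $M'$ moves the toric point by one segment along the boundary; at most $p$ of the $p+q$ paths can move it counterclockwise and at most $q$ clockwise (otherwise flipping a disjoint subfamily simultaneously would produce a perfect matching whose point lies outside the toric diagram), so exactly $p$ move counterclockwise. Flipping those $p$ lands at the corner $B$, and the \emph{uniqueness of corner matchings} --- the theorem this corollary follows --- forces the result to equal $B$. Hence $M'$ is $B$ with those $p$ paths flipped back, and there is no residual to rule out. You already have every ingredient for this (disjointness, alternating halves, the displacement computation, and in fact your $|S|\le p$, $n-|S|\le q$ squeeze); the missing idea is to push $M'$ to the corner and let the already-proved uniqueness theorem absorb the hard part, rather than attempting to reprove a uniqueness statement at a non-corner boundary point.
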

\begin{proof}
For any boundary perfect matching $M$
there exists a winding $\mathbf{w}$ such that $M$ contains half the edges
of each zigzag path of winding $\mathbf{w}$.
For any zigzag path $Z$ of winding $\mathbf{w}$, we can delete the half
of the $Z$-edges that are in $M$ and add the other half.  This operation
moves the perfect matching one segment along the boundary of the toric diagram.
There can be at most $p$ zigzag paths for which the operation moves
the toric diagram point counterclockwise and at most $q$ zigzag paths for
which the operation moves the point clockwise.  But there are a total of
$p+q$ zigzag paths of winding $\mathbf{w}$, so there must be exactly $p$
of the former and $q$ of the latter.  Consequently we see that $M$ can
be obtained from a corner perfect matching by flipping $p$ zigzag paths
(or from a different corner perfect matching by flipping $q$ zigzag paths).
The number of ways of choosing the paths to flip is
$\binom{p+q}{p}$.
\end{proof}

\subsection{\label{sec:rcharge}$R$-charges and cubic anomalies}

The $R$-charges of the fields may be determined by \emph{$a$-maximization}
\cite{Intriligator:2003jj}.
First, we impose the constraint that the $R$-charge of each superpotential
term should be two.  We also impose the constraint that the beta function
of each gauge group should be zero.  These conditions can be expressed as
\begin{eqnarray}
\label{eq:rvert} \sum_{e \in v} R(e) & = & 2 \\
\label{eq:rface} \sum_{e \in f} \left[1-R(e)\right] & = & 2.
\end{eqnarray}

Among all $U(1)$ symmetries satisfying these constraints, the $R$-symmetry
is the one that locally maximizes the cubic 't Hooft anomaly
\begin{equation} \label{eq:thooft}
a = \frac{9N^2}{32} \left[F+\sum_e (R(e)-1)^3 \right].
\end{equation}

Butti and Zaffaroni \cite{Butti:2005vn} have proposed some techniques for
simplifying the computation of the $R$-charge.  For any perfect matching $M$
we can define a function $\delta_M$ that takes the value $2$ on all edges
in the perfect matching and zero on all other edges.  Any such $\delta_M$
automatically satisfies (\ref{eq:rvert}).
Butti and Zaffaroni noted that in some cases the perfect matchings on
the boundary of the toric diagram yield functions that also satisfy
(\ref{eq:rface}), and these functions span the set of solutions to 
(\ref{eq:rvert}) and (\ref{eq:rface}).  We will show that
their observation is true for properly ordered dimers.  
\begin{thm} \label{thm:bdysymmetry}
In a dimer with properly oriented nodes, the solutions to (\ref{eq:rvert})
and (\ref{eq:rface}) are precisely the linear combinations of $\delta_M$,
for boundary perfect matchings $M$.
\end{thm}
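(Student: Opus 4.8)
The plan is to prove the two inclusions separately, using as a common tool, for each zigzag path $Z$, the alternating edge-function $\alpha_Z$ defined by $\alpha_Z(e)=+1$ (resp. $-1$) when $e$ is a positively (resp. negatively) oriented intersection of $Z$ with the other zigzag path through $e$, and $\alpha_Z(e)=0$ when $e\notin Z$. Because intersection orientations alternate along each path and, by proper ordering, no path meets itself, $\alpha_Z$ is well defined, takes opposite values on the two $Z$-edges at any node, and satisfies $\sum_Z\alpha_Z=0$ (each edge lies on exactly two paths, which receive opposite signs there).

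First I would show that every $\alpha_Z$ solves the homogenized constraints $\sum_{e\in v}\phi(e)=0$ and $\sum_{e\in f}\phi(e)=0$. The node sum vanishes because the two $Z$-edges at $v$ carry opposite signs. For the face sum I would use that consecutive edges around a face are consecutive along a common zigzag path: labelling the boundary edges $e_0,\dots,e_{2k-1}$ and writing $W_i$ for the path carrying the consecutive pair $e_i,e_{i+1}$, the two paths through $e_i$ are $W_{i-1}$ and $W_i$, and proper ordering forces $W_{i-1}\neq W_i$. Hence the boundary edges lying on $Z$ split into the pairs $\{e_i,e_{i+1}\}$ with $W_i=Z$, and each such pair contributes $\alpha_Z(e_i)+\alpha_Z(e_{i+1})=0$.

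Next, for the inclusion $\supseteq$, note that every $\delta_M$ already satisfies (\ref{eq:rvert}), so it remains to show each boundary matching satisfies (\ref{eq:rface}); an affine combination with coefficients summing to one then lies in the solution space. Summing the left side of (\ref{eq:rface}) for $\delta_M$ over all faces gives $2E-2V=2F$ for any perfect matching (using $V-E+F=0$), the correct total, so (\ref{eq:rface}) holds at every face as soon as $M$ contains no complete matching of any (contractible) face-boundary cycle. For a corner matching $M(\mathcal{R})$ this holds, for otherwise toggling such a cycle would produce a distinct matching with the same toric-diagram point, contradicting the uniqueness of corner perfect matchings. By Corollary~\ref{cor:bdymatchings} every boundary matching is reached from a corner matching by flipping zigzag paths one at a time, and each flip changes $\delta_M$ by $\pm2\alpha_Z$, which preserves (\ref{eq:rface}) by the previous step.

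For $\subseteq$, fix a corner matching $M_0$; any solution $R$ yields $\phi=R-\delta_{M_0}$ in the homogeneous solution space $S_0$, and it suffices to prove $S_0=\operatorname{span}\{\alpha_Z\}$. Since $\sum_Z c_Z\alpha_Z$ takes the value $c_{Z_r}-c_{Z_s}$ on the edge that is the positive intersection of $Z_r$ with $Z_s$, a homogeneous solution $\phi$ lies in $\operatorname{span}\{\alpha_Z\}$ exactly when it is a potential difference on the graph $G$ whose vertices are the zigzag paths and whose edges are the dimer edges, i.e. when $\phi$ has vanishing signed sum around every cycle of $G$. The closed walks of $G$ obtained by circling the nodes and the faces of the dimer carry precisely the homogeneous forms of (\ref{eq:rvert}) and (\ref{eq:rface}), so $\phi$ annihilates them. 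I expect the main obstacle to be showing that these node- and face-walks generate the \emph{entire} cycle space of $G$, equivalently that $\dim S_0=d-1$ (with $d$ the number of zigzag paths) and that no spurious period survives around the two homology generators of the torus; this is where connectedness of $G$ and the Euler relation $V-E+F=0$ must be combined carefully, and it is what rules out extra solutions. Granting it, $\phi$ is a coboundary, hence $S_0=\operatorname{span}\{\alpha_Z\}$, which the single-flip differences of boundary matchings also span; therefore $R$ is an affine combination of boundary matchings. A secondary technical point is treating faces whose boundary is not a simple cycle in the toggling argument above.
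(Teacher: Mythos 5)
Your first inclusion --- that affine combinations of the $\delta_M$ for boundary matchings $M$ do solve (\ref{eq:rvert}) and (\ref{eq:rface}) --- is essentially sound, and it takes a mildly different route from the paper: you establish (\ref{eq:rface}) for a corner matching by a toggling argument that appeals to the uniqueness of corner perfect matchings, and then propagate to all boundary matchings by flips $\pm 2\alpha_Z$ via Corollary~\ref{cor:bdymatchings}, whereas the paper shows directly that a face with half its sides in a boundary matching would force a zigzag path of winding zero and then upgrades ``at most $n-1$ sides'' to ``exactly $n-1$ sides'' by summing over faces and using $V=E-F$. Your self-bordering-face caveat is real but repairable.

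The genuine gap is in the reverse inclusion, and you have flagged it yourself: you reduce ``there are no other solutions'' to the claim that the node-walks and face-walks generate the entire cycle space of the graph $G$ whose vertices are zigzag paths and whose edges are dimer edges, equivalently that the homogeneous solution space has dimension exactly the number of zigzag paths minus one --- and then you write ``Granting it''. That claim is the substantive half of the theorem and does not follow from a routine count: the node and face equations number $V+F=E$ in total, so everything hinges on their rank being exactly $E-d+1$, and the naive topological expectation (face boundaries of a graph on the torus miss a codimension-two piece of the cycle space) shows that the answer is sensitive to how the zigzag paths wind. In the paper this is Lemma~\ref{lem:pathsminus1}, whose proof needs two nontrivial inputs: an invariance argument showing that the dimension of the solution space of (\ref{eq:rcomb}) is unchanged under the combinatorial moves deforming one zigzag path diagram into another with the same winding numbers, and an explicit base case (Lemma~\ref{lem:pm1construct}) constructed inductively by the inverse algorithm, for which the dimension is computed directly. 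Until you supply an argument of comparable substance for your cycle-space claim, the ``precisely'' in the statement is unproved.
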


\begin{FIGURE}
{ 
  \centerline{
    \epsfig{figure=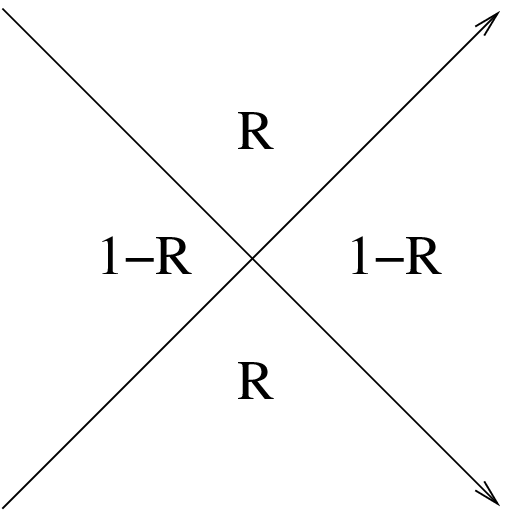, width=1.5in}
  }
\caption{\label{fig:rcontribution}The contribution of the vertex to the
equations (\ref{eq:rcomb}) for the four surrounding faces.}
}
\end{FIGURE}

\begin{FIGURE}
{ 
  \centerline{
    \epsfig{figure=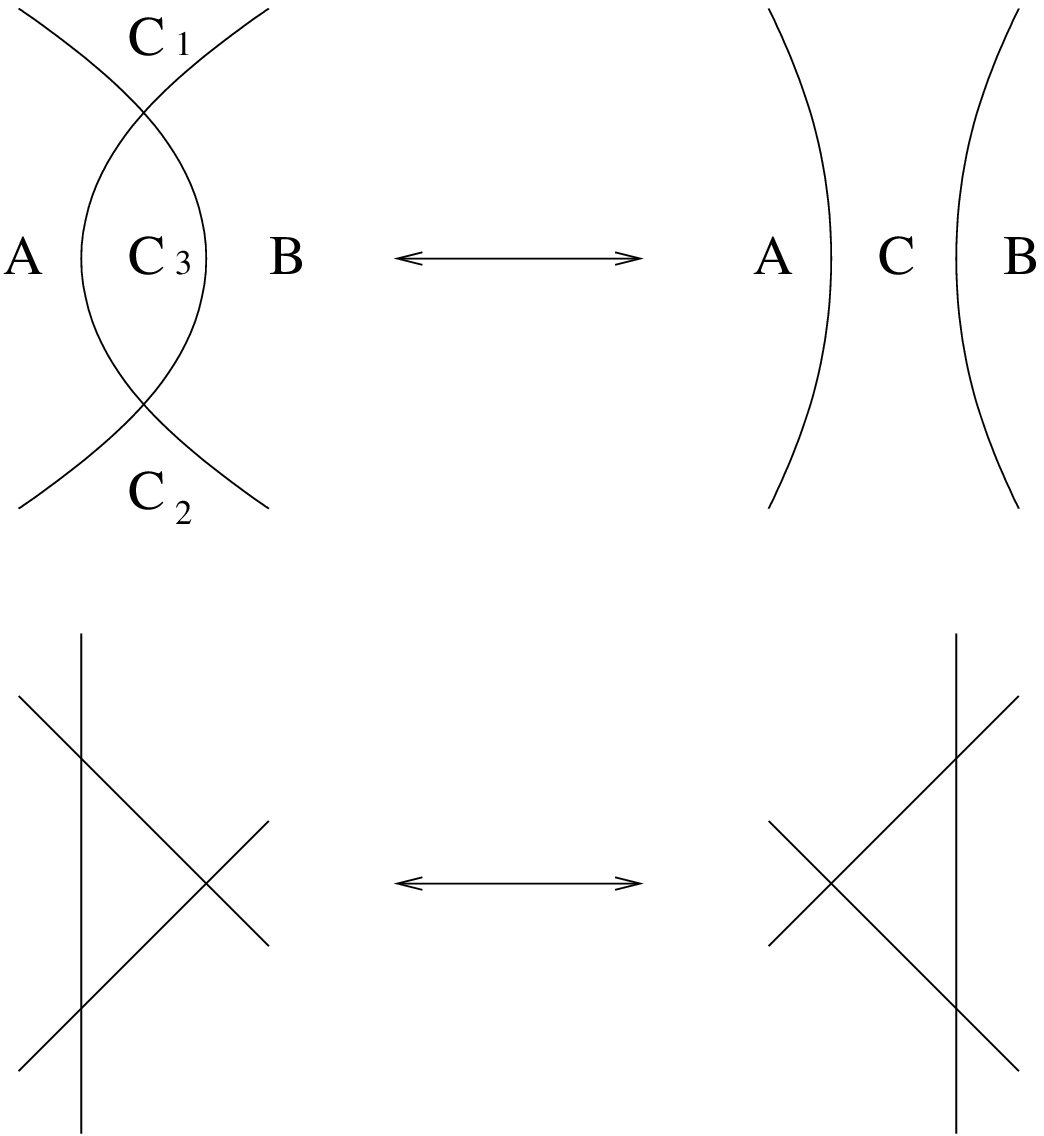, width=3.5in}
  }
\caption{\label{fig:rearrange}Combinatorial changes in the zigzag path diagram as a result of
deformations.}
}
\end{FIGURE}

We first determine the dimension of the solution space of
(\ref{eq:rvert}) and (\ref{eq:rface}), so that we will be able to show that
there are not any more solutions beyond the boundary $\delta_M$.

\begin{lem} \label{lem:pathsminus1}
For any dimer in which the zigzag paths have winding numbers that are
prime (i.~e.\ their $x$ and $y$ components are relatively prime, or equivalently,
they can each be sent to $(1,0)$ by an $SL_2(\mathbb{Z})$ transformation)
and not all parallel and in which no zigzag path intersects itself,
the set of solutions to (\ref{eq:rvert}) and (\ref{eq:rface}) has dimension
equal to the number of zigzag paths minus one.
\end{lem}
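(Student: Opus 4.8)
We want to show that the solution space of the linear system $(\ref{eq:rvert})$--$(\ref{eq:rface})$ has dimension exactly $Z-1$, where $Z$ is the number of zigzag paths. Let me count dimensions. The variables are the values $R(e)$, one per edge, so there are $E$ of them. The constraints are one equation per node (there are $V$ of them) and one per face (there are $F$ of them), giving $V+F$ equations. Let me check the naive count: $E - (V+F) = E - V - F$. For a dimer on the torus, Euler's formula gives $V - E + F = 0$, so $E - V - F = -2V + \ldots$ — that does not immediately give $Z-1$, so the equations must be linearly dependent and I will need to find the exact rank.

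**The plan.**

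The plan is to compute the rank of the constraint system by exhibiting its kernel and cokernel explicitly. First I would observe that equations $(\ref{eq:rvert})$ and $(\ref{eq:rface})$ are not independent: summing all node equations gives $2V$, summing all face equations gives $2F - 2\sum_e R(e) + (\text{something})$, and since each edge lies on exactly two nodes and exactly two faces, there are obvious linear relations among the rows. I expect to find a small fixed number of relations (coming from the global structure of the torus), reducing the effective number of independent equations. Concretely I would pass to the language of the zigzag path diagram, where by the construction in Section~\ref{sec:defs} each node becomes a face with all edges oriented one way and the face equations become statements about oriented boundaries; this is where the hypothesis that windings are prime, not all parallel, and that no path self-intersects should enter, since these guarantee the zigzag path structure is nondegenerate.

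**Key steps in order.**

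First I would reinterpret the two constraint families geometrically: an edge $e$ of the dimer is an intersection of two zigzag paths, so I would try to build, for each zigzag path $Z$, a candidate solution $\phi_Z$ supported on the edges of $Z$ (for instance the indicator-type function $\delta$ restricted appropriately) and check that each such $\phi_Z$ satisfies $(\ref{eq:rvert})$ and $(\ref{eq:rface})$. This should produce $Z$ solutions. Second, I would find the single linear relation among them — the sum of all $\phi_Z$ should be a constant multiple of a trivial solution (or should vanish), giving a one-dimensional space of relations, hence at most $Z-1$ independent solutions of this form. Third, and most importantly, I would prove there are no further solutions by bounding the dimension from above: I would set up the incidence structure and argue that any solution is determined, up to the $\phi_Z$'s, by its behavior around a spanning structure of the graph, using connectivity of the dimer together with the non-self-intersection hypothesis to propagate values. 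This upper bound matching the lower bound of $Z-1$ closes the argument.

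**The main obstacle.**

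The hard part will be the upper bound — showing the solution space is no bigger than $Z-1$. The lower bound is a direct verification that the $\phi_Z$ satisfy the constraints, which is essentially bookkeeping using the alternating clockwise/counterclockwise structure of zigzag paths. For the upper bound I expect to need the full force of the hypotheses: the primeness of the windings ensures each zigzag path is a genuine simple closed curve of the right homology class, the non-self-intersection ensures the $\phi_Z$ are well-defined and independent, and the not-all-parallel condition rules out degenerate toric data where the dimension could collapse. I anticipate the cleanest route is a rank computation via the zigzag path diagram: count its vertices, edges, and faces, use the correspondence between faces of that diagram and nodes/faces of the dimer, and invoke Euler's formula on the torus to pin the rank down exactly, so that the cokernel of the constraint map is precisely $(Z-1)$-dimensional.
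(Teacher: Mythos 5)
Your proposal correctly identifies the shape of the problem --- the functions $\delta_Z$ supported on single zigzag paths give a $(Z-1)$-dimensional space of homogeneous solutions, so the real content of the lemma is the matching upper bound --- but the upper bound is exactly the step you leave unexecuted, and the route you sketch for it cannot succeed as stated. On the torus $V-E+F=0$, so the system (\ref{eq:rvert})--(\ref{eq:rface}) has $E$ unknowns and $V+F=E$ equations: the constraint matrix is square. Euler's formula therefore only tells you that the kernel and the space of linear relations among the rows have the same dimension; it says nothing about what that common dimension is. Pinning it down must use the hypotheses (prime windings, not all parallel, no self-intersections) in an essential way, since for degenerate configurations the answer can genuinely differ, and your ``propagate values along a spanning structure'' step is precisely where all of that work is hidden. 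As written, the proposal is a plan for a proof rather than a proof.

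The paper takes a different, essentially topological route worth comparing against. It rewrites (\ref{eq:rvert}) and (\ref{eq:rface}) as a single equation on the zigzag path diagram and shows the solution dimension is invariant under the elementary moves relating any two diagrams with the same winding data: creating or cancelling a pair of intersections between two paths (here the hypotheses enter, guaranteeing the two new crossings lie on two distinct faces $C_1 \neq C_2$, so they are cut out by two independent equations), and sliding a path across a crossing of two others, which visibly preserves the solution set. This reduces the lemma to computing the dimension for one representative per winding datum, which is done in Lemma \ref{lem:pm1construct}: the base case is the diamond grid, where the general solution $\frac{1}{2}+(-1)^i f(j)+(-1)^j g(i)$ is found by inspection, and each edge deletion performed by the algorithm drops both the solution dimension and the number of zigzag paths by one. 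If you want to salvage a direct rank computation, you would need an actual argument that the relations among the rows of the constraint matrix are spanned by the $Z$ zigzag-path relations modulo one global relation --- a statement of essentially the same difficulty as the lemma itself.
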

\begin{proof}
First we will show that the number of solutions depends only on the winding
numbers of the zigzag paths.  We will work with the zigzag path diagram.
In this diagram, $R$ is a function on vertices.  We can unify 
(\ref{eq:rvert}) and (\ref{eq:rface}) into a single equation as follows.
We first define the function $\sigma_{v,f}(x)$, where $v$ is a vertex of the
zigzag path diagram and $f$ is a face of the zigzag path diagram having $v$ as a
corner.  If the two zigzag paths containing $v$ are similarly oriented around
$f$, then $\sigma_{v,f}(x)=x$; if they are oppositely oriented around $f$ then
$\sigma_{v,f}(x)=1-x$.  Then (\ref{eq:rvert}) and (\ref{eq:rface}) can be
expressed as
\begin{equation} \label{eq:rcomb}
\sum_{v \in f} \sigma_{v,f}(R(v))=2.
\end{equation}
(See Figure \ref{fig:rcontribution}).

We can deform any zigzag path diagram with non-self-intersecting zigzag
paths to any other zigzag path diagram with non-self-intersecting zigzag
paths with the same winding numbers.  As the diagram is deformed, it can
change combinatorially in several ways: a pair of intersections between
a pair of zigzag paths can be added or removed, or a zigzag path can be
moved past the crossing of two other zigzag paths.  Figure
\ref{fig:rearrange} illustrates these possibilities.  Note that at intermediate
steps, the zigzag path diagram may not correspond to a dimer, but we can
still consider the set of solutions to (\ref{eq:rcomb}). 

First consider the case where a pair of intersections between a pair of zigzag
paths is added or removed.
If $C_1$ is not the same face as $C_2$, then the values of the two new crossings
are constrained by the equations for $C_1$ and $C_2$ and the dimension of the
set of solutions to (\ref{eq:rvert}) and (\ref{eq:rface}) remains unchanged.
If the two zigzag paths have winding numbers that are not parallel, then they
must intersect somewhere else, which implies $C_1 \ne C_2$.  If the winding
numbers are parallel, then there must be some other zigzag path
whose winding number is not parallel to either and hence must intersect both.
Again $C_1 \ne C_2$.

Now consider the case where a zigzag path is moved past the crossing of two
other zigzag paths.  We can check that any solution to (\ref{eq:rcomb}) in
the first diagram is also a solution to (\ref{eq:rcomb}) in the second
diagram, and vice versa.  So performing the move depicted in the second diagram
does not change the solution set.   So we have shown that the dimension of
the solution space to (\ref{eq:rcomb}) depends only on the winding numbers of
the zigzag paths.

In Lemma \ref{lem:pm1construct} we will exhibit for any set of winding
numbers a dimer for which the number of independent solutions to
(\ref{eq:rvert}) and (\ref{eq:rface}) is the number of zigzag paths minus
one.
\end{proof}

Lemma \ref{lem:pathsminus1}
tells us how to solve (\ref{eq:rvert}) and (\ref{eq:rface}) for a large class
of dimers, many of which are not properly ordered.
It is interesting to note that the second move shown in 
Figure \ref{fig:rearrange} does not change either $a$ or $\sum_e (1-R(e))$.
The first move also leaves $a$ and $\sum_e (1-R(e))$ invariant
in the case where the two zigzag paths are oppositely oriented
(the charges of the introduced vertices sum to two and
$(R_1-1)^3+(R_2-1)^3=(R_1+R_2-2)\left[ (R_1-1)^2 - (R_1-1)(R_2-1) + (R_2-1)^2 \right]$).
When the zigzag path diagram corresponds to a dimer,
$\sum_e (1-R(e))$ is the number of faces in the dimer.

\begin{proof}[Proof of Theorem \ref{thm:bdysymmetry}]
First we will show that the $\delta_M$ are solutions to (\ref{eq:rface}).
Suppose a face $f$ with $2n$ sides had $n$ of those sides in a boundary perfect
matching $M$.  (A side of a face is an edge of the face along with a normal
pointing into the face.  If a face borders itself then the bordering edge is
part of two different sides of the face.  If a self-border edge is in
a perfect matching, then we count two sides of $f$ in that perfect matching.)
From Corollary \ref{cor:bdymatchings},
we know that we can get from $M$ to any other boundary
perfect matching by flipping zigzag paths.  Note that this operation leaves
invariant the number of sides of each face in the perfect matching.
Therefore every boundary perfect
matching has $n$ sides of $f$.  So every node of $f$ selects one of the two
adjacent sides of
$f$ for all boundary perfect matchings.  By Lemma \ref{lem:edgeinpm}
we know that every edge is in some corner perfect matching.
So the only edges belonging to any node of $f$ are the adjacent sides of $f$.
Therefore, as we move along the boundary of the face we are following
a zigzag path.  But then we have a zigzag path with zero winding, which
violates proper ordering.
So the assumption that a face with $2n$ sides can have $n$ sides in a boundary
perfect matching must be false.  Therefore a face with $2n$ sides can have at
most $n-1$ sides in a boundary perfect matching.  Sum this inequality over all
faces:
\begin{equation}
  \sum_f \sum_{s \in f \cap M} 1 \le \sum_{f} \left [\left( \sum_{s \in f} \frac{1}{2} \right) - 1 \right]
\end{equation}
where $f$ runs over faces and $s$ runs over sidess.
Now reverse the order of the sums:
\begin{eqnarray}
  \sum_{s \in M} \sum_{f \owns s} 1 & \le & \sum_s \sum_{f \owns s} \frac{1}{2} - F \\
  V & \le & (2E)\left(\frac{1}{2}\right) - F \\
    V & \le & E-F.
\end{eqnarray}
Since we know $V=E-F$, equality must
have held in each case.  So (\ref{eq:rface}) is satisfied by boundary
perfect matchings.

The difference between any two boundary perfect matchings is a sum of functions
$\delta_Z$, where $Z$ is a zigzag path and the value $\delta_Z$ alternates
between $2$ and $-2$ on $Z$ and is zero outside of $Z$.  The only relation
obeyed by the $\delta_Z$ is that they sum
to zero.  So the dimension of the space of
solutions to (\ref{eq:rvert}) and (\ref{eq:rface}) that we have found
equals the number of zigzag paths minus one.
By Lemma \ref{lem:pathsminus1}, there can be no more solutions.
\end{proof}

When some of the boundary points of the toric diagram are not corners,
there are many sets of perfect matchings that form a basis for the
solutions to $(\ref{eq:rvert})$ and $(\ref{eq:rface})$.
We will construct a basis by
associating each segment of the boundary of the toric diagram with a zigzag
path, and choosing one perfect matching at each boundary point so that the
difference between two consecutive perfect matchings is the zigzag path
corresponding to the segment between them.  Write
$R = \sum_i \lambda_i \delta_{M_i}$, where $M_i$ are the perfect matchings in
the basis and the $\lambda_i$ are real numbers.

Butti and Zaffaroni \cite{Butti:2005vn}
also noted that in many cases each edge that is a
positively oriented intersection of a zigzag path $Z_r$ with
another zigzag path $Z_s$ occurs in the perfect
matchings in $cc(r,s)$, the counterclockwise segment from $r$ to $s$,
while a negatively oriented intersection of $Z_r$ with $Z_s$
occurs in the perfect matchings not in $cc(r,s)$.  In this
case, the value of $R-1$ for a positively oriented intersection of $Z_r$
with $Z_s$ is $2\left(\sum_{i \in cc(r,s)} \lambda_i \right) -1$.  For a
negatively oriented intersection the value of $R-1$ is
$2\left(\sum_{i \notin cc(r,s)} \lambda_i \right) -1$, which equals
$-\left[2\left(\sum_{i \in cc(r,s)} \lambda_i \right) -1\right]$ since
$\sum_i \lambda_i=1$.  So then the total contribution to $\sum_e (R-1)^3$
from the intersections of $Z_r$ with $Z_s$ is
$\left(\mathbf{w}_r \wedge \mathbf{w}_s \right)
 \left[2\left(\sum_{i \in cc(r,s)} \lambda_i \right) -1 \right]^3$.  Hence
(\ref{eq:thooft}) can be rewritten as
\begin{equation} \label{eq:buttia}
a = \frac{9N^2}{32} \left[ F + \sum_{r<s} \left(\mathbf{w}_r \wedge \mathbf{w}_s
\right)
\left( 2\left(\sum_{i \in cc(r,s)} \lambda_i \right) -1 \right)^3 \right].
\end{equation}
\begin{prop} \label{prop:buttia}
  If a dimer has properly oriented nodes, then
  it is the case that all positively
  (resp. negatively) oriented intersections of $Z_r$ with $Z_s$ are in
  precisely the perfect matchings that are in $cc(r,s)$ (resp. $cc(s,r)$).
  Hence \ref{eq:buttia} holds for properly ordered dimers.
\end{prop}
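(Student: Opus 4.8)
The plan is to fix a single edge $e$ that is a positively oriented intersection of $Z_r$ with $Z_s$ and to show that $e$ lies in exactly the basis matchings $M_i$ with $i \in cc(r,s)$. Since the resulting characterization will depend only on the pair $(r,s)$ and not on the particular edge, this simultaneously handles all positive intersections of $Z_r$ with $Z_s$, and the statement for negative intersections follows by exchanging the roles of $r$ and $s$ (a negatively oriented intersection of $Z_r$ with $Z_s$ is a positively oriented intersection of $Z_s$ with $Z_r$). The first fact I would record is that, because each edge of the dimer belongs to exactly two zigzag paths, $e$ lies in $Z_r$ and $Z_s$ and in no others. Moreover $\mathbf{w}_r \ne \mathbf{w}_s$: the paths $Z_r$ and $Z_s$ share the node at an endpoint of $e$, and proper ordering forbids two paths of equal winding from intersecting.

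Next I would track the membership of $e$ as we traverse the basis matchings $M_0, M_1, \dots$ counterclockwise around the boundary of the toric diagram. By the construction of the basis, together with Corollary \ref{cor:bdymatchings}, consecutive matchings differ by flipping the single zigzag path associated to the segment between them. Flipping a path $Z$ deletes the half of its edges that are in the matching and adds the other half, so it toggles whether $e$ belongs to the matching exactly when $e \in Z$. Hence the membership of $e$ changes only at the two steps where we flip $Z_r$ and where we flip $Z_s$. Consequently the set of indices $i$ with $e \in M_i$ is a contiguous cyclic arc whose two endpoints are precisely the $Z_r$-flip and the $Z_s$-flip; it is therefore either $cc(r,s)$ or its complement $cc(s,r)$.

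To decide which arc it is, I would invoke Lemma \ref{lem:edgeinpm} at a corner matching inside the candidate arc. Since $\mathbf{w}_r \ne \mathbf{w}_s$, the wedge running counterclockwise from $\mathbf{w}_r$ to $\mathbf{w}_s$ contains a direction $\mathcal{R}$ strictly between two consecutive distinct winding directions, so $M(\mathcal{R})$ is a corner matching lying in $cc(r,s)$. Lemma \ref{lem:edgeinpm} says $e \in M(\mathcal{R})$ precisely because $\mathcal{R}$ lies in that wedge, which pins down the arc containing $e$ as $cc(r,s)$ rather than its complement. This establishes the membership claim, and then equation (\ref{eq:buttia}) follows immediately from the computation of $R(e)-1$ and $\sum_e (R(e)-1)^3$ already carried out in the discussion preceding the proposition.

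The step I expect to be the main obstacle is aligning the purely combinatorial flip-traversal of the basis with the angular data of Lemma \ref{lem:edgeinpm}. I must be careful that the counterclockwise order in which the basis matchings are generated matches the counterclockwise order of ray directions, which is guaranteed by Proposition \ref{prop:corner}, and that parallel windings are handled consistently: when a side of the toric diagram carries interior lattice points there are non-corner basis matchings for which Lemma \ref{lem:edgeinpm} does not apply directly, so I must rely on the toggling argument to fill in the membership of $e$ between corners.
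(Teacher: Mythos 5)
Your proposal is correct and follows essentially the same two-step argument as the paper: first, the observation that an edge's membership in the basis matchings toggles only when one of its two zigzag paths is flipped, so the set of matchings containing a given $Z_r$--$Z_s$ intersection is either $cc(r,s)$ or its complement; second, Lemma \ref{lem:edgeinpm} applied at a corner to decide which arc it is. Your version simply spells out the details (the $\mathbf{w}_r \ne \mathbf{w}_s$ check, the reduction of the negative case by swapping $r$ and $s$, and the handling of non-corner boundary points) that the paper's terser proof leaves implicit.
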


\begin{proof}
Assume that the dimer has properly ordered nodes.
As we go around
the toric diagram, the perfect matching switches from containing an edge
$e$ to not containing it only if we changed the perfect matching by a
zigzag path containing $e$.  So each intersection of $Z_r$ with $Z_s$
occurs in either the perfect matchings in $cc(r,s)$ or the perfect matchings
in its complement.  From Lemma \ref{lem:edgeinpm}
we know that the positively oriented intersections
are in the corners of $cc(r,s)$ and the negatively oriented intersections 
are not.
\end{proof}

A particularly nice rearrangement of (\ref{eq:buttia}) that we will find
useful is
\cite{Benvenuti:2006xg, Lee:2006ru}
\begin{equation}
\label{eq:area}
a = \frac{9N^2}{4} \sum_{ijk} \area(P_i P_j P_k) \lambda_i \lambda_j
\lambda_k.
\end{equation}
where $P_i$ is the point on the toric diagram corresponding to the
$i$th perfect matching.  This formula tells us that
the triangle anomaly of the three symmetries with respective charges
$\delta_{M_i}$, $\delta_{M_j}$, and $\delta_{M_k}$ is
$\frac{N^2}{2} \area(P_i P_j P_k)$.  AdS-CFT predicts that the
$U(1)$ symmetries of the CFT correspond to gauge symmetries in the AdS theory,
and that the triangle anomalies of the CFT should equal the corresponding
Chern-Simons coefficients in the AdS theory \cite{Witten:1998qj}.
The Chern-Simons coefficients are indeed found to be
$\frac{N^2}{2} \area(P_i P_j P_k)$ \cite{Benvenuti:2006xg}.
So the field theory produced by a properly
ordered dimer will have precisely the cubic anomalies
predicted by the AdS theory.  This is strong evidence that properly ordered
dimers are consistent.

\subsection{Unitarity bound}

Gauge invariant scalar
operators in a four-dimensional CFT must have dimension at least one
\cite{Mack-1977}.
We also have the BPS bound $\Delta \geq \frac{3}{2}|R|$, where $\Delta$ is
the dimension of an operator and $R$ is its $R$-charge.  Equality is achieved
in the case of chiral primary operators  \cite{Dobrev-1985}.
So in order for the theory to be physically valid it is necessary that
the gauge invariant chiral primary operators have $R$-charge at least
$\frac{2}{3}$.

\begin{thm} \label{thm:r23}
  If $a$ can be expressed in the form (\ref{eq:area}),
  then there exists an $N$ such that in the dimer theory
  with $N$ colors, each gauge invariant chiral primary operator has $R$-charge
  at least $\frac{2}{3}$.  In particular properly ordered dimers have this
  property.
\end{thm}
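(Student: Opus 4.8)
The plan is to classify the gauge invariant chiral primary operators by how their $R$-charges depend on $N$ and then bound each class. A mesonic operator is a closed oriented loop of bifundamentals, so its $R$-charge is the sum of the edge $R$-charges around the loop and is independent of the number of colors; a baryonic operator antisymmetrizes a path of bifundamentals over all $N$ colors, so its $R$-charge is $N$ times the $R$-charge of the underlying path. Because the factor $N^2$ in (\ref{eq:area}) is an overall constant, the maximizing values of the $\lambda_i$---and hence the edge $R$-charges $R(e)=2\sum_{i\in cc(r,s)}\lambda_i$ supplied by Proposition \ref{prop:buttia}---do not depend on $N$. Thus enlarging $N$ can only help the baryonic operators, and the statement really has two independent parts: the intensive (mesonic) operators must satisfy $R\ge\frac23$ outright, while the extensive (baryonic) operators need only have strictly positive per-color charge.

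The technical heart is to show that the charge assignment maximizing (\ref{eq:area}) gives $R(e)>0$ for every edge $e$. I would argue this variationally, reading $a$ in the form (\ref{eq:buttia}), where the intersections of $Z_r$ with $Z_s$ contribute $(\mathbf{w}_r\wedge\mathbf{w}_s)(R(e)-1)^3$ with $\mathbf{w}_r\wedge\mathbf{w}_s>0$. Since this contribution is strictly increasing in $R(e)$, the maximization pushes each weighted edge charge upward, and a first-variation computation subject to $\sum_i\lambda_i=1$ should show that a configuration with some $R(e)\le0$ cannot be a maximum: one can redistribute the $\lambda_i$ across the relevant boundary arc to raise that edge's charge while only increasing $a$. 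Equivalently, using that $a$ is homogeneous of degree three, Euler's identity gives $\sum_i\lambda_i\,\partial a/\partial\lambda_i=3a$, so at an interior maximum the common Lagrange value is $3a>0$; the work is to rule out maxima on the boundary of, or outside, the region where all edge charges are positive. I expect this positivity---excluding the spurious boundary and negative-charge critical points of the cubic---to be the main obstacle.

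Granting positivity, the extensive operators are immediate: a baryon has $R=N\cdot R(\mathrm{path})\ge N\min_e R(e)$, and since $\min_e R(e)>0$ is a fixed positive number, any $N$ with $N\ge 2/(3\min_e R(e))$ forces $R\ge\frac23$. This is exactly where the existence of a sufficiently large $N$ is used. The intensive operators are more delicate, since large $N$ gives no help: a mesonic loop's charge is a sum of positive edge charges, but positivity alone does not yield $\frac23$ (two small positive charges can sum to less). Here I would use the full strength of the maximizing assignment rather than mere positivity---each node contributes a superpotential loop of total charge $2$ by (\ref{eq:rvert}), and I would bound a general loop below by exhibiting it, up to $F$-term equivalence, as built from such unit cells, so that the extremal mesons can only saturate at $\frac23$ (as happens for the operators $\Tr X$ in the $\mathbb{C}^3$ theory). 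Since the hypothesis is stated purely in terms of (\ref{eq:area}), and properly ordered dimers satisfy it by the rearrangement leading to that formula, the final clause follows once the two bounds are established.
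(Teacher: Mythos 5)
The baryonic half of your argument is essentially right in outline, but you make it harder than necessary: positivity of every edge charge does not require a variational analysis ruling out boundary or negative-charge critical points of the cubic. It follows immediately from Lemma \ref{lem:edgeinpm} (every edge of the dimer lies in at least one corner perfect matching) together with Lemma \ref{lem:nonnegative} (at the maximum every corner matching has strictly positive weight and every non-corner matching has weight zero), since $R(e)=2\sum_{i:\,e\in M_i}\lambda_i$. That is the route the paper takes, and your sketch of ``redistributing the $\lambda_i$'' is not developed far enough to stand on its own.

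The genuine gap is the mesonic case, which you correctly flag as the delicate one but then leave unresolved. The proposed route --- exhibiting a loop, up to $F$-term equivalence, as built from superpotential unit cells of charge $2$ --- does not yield a lower bound: $F$-term relations identify operators of equal charge rather than expressing a loop's charge as a sum of unit-cell charges, and a generic mesonic loop is not a union of superpotential terms. The paper's argument is geometric and hinges on a tool you never invoke, Kato's bound $\lambda_i\le\frac{1}{3}$ (Lemma \ref{lem:onethird}). One writes the loop's $R$-charge as $2\sum_i\lambda_i n_i$, where $n_i$ is the number of intersections of the loop with the matching $M_i$; the signed crossing number is an affine function of the matching's position in the toric diagram, so for a loop of nonzero winding its zero locus is a line, which passes through at most two corners. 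Hence the loop meets every corner matching except possibly two; those two carry total weight at most $\frac{2}{3}$ by Kato's lemma, so the corners the loop does meet carry weight at least $\frac{1}{3}$ and contribute $R$-charge at least $\frac{2}{3}$. (A loop of zero winding meets every matching and has charge at least $2$.) Without this step, or a genuine substitute for it, the theorem is not proved.
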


\begin{lem} \label{lem:nonnegative}
At the point where $a$ is locally maximized, the weight of each corner
perfect matching is positive, and the weight of the other boundary perfect
matchings is zero.
\end{lem}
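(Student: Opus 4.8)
The plan is to work entirely with the parametrization \eqref{eq:area}, writing $a=\frac{9N^2}{4}B(\lambda,\lambda,\lambda)$ where $B(\lambda,\lambda',\lambda'')=\sum_{ijk}\area(P_iP_jP_k)\,\lambda_i\lambda'_j\lambda''_k$ is the symmetric trilinear form built from triangle areas, and to analyze the constrained maximum geometrically. I would introduce the auxiliary function $h(P)=\sum_{jk}\area(P,P_j,P_k)\lambda_j\lambda_k$ on the plane, so that $\frac{\partial a}{\partial\lambda_l}=\frac{27N^2}{4}h(P_l)$. The Lagrange condition for a local maximum on the hyperplane $\sum_i\lambda_i=1$ then reads simply that $h$ takes one common value $c$ at every boundary lattice point $P_l$, whether or not $P_l$ is a corner. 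The entire argument is an analysis of which weight assignments can make $h$ constant on the boundary while maximizing $a$.

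First I would maximize $a$ over the compact simplex $\Delta$ of weight vectors supported on the corners alone, obtaining a candidate $\lambda^{\circ}$. The key geometric input is that, when the weights on corners are positive, $h$ is a nonnegative combination of distance-to-line functions, hence convex and piecewise linear, and moreover it is \emph{linear along each edge} of the toric diagram. This is where convexity of the polygon enters: the only non-smoothness of $\area(P,C_c,C_{c'})$ lies on the line through $C_c$ and $C_{c'}$, and a line through two vertices of a convex polygon meets the boundary only at those two vertices, so it never crosses the interior of any edge. Consequently $h$ is linear on each edge and agrees with $c$ at the two corner endpoints, so $h\equiv c$ on the whole boundary. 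This shows that $\lambda^{\circ}$, extended by zero to the non-corner boundary matchings, already satisfies the full Lagrange condition $h(P_l)=c$ for every $l$; in particular the non-corner weights vanish.

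To see that every corner weight is strictly positive, suppose $\lambda^{\circ}$ lay on a face of $\Delta$ with some corner $C_c$ carrying zero weight. Restricting to the remaining corners, $h$ is convex and equals $c$ on the boundary edges of their convex hull $S$; since $C_c\notin S$, choosing an edge of $S$ that separates off $C_c$ and using convexity of $h$ in the outward direction gives $h(C_c)\ge c$, with strict inequality because $h$ grows at infinity. Then $\frac{\partial a}{\partial\lambda_c}>\frac{\partial a}{\partial\lambda_{c'}}$ for the participating corners, so shifting weight onto $C_c$ increases $a$, contradicting maximality over $\Delta$. Hence $\lambda^{\circ}$ is interior to $\Delta$ and all corner weights are positive. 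One then checks that $\lambda^{\circ}$ is a genuine local maximum of the full problem: in the corner directions it maximizes $a|_{\Delta}$, and for a perturbation $\nu$ that redistributes weight among collinear boundary points of an edge $L$ the cubic term $B(\nu,\nu,\nu)$ vanishes and $B(\nu,\nu,\lambda^{\circ})$ is a negative multiple of $\sum_k\lambda^{\circ}_k\,\mathrm{dist}(P_k,L)>0$, so $a$ is concave in those directions too.

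The main obstacle, and the step requiring the most care, is matching this constructed critical point to ``the'' point where $a$ is locally maximized: one must invoke that the $a$-maximization of \cite{Intriligator:2003jj} selects a unique local maximum on the constraint surface, so that the $R$-symmetry coincides with $\lambda^{\circ}$. Related delicate points are that the $\lambda_i$ are a priori unconstrained real numbers, so positivity is a conclusion rather than a hypothesis and the sign bookkeeping inside $h$ must be handled without assuming it in advance, together with the borderline case $h(C_c)=c$ in the positivity step, which needs a second-order refinement. The convexity-plus-edge-linearity lemma is what renders everything consistent, since it forces $h$ to be exactly constant on the boundary precisely when the weight is concentrated at the corners.
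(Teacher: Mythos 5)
The paper does not actually prove this lemma: its entire proof is the single sentence ``This follows immediately from equation (4.2) of Butti--Zaffaroni,'' so any self-contained argument is necessarily a different route, and in substance you are reconstructing the cited result. Your central geometric observation is correct and is the right idea: with nonnegative weights supported on the corners, $h(P)=\sum_{jk}\area(P,P_j,P_k)\lambda_j\lambda_k$ is convex and piecewise linear, and since no three vertices of the toric diagram are collinear, the only kink line meeting the relative interior of a boundary edge is that edge's own line, on which the corresponding term vanishes identically; hence $h$ is affine on each edge, takes the common value $c$ at the two corner endpoints, and therefore equals $c$ at every boundary lattice point. This is exactly why a critical point of the full problem supported on the corners exists.

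There is, however, a genuine logical gap that you flag but do not close: the lemma is a statement about \emph{the} point where $a$ is locally maximized, whereas you construct \emph{a} critical point $\lambda^{\circ}$ with the desired support and then must identify it with the physical maximum. Without a uniqueness statement for local maxima of the cubic $a$ on the full affine hyperplane $\sum_i\lambda_i=1$ (not merely on the nonnegative simplex, which is what Proposition~\ref{prop:localismax} addresses), nothing in your argument rules out a second local maximum carrying nonzero weight on a non-corner boundary point, and the lemma as stated would then fail for it. Two further steps are incomplete. First, strict positivity of the corner weights: your convexity argument only yields $h(C_c)\ge c$ at a zero-weight corner, and the borderline case $h(C_c)=c$ is left open; the clean fix is the explicit formula (\ref{eq:reeblambda}) together with (\ref{eq:reebcm}) from Proposition~\ref{prop:maxpoint}, which exhibits $\lambda_i=f_i/S>0$ directly when the Reeb point is interior. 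Second, the verification that $\lambda^{\circ}$ is a local maximum of the full problem is carried out only for pure corner directions and pure single-edge redistribution directions, whereas one needs negative semidefiniteness of the Hessian form $B(\nu,\nu,\lambda^{\circ})$ on the entire tangent space $\sum_i\nu_i=0$, mixed directions included. As it stands the proposal is a well-motivated and largely correct sketch of the Butti--Zaffaroni argument, but not yet a complete proof.
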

\begin{proof}
This follows immediately from equation (4.2) of \cite{Butti:2005vn}.
\end{proof}

\begin{lem}[A. Kato \cite{Kato:2006vx}] \label{lem:muis3a}
If $a$ is given by (\ref{eq:area}), then
at the point where $a$ is locally maximized,
\begin{equation}
  \frac{\partial a}{\partial \lambda_i}=3a.
\end{equation}
\end{lem}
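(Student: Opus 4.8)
The plan is to exploit two structural features of the expression (\ref{eq:area}): that it is a homogeneous cubic in the weights $\lambda_i$, and that the maximization is carried out over the affine hyperplane $\sum_i \lambda_i = 1$. Combining homogeneity with the Lagrange condition for a constrained critical point will pin down the common value of all the partial derivatives.

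First I would record the constraint. Since $R = \sum_i \lambda_i \delta_{M_i}$ and each $\delta_{M_i}$ assigns the value $2$ to exactly one edge at every node, equation (\ref{eq:rvert}) reads $2 \sum_i \lambda_i = 2$, so the admissible weights lie precisely on the hyperplane $\sum_i \lambda_i = 1$. There is no further constraint from (\ref{eq:rface}), since every boundary $\delta_{M_i}$ already satisfies it and these span the full solution space by Theorem \ref{thm:bdysymmetry}. Thus $a$-maximization is the problem of maximizing the function in (\ref{eq:area}) subject to this single linear constraint.

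Next I would apply the method of Lagrange multipliers. At a local maximum on the hyperplane, the gradient of $a$ must be parallel to the gradient of the constraint function $\sum_i \lambda_i$, which is the all-ones vector. Hence there is a single number $\mu$ with $\partial a / \partial \lambda_i = \mu$ for every $i$, and it remains only to identify $\mu$.

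Finally I would invoke Euler's identity for homogeneous functions. Because the right-hand side of (\ref{eq:area}) is homogeneous of degree three in the $\lambda_i$, we have $\sum_i \lambda_i\, \partial a / \partial \lambda_i = 3a$. Substituting $\partial a / \partial \lambda_i = \mu$ and using $\sum_i \lambda_i = 1$ yields $3a = \mu \sum_i \lambda_i = \mu$, whence $\partial a / \partial \lambda_i = \mu = 3a$ for every $i$, as claimed. There is no real obstacle in this argument; the only point meriting care is that the weights $\lambda_i$ range freely over the hyperplane $\sum_i \lambda_i = 1$ with no sign restriction, so that the maximizer is an honest critical point of the constrained problem and the Lagrange condition genuinely holds in every coordinate direction.
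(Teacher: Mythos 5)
Your proof is correct and follows essentially the same route as the paper's: Lagrange multipliers on the constraint $\sum_i \lambda_i = 1$ give $\partial a/\partial\lambda_i = \mu$ for all $i$, and Euler's identity for the degree-three homogeneous function $a$ then identifies $\mu = 3a$. The extra care you take in justifying the constraint and the applicability of the Lagrange condition is a welcome addition but does not change the argument.
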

\begin{proof}
We can use Lagrange multipliers to find the local maximum of $a$.
\begin{equation}
\frac{\partial a}{\partial \lambda_i}=\mu \frac{\partial}{\partial \lambda_i}
\sum_j \lambda_j = \mu
\end{equation}
for some constant $\mu$.
Since $a$ is homogeneous of degree three,
\begin{eqnarray}
3 a & = & \sum_i \lambda_i \frac{\partial a}{\partial \lambda_i} \\
& = & \sum_i \lambda_i \mu \\ \label{eq:muis3a}
& = & \mu.
\end{eqnarray}
\end{proof}

\begin{lem}[A. Kato \cite{Kato:2006vx}] \label{lem:onethird}
At the point where $a$ is locally maximized, each $\lambda_i$ is at most
$\frac{1}{3}$.
\end{lem}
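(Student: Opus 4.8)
The plan is to read the bound directly off the critical point condition of Lemma~\ref{lem:muis3a}, using the one structural feature that makes formula~(\ref{eq:area}) special: every triangle area appearing in it is nonnegative, and a triangle degenerates to zero area whenever two of its vertices coincide. Write $a=\frac{9N^2}{4}S$ with $S=\sum_{p,q,r}\area(P_pP_qP_r)\lambda_p\lambda_q\lambda_r$, and fix an index $i$. For a boundary matching that is not a corner, Lemma~\ref{lem:nonnegative} gives $\lambda_i=0$, so the bound is immediate; hence I only need to treat the corners, where $\lambda_i>0$ and the stationarity condition holds as an equality.

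First I would differentiate~(\ref{eq:area}). Since $\area$ is symmetric in its three arguments, the three terms of the product rule coincide and $\frac{\partial a}{\partial\lambda_i}=\frac{27N^2}{4}f_i$, where $f_i:=\sum_{q,r}\area(P_iP_qP_r)\lambda_q\lambda_r$. Lemma~\ref{lem:muis3a} says this equals $3a=\frac{27N^2}{4}S$, so the critical point condition is simply $f_i=S$. Note that the slots with $q=i$ or $r=i$ contribute nothing to $f_i$, since the corresponding triangles are degenerate.

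Next I would expand $S$ by sorting the triples $(p,q,r)$ according to how many of the three slots equal $i$. Because $\area(P_iP_iP_r)=0$, every triple with two or three slots equal to $i$ drops out. The triples with no slot equal to $i$ contribute $S_i^{(0)}:=\sum_{p,q,r\ne i}\area(P_pP_qP_r)\lambda_p\lambda_q\lambda_r$, and the three ways of placing a single $i$ each contribute $\lambda_i f_i$. This gives the identity
\begin{equation}
S=S_i^{(0)}+3\lambda_i f_i .
\end{equation}
Substituting the critical point relation $f_i=S$ yields $S_i^{(0)}=S(1-3\lambda_i)$.

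To finish I would observe that $S_i^{(0)}\ge 0$, being a sum of nonnegative areas with nonnegative coefficients, while $S=\frac{4a}{9N^2}>0$ because $a$ is strictly positive at its maximum (equal weights on any three non-collinear corners already make $a>0$, and the toric diagram is non-degenerate). Dividing $S_i^{(0)}=S(1-3\lambda_i)$ then forces $1-3\lambda_i\ge 0$, i.e.\ $\lambda_i\le\frac13$. I do not expect a genuine obstacle: the entire argument hinges on recognizing the slot-counting decomposition of the symmetric cubic and on the positivity of the individual triangle areas, after which the conclusion is a single line. The only points meriting a word of care are that the degenerate triangles really do kill the ``two or three slots equal to $i$'' terms, and that $S>0$, both of which are elementary.
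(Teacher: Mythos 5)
Your proof is correct and is in substance identical to the paper's: the decomposition $S=S_i^{(0)}+3\lambda_i f_i$ is exactly the paper's observation that $\lambda_i\,\partial a/\partial\lambda_i$ collects the terms of $a$ containing $\lambda_i$ (since $a$ is degree at most one in each $\lambda_i$), and the nonnegativity of $S_i^{(0)}$ is its statement that those terms sum to at most $a$. You merely spell out the bookkeeping more explicitly and add the (implicitly assumed) check that $a>0$ at the maximum.
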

\begin{proof}
By Lemma \ref{lem:muis3a},
$3\lambda_i a=\lambda_i \frac{\partial a}{\partial \lambda_i}$.
Since every term of $a$ is degree zero or one in $\lambda_i$, the right-hand
side is simply the terms of $a$ containing $\lambda_i$.
We can see from (\ref{eq:area}) that the coefficient of each term of $a$ is
nonnegative and from Lemma \ref{lem:nonnegative} that each
$\lambda_i$ is nonnegative when $a$ is maximized.  Hence
the sum of the terms of $a$ containing $\lambda_i$ is at most $a$.
Therefore $3 \lambda_i a \le a$, so $\lambda_i \le \frac{1}{3}$.
\end{proof}

\begin{proof}[Proof of Theorem \ref{thm:r23}]
First consider the mesonic operators, which arise as the trace of a product
of of operators corresponding to the edges around a loop of the quiver.
The number of signed crossings between a
loop and a perfect matching of the dimer is an affine function of the perfect
matching's position in the toric diagram.  If the loop has nonzero winding,
then the function is not constant, and  its zero locus is a line.
This line can intersect the corners of the toric diagram at most twice.
Therefore each loop intersects all but at most two of the corner
perfect matchings.  The sums of the weights of those two perfect matchings is
at most $\frac{2}{3}$, and from Lemma \ref{lem:nonnegative} we know
that the non-corner matchings have weight zero.
The sum of the weights of the perfect matchings that do intersect the loop is
then at least $\frac{1}{3}$.  So the loop has $R$-charge at least $\frac{2}{3}$.
The $R$-charge of a loop with zero winding is twice the number of
intersections it has with any perfect matching.
Every edge is in at least one perfect matching so this number must be positive.
So a loop with zero winding has $R$-charge at least $2$.

The theory also has baryonic operators.  If the gauge groups
are $SU(N)$ then these operators are the $N$th exterior powers the
bifundamental fields.
Each edge of the dimer is contained in at least one
corner perfect matching by Lemma \ref{lem:edgeinpm},
and we know from Lemma \ref{lem:nonnegative} that each
corner of the toric diagram has a positive contribution to the $R$-charge.
So each dimer edge has positive $R$-charge.
For sufficiently large $N$,
the corresponding baryonic operator will have $R$-charge at least $\frac{2}{3}$.
\end{proof}

\section{Bounds on $a$}

\subsection{Bounds on $a$ for toric theories}

We can use (\ref{eq:area}) to establish bounds for $a$.  In this section
we let the indices $ijk$ of the perfect matchings run over the corner perfect
matchings only, since we know from Lemma \ref{lem:nonnegative} that the
non-corner perfect matchings have weight zero.

\begin{thm} \label{thm:bound} Let $N$ be the number of colors of each gauge group,
and let $K$ be the area of the toric diagram
(which is half the number of gauge groups).  Then
\begin{equation} \label{eq:bounds}
\frac{27N^2K}{8 \pi^2} < a \le \frac{N^2K}{2}.
\end{equation}
The upper bound is achieved iff the toric diagram is a triangle, and the
lower bound is approached as the toric diagram approaches an ellipse.
\end{thm}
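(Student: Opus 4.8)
The plan is to read (\ref{eq:area}) as a purely convex-geometric optimization and extract two sharp, affine-invariant inequalities. Writing the corner perfect matchings as points $P_i$ carrying the weights $\lambda_i\ge 0$ with $\sum_i\lambda_i=1$ (the non-corner weights vanish by Lemma \ref{lem:nonnegative}), the cubic form $\sum_{ijk}\area(P_iP_jP_k)\lambda_i\lambda_j\lambda_k$ is exactly $\mathbb{E}[\area(\triangle XYZ)]$, where $X,Y,Z$ are i.i.d.\ points equal to $P_i$ with probability $\lambda_i$. Hence $a=\frac{9N^2}{4}\,g(P)$ with $g(P)=\max_\lambda \mathbb{E}[\area(\triangle XYZ)]$, the largest expected area of a random inscribed triangle. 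Because every area and $K$ scale by the same factor under an affine map, the ratio $\Phi(P):=g(P)/K$ is affine-invariant, and the asserted bounds are equivalent to $\tfrac{3}{2\pi^2}<\Phi(P)\le\tfrac29$, with the upper value attained iff $P$ is a triangle and $\tfrac{3}{2\pi^2}$ the infimum, attained in the limit by ellipses. A quick consistency check: the triangle gives $\Phi=\tfrac29$, the square $\Phi=\tfrac{3}{16}$, and the disk $\Phi=\tfrac{3}{2\pi^2}$, so $0.152<0.188<0.222$ as required.

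For the upper bound I would first push the maximizing measure onto the extreme points: for fixed $Y,Z$ the function $X\mapsto\area(\triangle XYZ)$ is convex, so mass can be moved to vertices without decreasing the expectation, and symmetrically in the other two arguments. It then remains to prove the sharp inequality $\mathbb{E}[\area(\triangle XYZ)]\le\frac29\,\area(\operatorname{conv}\operatorname{supp}\mu)$ for an arbitrary distribution on the vertices of $P$; evaluated at the $a$-maximizer, where every corner has positive weight, the support spans all of $P$ and this reads $\mathbb{E}[\area]\le\frac29 K$. One cannot shortcut this by restricting to three vertices: for the square the optimum spreads over all four and still yields $\tfrac{3}{16}<\tfrac29$, so the inequality must be proved for genuinely multi-vertex distributions. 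The equality analysis should show the bound is saturated only by three equal point masses whose triangle has the full area $K$; this forces $P$ to be that triangle, and then a direct computation gives $\mathbb{E}[\area]=6K\lambda_1\lambda_2\lambda_3$, maximized at $\lambda_i=\tfrac13$ with value $\tfrac29 K$. This delivers $a\le\frac{N^2K}{2}$ with equality exactly for triangles.

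For the lower bound it suffices, for each $P$, to exhibit a single weighting whose expected triangle area exceeds $\frac{3}{2\pi^2}K$, and then to show the constant cannot be improved. The extremal computation is the disk: taking $X,Y,Z$ i.i.d.\ uniform on the unit circle gives the classical value $\mathbb{E}[\area(\triangle XYZ)]=\frac{3}{2\pi}$, and since the disk has area $\pi$ this is precisely $\frac{3}{2\pi^2}K$; by affine invariance every ellipse gives the same ratio. The substance is the affine isoperimetric statement that the ellipse \emph{minimizes} $\Phi$, so that $\Phi(P)>\frac{3}{2\pi^2}$ strictly for any polygon (a polygon never being an ellipse), while polygons approximating an ellipse drive $\Phi$ down to $\frac{3}{2\pi^2}$. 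I expect this ellipse-minimality to be the main obstacle. Unlike the upper bound it is a genuine extremal problem over all convex bodies: for an arbitrary $C$ one must supply a measure whose expected inscribed-triangle area is at least $\frac{3}{2\pi^2}\area(C)$, with equality forcing $C$ to be an ellipse, and naive choices fail, e.g.\ the uniform measure on the largest inscribed ellipse $E\subseteq C$ only attains $\frac{3}{2\pi^2}\area(E)$ and loses the factor $\area(E)/\area(C)$. Both extremes ultimately reduce to controlling $\mathbb{E}\lvert\operatorname{signed area}\rvert$, and it is the absolute value that blocks a one-line moment computation and makes the ellipse case the hard part.
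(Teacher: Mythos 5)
Your reformulation is correct and the sanity checks (triangle $2/9$, square $3/16$, circle $3/(2\pi^2)$) are right, but as written the proposal does not prove either inequality: it converts the theorem into two affine-isoperimetric extremal problems for $\Phi(P)=\sup_\mu \mathbb{E}_{\mu^{\otimes 3}}[\area(\triangle XYZ)]/\area(P)$ and then stops exactly where the work begins. For the upper bound, the convexity step (pushing mass to extreme points) is vacuous here --- the measure is already supported on the corners $P_i$ --- and the remaining claim, $\sup_\mu\mathbb{E}[\area]\le\frac{2}{9}\area(\operatorname{conv}\operatorname{supp}\mu)$ with equality only for three equal masses on a triangle, is a restatement of the theorem with no proposed mechanism; you yourself note it cannot be reduced to three-point supports. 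For the lower bound you explicitly flag ellipse-minimality of $\Phi$ as ``the main obstacle'' and offer no argument, and you correctly observe that the natural attempts (inscribed ellipses, signed-area moment computations) fail. So both halves are genuine gaps, not details.

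The paper closes these gaps by different routes, and it is instructive to compare. For the lower bound it does not work with $\Phi$ at all: it uses the Butti--Zaffaroni/Martelli--Sparks--Yau identity $\frac{27N^2}{8a}=\inf_{R\in X}\area(X^*_R)$ to convert the claim into $\area(X)\inf_R\area(X^*_R)<\pi^2$, which is Blaschke's classical upper bound on the planar volume product, with equality only for ellipses (hence strict for polygons). Combining that identity with your (\ref{eq:area})-based formula would give $\sup_\mu\mathbb{E}_\mu[\area(\triangle XYZ)]\cdot\inf_R\area(X^*_R)=\tfrac{3}{2}$, which is precisely the bridge that would let your ellipse-minimality claim follow from Blaschke rather than require a new proof; without some such reduction to a known result, your lower bound remains open. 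For the upper bound the paper avoids the general measure-theoretic extremal problem entirely and instead inducts on the number of corners: starting from the $a$-maximizing weights $\lambda^M$ it sets one $\lambda_i$ to zero (merging it into its neighbor), uses the fact that $a$ is multilinear of degree one in each $\lambda_j$ together with Proposition \ref{prop:maxpoint} to show $-\Delta a/a\le-\Delta K/K$ for a well-chosen corner, and uses Lemmas \ref{lem:muis3a} and \ref{lem:onethird} and Proposition \ref{prop:localismax} to get strictness, so that deleting a corner can only increase $a/K$ until one reaches a triangle. If you want to salvage your formulation, you would need to supply an argument of comparable substance for the vertex-deletion monotonicity of $\Phi$; the probabilistic language alone does not provide one.
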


\begin{proof}[Proof of the lower bound of Theorem \ref{thm:bound}]
The polar body $X^*_R$ of a convex polygon $X$ with respect to the point $R$
is defined as the set of points $Q$ satisfying
$\overrightarrow{RQ} \cdot \overrightarrow{RP} \le 1$ for all $P \in X$.
Recall that maximizing $a$ is equivalent to minimizing the volume of
a slice of the dual toric cone \cite{Martelli:2005tp, Butti:2005vn}.
More specifically, if $\vec{r}$ is the three-dimensional Reeb vector, then
$\frac{9N^2}{8a}$ is the volume of the set of points $\vec{x}$ in the dual cone
satisfying
$\vec{r} \cdot \vec{x} \le 3$.  The cross section of the dual cone in
the plane $\vec{r} \cdot \vec{x} = 3$ is the polar body of the toric diagram
with respect to the Reeb vector (considered as a point in the plane of the
toric diagram).
If we call the toric diagram $X$, then
$\frac{27N^2}{8a} = \inf_{R \in X} \area(X^*_R)$.
Then the statement of the lower bound is equivalent to
$\area(X) \inf_{R \in X} \area(X^*_R) < \pi^2$.
The result $\area(X) \inf_{R \in X} \area(X^*_R) \le \pi^2$
was proved by Blaschke \cite{Blaschke-1917,Petty-1985}; equality occurs in
the case of an ellipse.  Since the toric diagram is a polygon, it cannot
be perfectly elliptical and hence equality does not hold.
\end{proof}

We will need to use the following results for the proof of the upper bound.

\begin{prop}[A. Kato \cite{Kato:2006vx}] \label{prop:localismax}
The local maximum of $a$ is the overall maximum of $a$ in the region
$\lambda_i \ge 0, \sum_i \lambda_i = 1$.
\end{prop}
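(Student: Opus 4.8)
The plan is to reduce the statement to the strict convexity of the polar-body area already encountered in the proof of the lower bound, combined with the equivalence between $a$-maximization and volume minimization. Writing $\Pi(\lambda)=\sum_i \lambda_i P_i$ for the point of the toric diagram $X$ determined by the weights, the cited equivalence of \cite{Martelli:2005tp,Butti:2005vn} identifies the value of $a$ at a critical configuration $\lambda$ with $\frac{27N^2}{8\,\area(X^*_R)}$, where $R=\Pi(\lambda)$ and $X^*_R$ is the polar body of $X$ about $R$; under this dictionary, interior critical points of $a(\lambda)$ on the simplex are sent to critical points of the function $R\mapsto\area(X^*_R)$ on the interior of $X$. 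So if I can show that the latter function has a unique critical point, necessarily its global minimum, then every interior local maximum of $a$ must map to that point and hence attain the single value $\frac{27N^2}{8}\big/\min_R\area(X^*_R)$, which is the overall maximum of $a$.

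The key step, and the one I would prove in detail, is that $R\mapsto\area(X^*_R)$ is strictly convex on the interior of $X$. Translating $R$ to the origin and letting $h_X(\theta)$ be the support function of $X$ in the direction $u_\theta=(\cos\theta,\sin\theta)$, the radial function of the polar body is $1/(h_X(\theta)-\langle R,u_\theta\rangle)$, so that
\[
  \area(X^*_R)=\frac12\int_0^{2\pi}\frac{d\theta}{\bigl(h_X(\theta)-\langle R,u_\theta\rangle\bigr)^2}.
\]
For $R$ in the interior of $X$ the denominator is positive for every $\theta$, and since $t\mapsto t^{-2}$ is convex on $(0,\infty)$ while $R\mapsto h_X(\theta)-\langle R,u_\theta\rangle$ is affine, each integrand is a convex function of $R$; its Hessian is $6\,(h_X(\theta)-\langle R,u_\theta\rangle)^{-4}\,u_\theta u_\theta^{\mathsf{T}}$. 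Integrating over all directions $\theta$ yields a positive-definite matrix, so $\area(X^*_R)$ is in fact strictly convex. A strictly convex function on a convex domain has at most one critical point, which is its global minimum.

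Assembling the pieces, strict convexity forces $\min_R\area(X^*_R)$ to be attained at a unique $R^{\ast}$ and to be the only critical value; transporting back, $a$ takes the value $\frac{27N^2}{8\,\area(X^*_{R^{\ast}})}$ at every interior critical point of the simplex, so in particular every local maximum equals the overall maximum. Since Lemma \ref{lem:nonnegative} guarantees that the maximizing configuration lies in the interior (all corner weights positive), nothing extra need be checked on the boundary faces of the simplex.

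The main obstacle I anticipate is the transport step: the simplex of weights $\lambda$ has dimension equal to the number of corners minus one, while $R$ ranges only over the two-dimensional interior of $X$, so $\Pi$ has positive-dimensional fibers and is far from injective. To make the correspondence between critical points rigorous I would either invoke the explicit Butti--Zaffaroni parametrization that expresses the critical $\lambda_i$ as ratios of areas determined by $R$ and verify directly that $a(\lambda(R))=\frac{27N^2}{8\,\area(X^*_R)}$, or, to stay self-contained, show that $a$ is concave along each fiber $\Pi^{-1}(R)$ (as happens for a quadrilateral toric diagram, where $a$ restricts to a downward parabola on the fiber). Establishing that fiber-concavity, so that the maximization of $a$ reduces cleanly to the convex problem on the base, is where the real work lies; the convexity of $\area(X^*_R)$ itself is, by contrast, elementary.
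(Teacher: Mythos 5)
First, a point of comparison: the paper does not actually prove this proposition --- it is imported verbatim from Kato \cite{Kato:2006vx} and used as a black box --- so there is no internal proof to measure yours against. Judged on its own terms, your proposal correctly isolates and establishes what you yourself call the elementary half: the strict convexity of $R \mapsto \area(X^*_R)$ on the interior of $X$ via the support-function formula (the Hessian computation is right, and this is essentially the Martelli--Sparks--Yau convexity of the volume functional). But the proof as written does not close.

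The genuine gap is the one you flag in your final paragraph, and it is not a technicality to be deferred --- it is the entire content of the proposition. The assertion that ``interior critical points of $a(\lambda)$ on the simplex are sent to critical points of $R\mapsto\area(X^*_R)$'' does not follow from the cited equivalence: Proposition \ref{prop:maxpoint} only says that the particular configurations $\lambda(R)$ of Butti--Zaffaroni satisfy $a=27N^2/(2S)$ and that the one with $R$ equal to the Reeb vector is a local maximum; it does not say that every local maximum of the cubic $a$ on the simplex (whose dimension is the number of corners minus one) is of the form $\lambda(R)$, nor that $a(\lambda)\le 27N^2/\bigl(8\area(X^*_{\Pi(\lambda)})\bigr)$ for a general $\lambda$ in the fiber over $\Pi(\lambda)$. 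Without a statement of that kind --- e.g.\ concavity of $a$ along the fibers of $\Pi$ together with the identification of the fiberwise maximum with the Butti--Zaffaroni value, or, closer to Kato's actual route, convexity of a decreasing transform of $a$ (such as $a^{-1/2}$) directly on the simplex, which immediately forces every local maximum to be global --- a second local maximum of $a$ at a point not of Butti--Zaffaroni form is not excluded, and the reduction to the two-dimensional convex problem never gets off the ground. Two secondary caveats: your claim that $a$ restricts to a downward parabola on the one-dimensional fibers in the quadrilateral case is asserted rather than checked (the restriction of a cubic to a line is a priori cubic, so the vanishing of the leading coefficient needs verification); and using Lemma \ref{lem:nonnegative} to dismiss the boundary faces of the simplex is mildly circular, since that lemma is itself phrased about ``the'' local maximum rather than about an arbitrary one, so boundary local maxima are not obviously covered by it.
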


\begin{prop}[A. Butti and A. Zaffaroni \cite{Butti:2005vn}]
\label{prop:maxpoint}
Let $R$ be a point in the interior of the toric diagram.  Define
\begin{eqnarray} \label{eq:reebf}
f_i & = & \frac{\area(P_{i-1} P_i P_{i+1})}{\area(P_{i-1} P_i R) \area(P_i P_{i+1} R)} \\ \label{eq:reebs}
S & = & \sum_i f_i \\ \label{eq:reeblambda}
\lambda_i & = & f_i/S.
\end{eqnarray}
Then the following results hold:
\begin{eqnarray}
R & = & \sum_i \lambda_i P_i \label{eq:reebcm} \\
a & = & \frac{27N^2}{2S} \label{eq:reebrecip}.
\end{eqnarray}
Furthermore, when $R$ is the Reeb vector and
the $\lambda_i$ are given by (\ref{eq:reeblambda}), $a$ is locally maximized
(over all choices of $\lambda_i$, not just those of the form
(\ref{eq:reeblambda})).
\end{prop}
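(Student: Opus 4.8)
The plan is to prove the three assertions of Proposition \ref{prop:maxpoint} in order, placing the origin at the interior point $R$ and abbreviating $\vec p_i = P_i - R$ and $[i,j]=\vec p_i \wedge \vec p_j$. Since $R$ is interior and $X$ is convex, every $[i,i+1]>0$ and every $f_i>0$, and in this notation $\area(P_{i-1}P_iR)=\tfrac12[i-1,i]$, $\area(P_iP_{i+1}R)=\tfrac12[i,i+1]$, and $\area(P_{i-1}P_iP_{i+1})=\tfrac12([i-1,i]+[i,i+1]-[i-1,i+1])$, so that $f_i = 2([i-1,i]+[i,i+1]-[i-1,i+1])/([i-1,i][i,i+1])$.

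To establish (\ref{eq:reebcm}) I would prove the equivalent relation $\sum_i f_i \vec p_i = 0$. The only tool needed is the planar linear-dependence identity $\vec a(\vec b \wedge \vec c) - \vec b(\vec a \wedge \vec c) + \vec c(\vec a \wedge \vec b)=0$; applied to $(\vec a,\vec b,\vec c)=(\vec p_{i-1},\vec p_i,\vec p_{i+1})$ it gives $[i-1,i+1]\,\vec p_i = [i,i+1]\,\vec p_{i-1} + [i-1,i]\,\vec p_{i+1}$. Writing $f_i \vec p_i$ as a sum of three terms and summing over $i$, the cross term $\sum_i [i-1,i+1]\vec p_i/([i-1,i][i,i+1])$ becomes, after this substitution, $\sum_i(\vec p_{i-1}/[i-1,i]+\vec p_{i+1}/[i,i+1])$, which reindexes to exactly the two remaining terms $\sum_i \vec p_i/[i-1,i]$ and $\sum_i \vec p_i/[i,i+1]$; hence everything cancels and $\sum_i f_i \vec p_i = 0$. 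Dividing by $S$ gives (\ref{eq:reebcm}).

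For (\ref{eq:reebrecip}) I would pass to the polar body $X^*_R$ from the proof of the lower bound. Its vertex $\vec q_i$ dual to the edge $P_iP_{i+1}$ is determined by $\vec q_i \cdot \vec p_i = \vec q_i \cdot \vec p_{i+1}=1$, giving $\vec q_i = (\vec p_i - \vec p_{i+1})^\perp/[i,i+1]$, and $\vec q_{i-1},\vec q_i$ are adjacent because both satisfy $x\cdot\vec p_i = 1$. Triangulating $X^*_R$ from the origin and using $\vec u^\perp \wedge \vec w^\perp = \vec u \wedge \vec w$, one finds $\vec q_{i-1}\wedge \vec q_i = ([i-1,i]+[i,i+1]-[i-1,i+1])/([i-1,i][i,i+1]) = f_i/2$, so that $\area(X^*_R) = \tfrac12 \sum_i \vec q_{i-1}\wedge\vec q_i = S/4$. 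The dictionary recorded in the lower-bound proof identifies the trial anomaly with the reciprocal cross-sectional area of the dual cone, $a = 27N^2/(8\,\area(X^*_R))$, so $a = 27N^2/(2S)$, which is (\ref{eq:reebrecip}). As an internal check, the equivalent polynomial identity $\sum_{ijk}\area(P_iP_jP_k)f_if_jf_k = 6S^2$ reduces correctly for a triangle, where the $\lambda_i$ are precisely the barycentric area ratios.

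The final assertion is the substantive one, and I expect it to be the main obstacle: it claims that the Reeb choice of $R$ makes $\lambda(R)$ a local maximum of $a$ over the entire simplex $\{\lambda_i \ge 0,\ \sum_i \lambda_i = 1\}$, not merely over the two-parameter family $\{\lambda(R)\}$. By (\ref{eq:reebrecip}) the restriction of $a$ to that family equals $27N^2/(2S(R))$; since $S(R)\to\infty$ as $R\to\partial X$ and $S$ is smooth and strictly convex on the interior \cite{Martelli:2005tp}, $S$ has a unique interior minimizer $R_*$, the volume-minimizing Reeb vector, where $\nabla_R S = 0$. The difficulty is that $\nabla_R S = 0$ only forces $\nabla_\lambda a$ to be orthogonal to the tangent of the family, whereas a critical point of $a$ on the simplex requires the stronger condition $\partial a/\partial\lambda_i = \text{const}$, equivalently (via Lemma \ref{lem:muis3a}) that $h(i):=\sum_{k,l}\area(P_iP_kP_l)\lambda_k\lambda_l$ be independent of $i$. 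I would close this either by a direct computation showing that $h(i)$ is constant at $R_*$ -- the balance condition $\nabla_R \area(X^*_R)=0$ says $X^*_R$ is centered, which I would translate into constancy of $h(i)$ -- or, more cheaply, by invoking the established equivalence of $a$-maximization and Reeb-volume minimization \cite{Martelli:2005tp, Butti:2005vn} used already in the lower-bound proof, which guarantees that the two-dimensional minimum of $S$ coincides with the full $a$-maximum. Either way, once $\lambda(R_*)$ is shown to be a critical point, Proposition \ref{prop:localismax} immediately promotes it to the global, hence local, maximum, completing the proof.
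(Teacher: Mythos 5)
The paper contains no proof of Proposition \ref{prop:maxpoint} at all: it is imported verbatim from \cite{Butti:2005vn}, so any worked argument is necessarily ``different from the paper,'' and yours is substantially more than the paper provides. Your proof of (\ref{eq:reebcm}) is correct and fully self-contained: the planar dependence identity $\vec{p}_{i-1}[i,i+1]-\vec{p}_i[i-1,i+1]+\vec{p}_{i+1}[i-1,i]=0$ together with the reindexing does give $\sum_i f_i\vec{p}_i=0$, and your polar-body computation is also right --- $\vec{q}_i=(\vec{p}_i-\vec{p}_{i+1})^\perp/[i,i+1]$ satisfies the two defining equations, $\vec{q}_{i-1}\wedge\vec{q}_i=f_i/2$, and the shoelace formula yields $\area(X^*_R)=S/4$.

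There are, however, two genuine gaps. First, to convert $\area(X^*_R)=S/4$ into (\ref{eq:reebrecip}) you invoke a pointwise dictionary $a(\lambda(R))=27N^2/\left(8\area(X^*_R)\right)$ for \emph{every} interior $R$, attributing it to the lower-bound proof; but the paper records only the extremal identity $27N^2/(8a)=\inf_{R\in X}\area(X^*_R)$, valid at the maximum. The pointwise version is equivalent to the algebraic identity $\sum_{ijk}\area(P_iP_jP_k)\,f_if_jf_k=6S^2$, which is exactly the content of (\ref{eq:reebrecip}) with $a$ defined by (\ref{eq:area}) --- so using the dictionary here is circular unless you cite \cite{Butti:2005vn, Martelli:2005tp} for it outright, and your triangle check is a consistency test, not a proof of the general identity. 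Either prove that polynomial identity directly or make the citation explicit. Second, in the final assertion your closing step is logically backwards: Proposition \ref{prop:localismax} upgrades a \emph{local maximum} to the overall maximum; it cannot upgrade a mere critical point to a local maximum. Even granting that $h(i)$ is constant at the volume-minimizing $R_*$ (which you only sketch --- the translation of the balance condition $\nabla_R\area(X^*_R)=0$ into constancy of $\partial a/\partial\lambda_i$ is itself the nontrivial content), a critical point of the cubic $a$ on the simplex could a priori be a saddle, and excluding that requires the second-order transverse analysis of \cite{Martelli:2005tp, Butti:2005vn} --- which is precisely the assertion being proved. Your fallback of citing the $a$-maximization/volume-minimization equivalence is legitimate, and indeed mirrors what the paper does for the whole proposition, but it means the third claim in your writeup is established by citation rather than by argument; you were right to flag it as the main obstacle.
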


\begin{proof}[Proof of the upper bound of Theorem \ref{thm:bound}]
We use induction on the number of
corners of the toric diagram.  If the toric diagram is a triangle, then $a$
is maximized when each $\lambda_i$ is $\frac{1}{3}$.  So
$a=\frac{9N^2}{4}K(3!)\left(\frac{1}{3}\right)^3=\frac{N^2 K}{2}$.

Assume the toric diagram has more than three corners.
Let $\lambda_i^M$
be the values of $\lambda_i$ for which $a$ is locally maximized.
Choose a particular $i$ and let $\lambda_i^D=0$,
$\lambda_{i+1}^D=\lambda_i^M+\lambda_{i+1}^M$, and $\lambda_j^D=\lambda_j^M$
for all other $j$.  We will define $a^M=a|_{\lambda^M}$, $a^D=a|_{\lambda^D}$,
and $\Delta a = a^D - a^M$.
Since $a$ has degree one in each individual $\lambda_j$,
\begin{equation} \label{eq:dalinear}
\Delta a = 
\left.\frac{\partial a}{\partial \lambda_i}\right|_{\lambda^M} (-\lambda_i^M) +
\left.\frac{\partial a}{\partial \lambda_{i+1}}\right|_{\lambda^M} \lambda_i^M + 
\left.\frac{\partial^2 a}{\partial \lambda_i \partial \lambda_{i+1}}\right|_{\lambda^M} (-\lambda_i^M)
(\lambda_i^M)
\end{equation}
Recall that since $a$ is initially maximized,
$\frac{\partial a}{\partial \lambda_i}|_{\lambda^M}=
\frac{\partial a}{\partial \lambda_{i+1}}|_{\lambda^M}$ 
and hence the first two terms of (\ref{eq:dalinear}) cancel.
Now use (\ref{eq:area}) to expand the last term:
\begin{equation}
\Delta a = -\frac{27N^2}{2} (\lambda_i^M)^2 \sum_j \lambda_j^M \area(P_i P_{i+1} P_j).
\end{equation}
Since all of the $P_j$ are on the same side of the line $P_i P_{i+1}$,
\begin{equation}
\Delta a = -\frac{27N^2}{2}(\lambda_i^M)^2 \area(P_i P_{i+1} R)
\end{equation}
where $R$ is the weighted center of mass of the $P_j$ with weights
$\lambda_j^M$.
Now apply Proposition \ref{prop:maxpoint}.  We can write
\begin{eqnarray}
\Delta a & = & -\lambda_i^M \frac{27N^2\area(P_{i-1} P_i P_{i+1})}{2S \area(P_{i-1} P_i R)} \\
& = & -\lambda_i^M a^M \frac{\area(P_{i-1} P_i P_{i+1})}{\area(P_{i-1} P_i R)}.
\end{eqnarray}
Since $\sum_i \lambda_i^M = 1$ and $\sum_i \area(P_{i-1} P_i R)=K$, there
must be some $i$ for which
$\frac{\lambda_i^M}{\area(P_{i-1} P_i R)} \le \frac{1}{K}$.
For such an $i$,
\begin{equation}
-\frac{\Delta a}{a^M} \le \frac{\area(P_{i-1} P_i P_{i+1})}{K}
\end{equation}
Note that $\area(P_{i-1} P_i P_{i+1})$ is the amount by which $K$ would
decrease if we removed $P_i$ from the toric diagram.  Since
$\lambda_i^D=0$, the $\lambda_j^D$ are a valid choice of weights
for the toric diagram with $P_i$ removed.
Then
\begin{equation}
-\frac{\Delta a}{a^M} \le -\frac{\Delta K}{K}.
\end{equation}
Therefore $\frac{a^D}{K+\Delta K} \ge \frac{a^M}{K}$.
By Proposition \ref{prop:localismax}
the local maximum value of $a$ for the new toric diagram
is at least as large as $a^D$.
We want to show that it is strictly larger, or equivalently,
that $\lambda^D_j$ do not locally maximize $a$ for the new toric diagram.
Recall from Lemma \ref{lem:muis3a} that $a$ is locally maximized when
$\frac{\partial a}{\partial \lambda_{i+1}}=3a$.  Hence $a$ will continue
to be maximized only if $\Delta \frac{\partial a}{\partial \lambda_{i+1}}=3 \Delta a$.   Once again we use the fact that $a$ is degree one in each 
individual $\lambda_j$:
\begin{eqnarray}
\Delta \frac{\partial a}{\partial \lambda_{i+1}} & = &
\left.\frac{\partial^2 a}{\partial \lambda_i \partial \lambda_{i+1}}\right|_{\lambda^M} (-\lambda_i^M) \\
& = & \frac{\Delta a}{\lambda_i^M}.
\end{eqnarray}
Hence $a$ can continue to be maximized only if
$\lambda_i^M=\frac{1}{3}$.  But $\lambda_{i+1}^M$ is positive
(since we chose to let our indices enumerate corner perfect matchings only), so
$\lambda_{i+1}^D=\lambda_i^M + \lambda_{i+1}^M>\frac{1}{3}$.  By
Lemma \ref{lem:onethird}, $\lambda_{j}^D$ cannot be the local maximum point.
By the induction hypothesis, the new $\frac{a}{K}$ is at most $\frac{1}{2}$,
so the old $\frac{a}{K}$ must be smaller than $\frac{1}{2}$.
\end{proof}

\subsection{Comparison to non-toric field theories}

Let us consider how we might formulate a similar bound for non-toric CFTs.
We need to decide how to interpret $K$ in the non-toric case.  If seems natural
to replace $2N^2 K$ with the sum of the squares of the numbers of colors of each
gauge group.

\begin{TABLE} { \label{tab:karpov}
\begin{tabular}{|c|c|c|c|c|} \hline
Equation in \cite{Karpov-1997} & (x,y,z) & $(9-n)(\alpha x^2 + \beta y^2 + \gamma z^2)$ \\ \hline
(1) & $(1,1,1)$ & 27 \\
(1) & $(1,1,2)$ & 54 \\
(1) & $(1,2,5)$ & 270 \\
(2) & $(1,1,1)$ & 32 \\
(3) & $(1,1,1)$ & 36 \\
(4) & $(1,2,1)$ & 50 \\
(5) & $(1,1,1)$ & 32 \\
(6.1) & $(1,1,1)$ & 27 \\
(6.2) & $(2,1,1)$ & 36 \\
(7.1) & $(2,2,1)$ & 32 \\
(7.2) & $(2,1,1)$ & 32 \\
(7.3) & $(3,1,1)$ & 36 \\
(8.1) & $(3,3,1)$ & 27 \\
(8.2) & $(4,2,1)$ & 32 \\
(8.3) & $(3,2,1)$ & 36 \\
(8.4) & $(5,2,1)$ & 50 \\
\hline
\end{tabular}
\caption{The values of
$\frac{54N^2K}{a}=(9-n)(\alpha x^2 + \beta y^2 + \gamma z^2)$ for some of
the quiver gauge theories defined in \cite{Karpov-1997}.  Note that
the equations in \cite{Karpov-1997} have infinitely many solutions
(which can be seen by observing that if we fix one of $x,y,z$ we get a form
of Pell's equation), so there exist theories with arbitrarily large
$\frac{54N^2K}{a}$.}
} \end{TABLE}

Let's look at the values of $\frac{a}{N^2K}$ for a cone over a del Pezzo
surface.  Reference \cite{Karpov-1997} lists some quiver gauge theories
that are dual to these Calabi-Yaus.  In their notation, the sum of the squares
of the number of colors is $\alpha x^2 + \beta y^2 + \gamma z^2$.
We can compute $a$ by looking at the $AdS$ dual theory.
References \cite{Gubser:1998vd, Henningson:1998gx} tell us that
$\frac{\pi^3}{4a}$ is the volume of the horizon, and
\cite{Bergman:2001qi} tells us that the volume of the real cone over $dP_n$ is 
$\frac{\pi^3(9-n)}{27}$.  So $a=\frac{27}{4(9-n)}$.
So then $\frac{54N^2K}{a}=(9-n)(\alpha x^2 + \beta y^2 + \gamma z^2)$, and
the bound ($\ref{eq:bounds}$) for toric theories is then equivalent to
$27 \le (9-n)(\alpha x^2 + \beta y^2 + \gamma z^2) < 4 \pi^2$.
From Table \ref{tab:karpov} we see that the toric upper bound on
$\frac{54 N^2 K}{a}$ is not true for
all quiver gauge theories.  In fact, $\frac{N^2 K}{a}$ can be arbitrarily large.
Equation (1) of \cite{Karpov-1997} is $x^2+y^2+z^2=3xyz$.  If we set $z=1$ then
we have a Pell's equation in $x$ and $y$ and there are infinitely many solutions.
On the other hand, $27 \le (9-n)(\alpha x^2 + \beta y^2 + \gamma z^2)$
still holds for all of the theories considered in \cite{Karpov-1997}.  It would
be interesting to know if the inequality holds more generally.

\section{\label{sec:deleting}Merging zigzag paths}

\subsection{Deleting an edge of the dimer}

Theorem \ref{thm:zzp} says that, if a dimer is properly ordered, then
we can determine its toric diagram from the windings of its zigzag paths.
As we mentioned in section \ref{sec:zzp}, Hanany and Vegh \cite{Hanany:2005ss}
and Stienstra \cite{stienstra-2007} have previously made proposals for drawing
a dimer with given zigzag winding numbers, but their procedures are
impractical for large dimers because of the large amount of trial and error
required.

Partial resolution
\cite{Douglas:1997de, Morrison:1998cs, Feng:2000mi, GarciaEtxebarria:2006aq}
has previously been suggested as a method of
determining the dimer from the quiver
\cite{Feng:2000mi, Feng:2002zw, Hanany:2005ve}.
It involves starting with a toric diagram whose dimer model is known and
introducing Fayet-Iliopoulos terms that Higgs some of the fields and remove
part of the toric diagram to create a new diagram.  However, as is the case
with the Fast Inverse Algorithm, the previous proposals involving partial
resolution suffered from being computationally infeasible.

In this section, we will explore how certain operations on the dimer
affect its zigzag paths.  These operations can be interpreted as
partial resolutions.  We will later use these operations to construct
an algorithm for drawing a properly ordered dimer with given winding numbers
that requires no trial and error.

\begin{FIGURE}
{ 
  \label{fig:merging}
  \centerline{
    \epsfig{figure=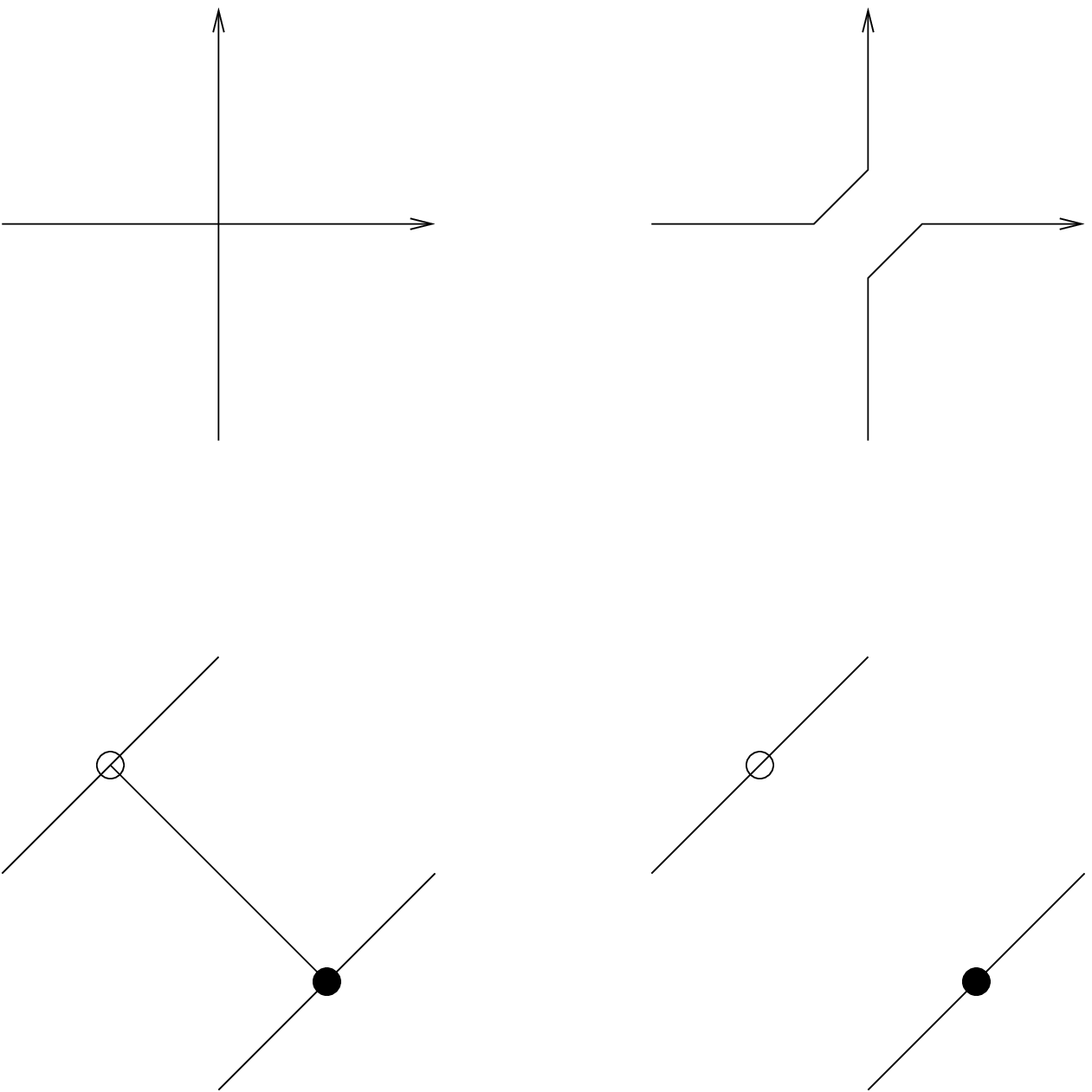, width=3.125in}
  }
\caption{Top: Merging two zigzag paths by deleting the
intersection between them.  Bottom: The effect on the dimer.}
}
\end{FIGURE}

One operation that we can perform is to remove an intersection
of two zigzag paths (or equivalently, delete an edge of the dimer).
The operation has the effect of merging the two paths into a single path.
An example is shown in figure \ref{fig:merging}.
In physical terms, we are Higgsing away the edge by turning on
Fayet-Iliopoulos parameters for the adjacent faces.
This is an example of partial resolution of the toric
singularity \cite{Douglas:1997de, Morrison:1998cs, Feng:2000mi, GarciaEtxebarria:2006aq}.
We will always merge paths that intersect just once.  In the following
we will sometimes assume that the windings of the paths are $(1,0)$ and $(0,1)$;
any other case is $SL_2(\mathbb{Z})$ equivalent to this one.

\subsection{Making multiple deletions}

\begin{FIGURE} {\label{fig:maketwo}
\centerline{
  \epsfig{figure=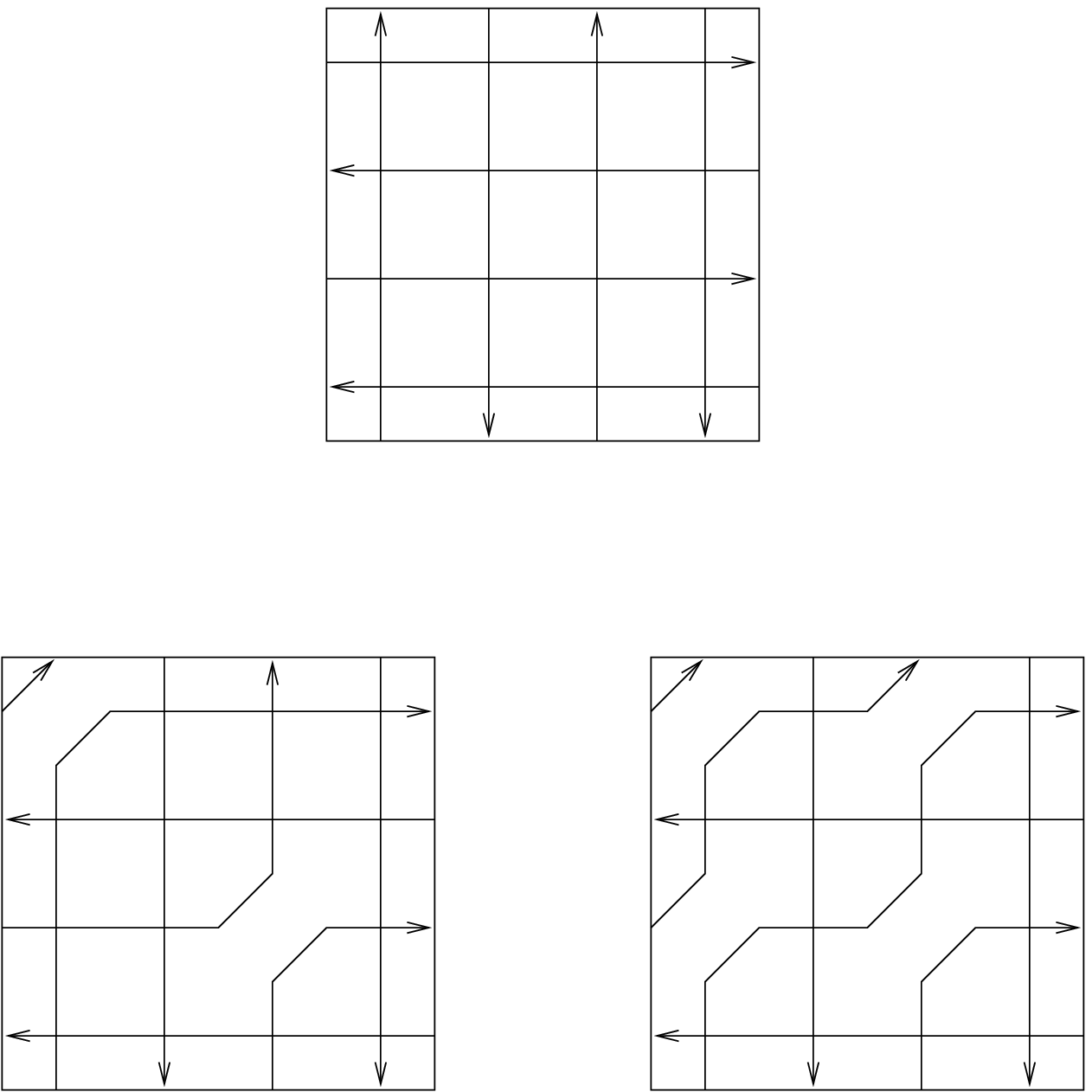,height=3.5in}
}
\caption{Left: An incorrect way of making two $(1,1)$ paths.
They intersect each other, which implies that the adjacent nodes are not
properly ordered.  Right: The correct way of making two $(1,1)$ paths.}
} \end{FIGURE}

Suppose we want to make $n>1$ $(1,1)$ edges from $(1,0)$ and $(0,1)$ edges.
If we make them one at a time, then we would violate the proper ordering of
nodes because we would have $(1,1)$ paths intersecting each other.
We should instead delete all $n^2$ edges between the $n$ $(1,0)$ edges and the
$n$ $(0,1)$ edges.
We will refer to this procedure as Operation I.

\subsection{\label{sec:extra}Extra crossings}

\begin{FIGURE}
{
  \label{fig:extracrossing}
  \centerline{
    \epsfig{figure=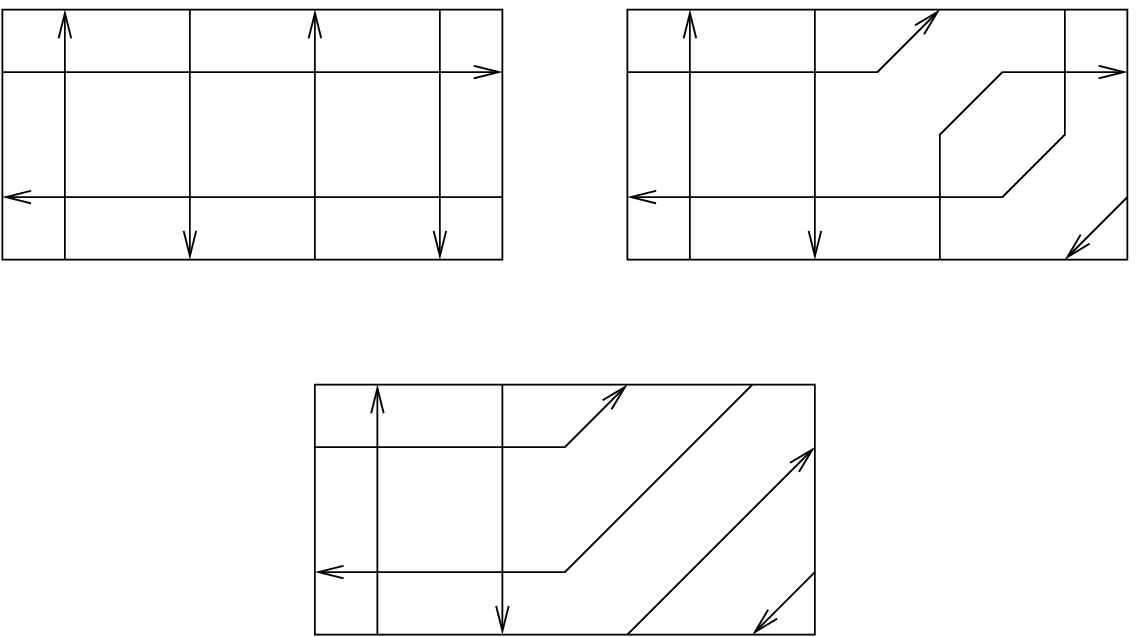,width=4.5in}
  }
  \caption{The top left diagram has no extra
    crossings.  The top right diagram shows what happens when some zigzag
    paths
    are merged.  The two diagonal paths now have extra crossings with each
    other.
    The bottom diagram shows what happens when we move the two zigzag paths
      past each other; they no longer intersect.}
}
\end{FIGURE}

We mentioned in section \ref{sec:defs} that the number of oriented crossings
between a pair of paths is a function only on their windings.
The number of unoriented crossings is greater than or equal to the absolute
value of the number of oriented crossings.
If equality does not hold then we say that the pair of paths has 
``extra crossings'' .  We say that a diagram has extra crossings if
any pair of its paths does.  There is nothing inherently wrong with extra
crossings, but we may find it desirable to produce diagrams without them.

The edge deletion procedure mentioned in the previous section
sometimes introduces extra crossings.  An example of this is shown in
Figure \ref{fig:extracrossing}.
We combine a $(1,0)$ zigzag path and a $(0,1)$ to make a $(1,1)$ zigzag path,
and we also combine $(-1,0)$ and $(0,-1)$ paths to make a $(-1,-1)$ path.
The $(1,1)$ path and $(-1,-1)$ path have a positively oriented intersection
coming from the $(0,1)-(-1,0)$ intersection and a negatively oriented
intersection coming from the $(1,0)-(0,-1)$ intersection.  Note that we can
get rid of these crossings by moving the two paths past each other.
In terms of the dimer, moving the paths past each other merges the two
vertices adjacent to a valence two node.
Physically, we are integrating out a mass term.

\begin{FIGURE} {
\label{fig:orientations}
\centerline{
\epsfig{figure=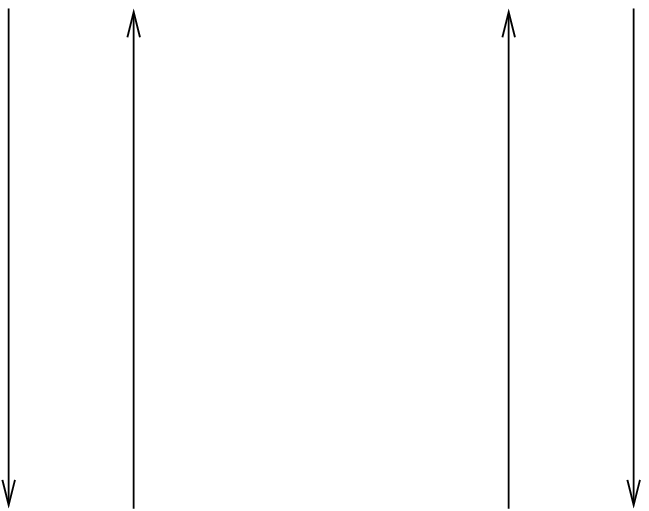,height=1.25in}
}
\caption{Pairs of paths that are positively and
negatively oriented, respectively.}
}
\end{FIGURE}

\begin{FIGURE}
{
\label{fig:movepast}
\centerline{
  \includegraphics[height=3.125in]{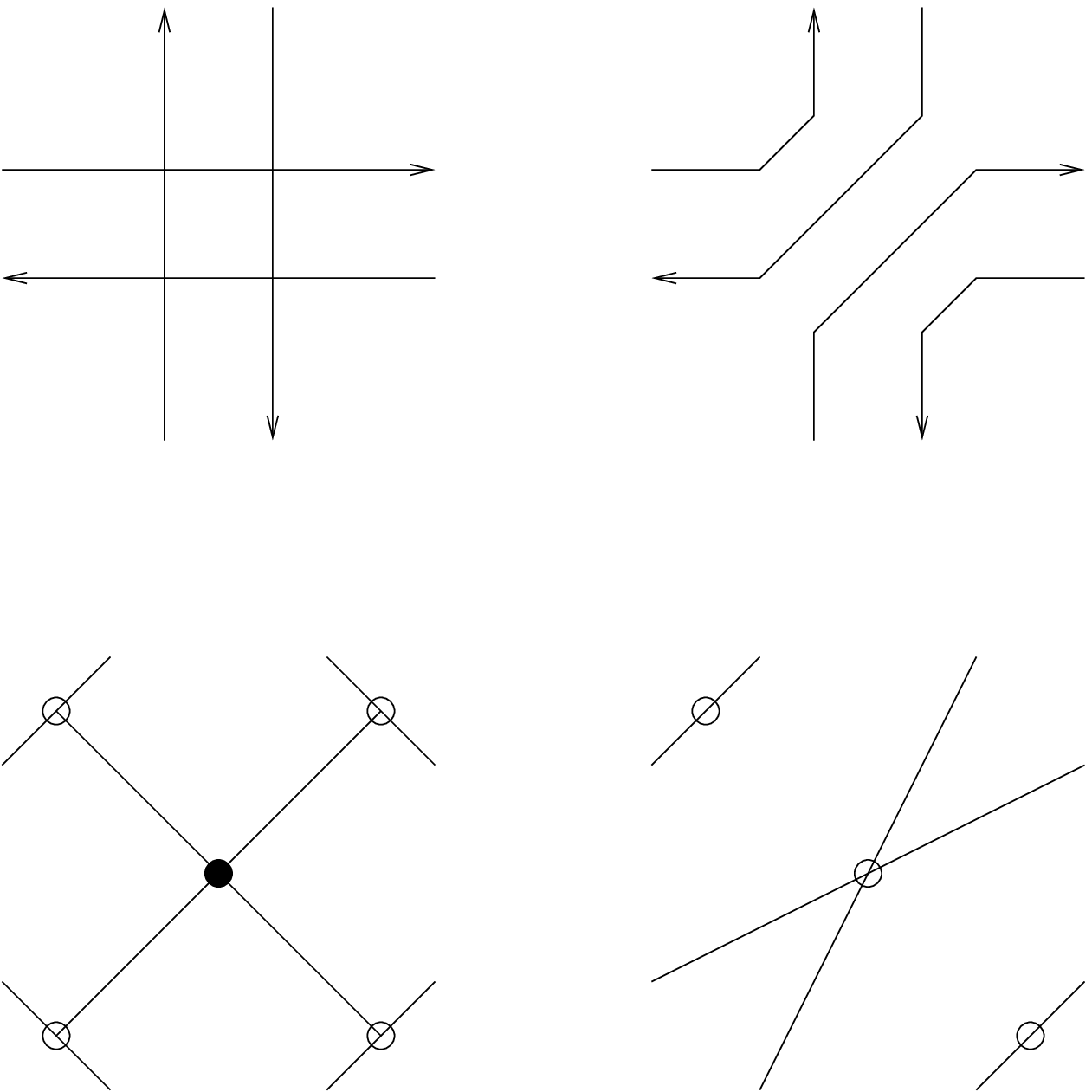}
}
\caption{Making $(1,1)$ and $(-1,1)$ paths from horizontal
and vertical paths in the zigzag path picture and the dimer picture.}
}
\end{FIGURE}

We define a pair of zigzag paths to be an ``opposite pair'' if they have
opposite winding numbers, they do not intersect, and they bound a region
containing no crossings.
Also, we define the orientation of an opposite pair
to be positive if the area containing no crossings is to the left of an observer
traveling along one of the paths, and negative if the area is on the right.
(See figure \ref{fig:orientations}.)  We have just seen how to take
a pair negatively oriented horizontal paths and a pair of negatively oriented
vertical paths and turn them into a pair of negatively oriented diagonal paths.
Similarly we can turn a pair of positively oriented horizontal paths and a pair
of positively oriented vertical paths into a pair of positively oriented
diagonal paths.  In terms of dimers, this operation takes a node of valence
four, deletes two opposite edges, and merges the other endpoints of the two
remaining edges.  Figure \ref{fig:movepast} shows the operation in terms of
both zigzag paths and dimers.

\begin{FIGURE} {\label{fig:multipledeletions}
\centerline{\includegraphics[width=4.5in]{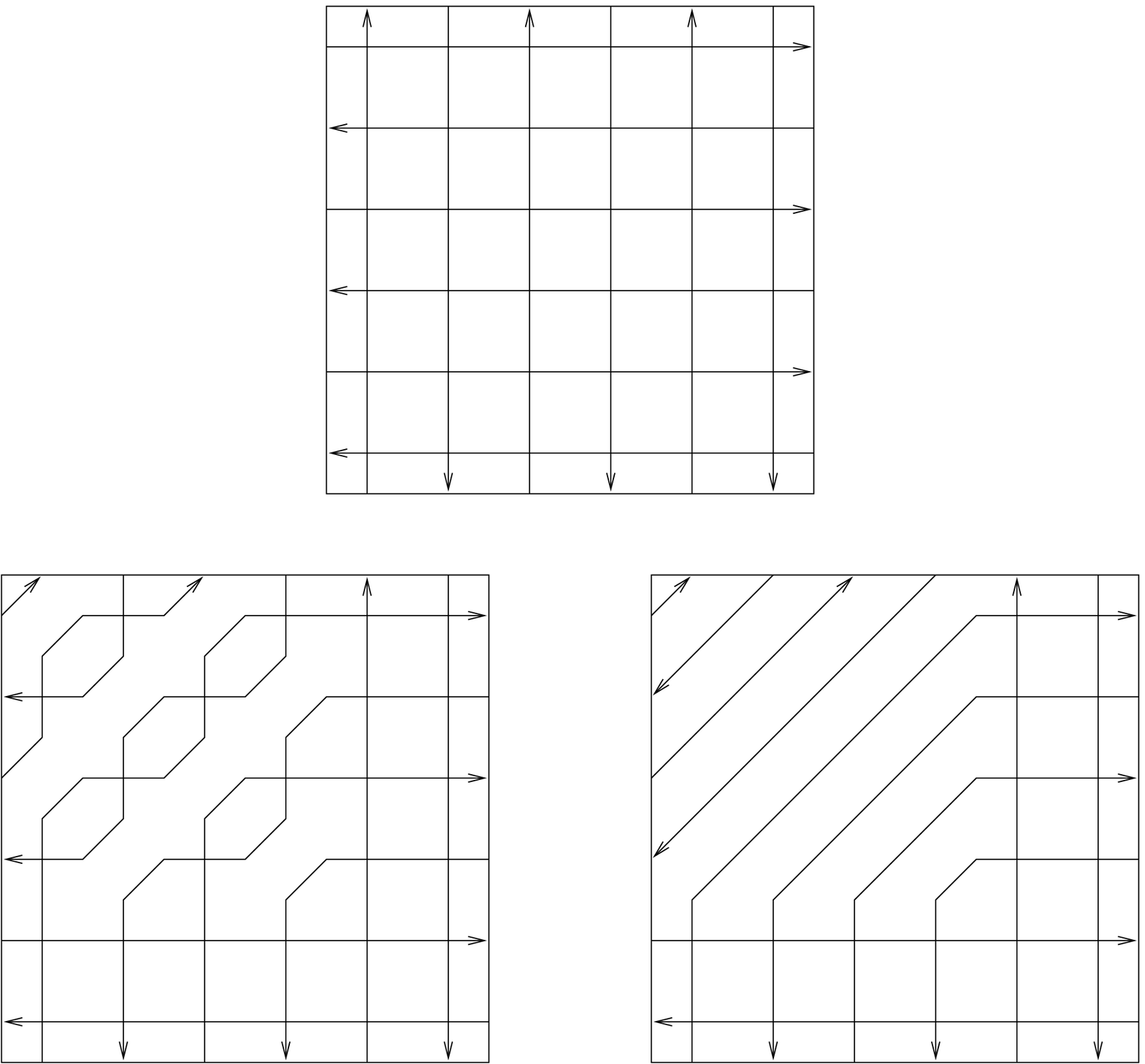}}
\caption{Creating two $(1,1)$ and two $(-1,-1)$
paths from horizontal and vertical paths.  We first merge horizontal
and vertical paths to create diagonal paths, then move the diagonal paths past
each other.}
} \end{FIGURE}

More generally, we can make $n$ $(1,1)$ paths and $n$ $(-1,-1)$ paths
and get rid of their crossings.  An example is given in figure
\ref{fig:multipledeletions}.
We have to untangle each $(1,1)$ path from each $(-1,-1)$ path.  Note that
all $2n$ paths must have the same orientation.  We will call this procedure
Operation II.

\begin{FIGURE} {\label{fig:uneven}
\centerline{\includegraphics[width=4.5in]{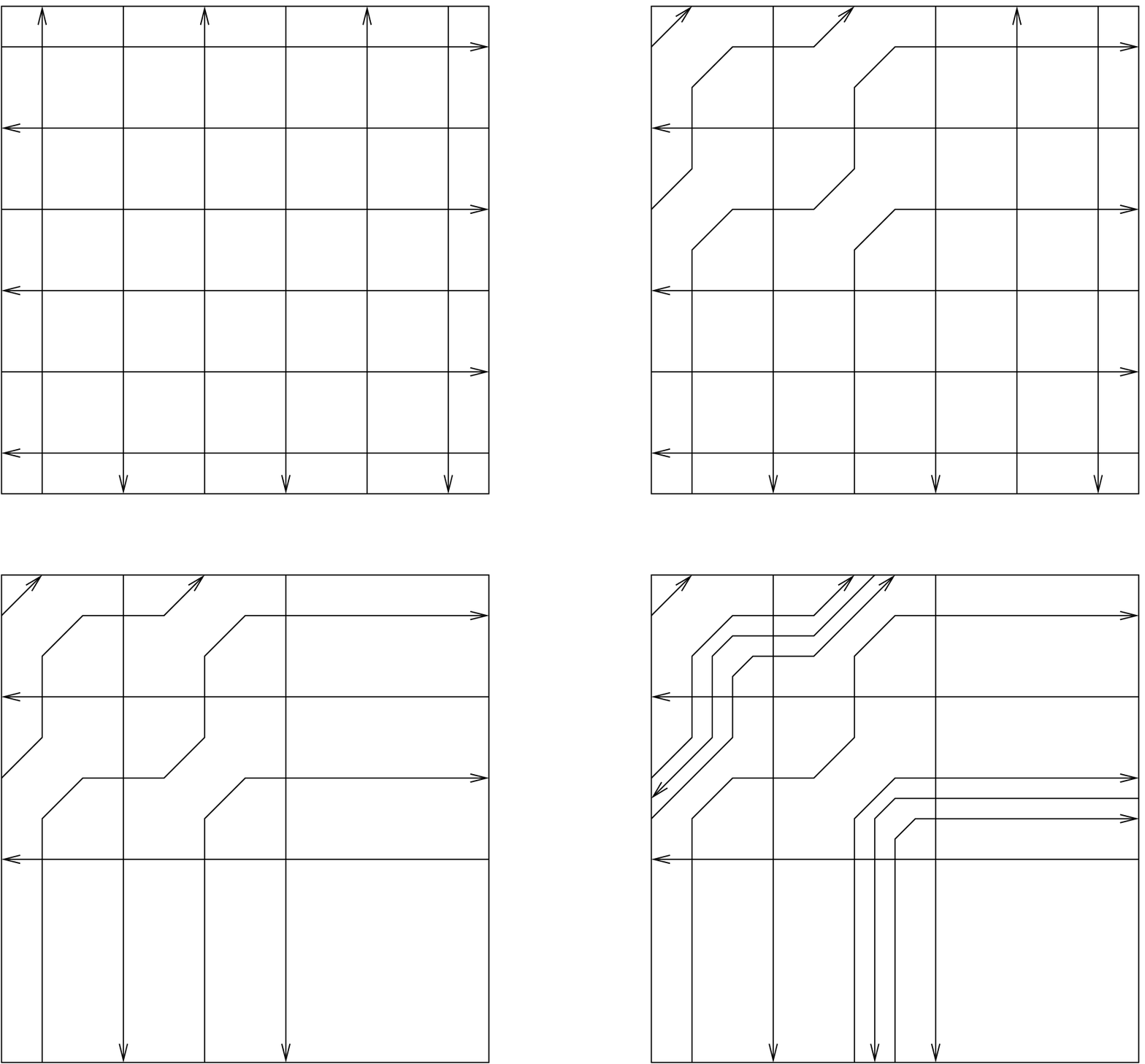}}
\caption{Creating three $(1,1)$ paths and one $(-1,1)$ path
without introducing extra crossings.  We first make two $(1,1)$ paths and
then add a $(1,1)-(-1,-1)$ pair that follows one of those two paths.}
} \end{FIGURE}

\begin{FIGURE} {\label{fig:unevendimer}
\centerline{\includegraphics[width=3.5in]{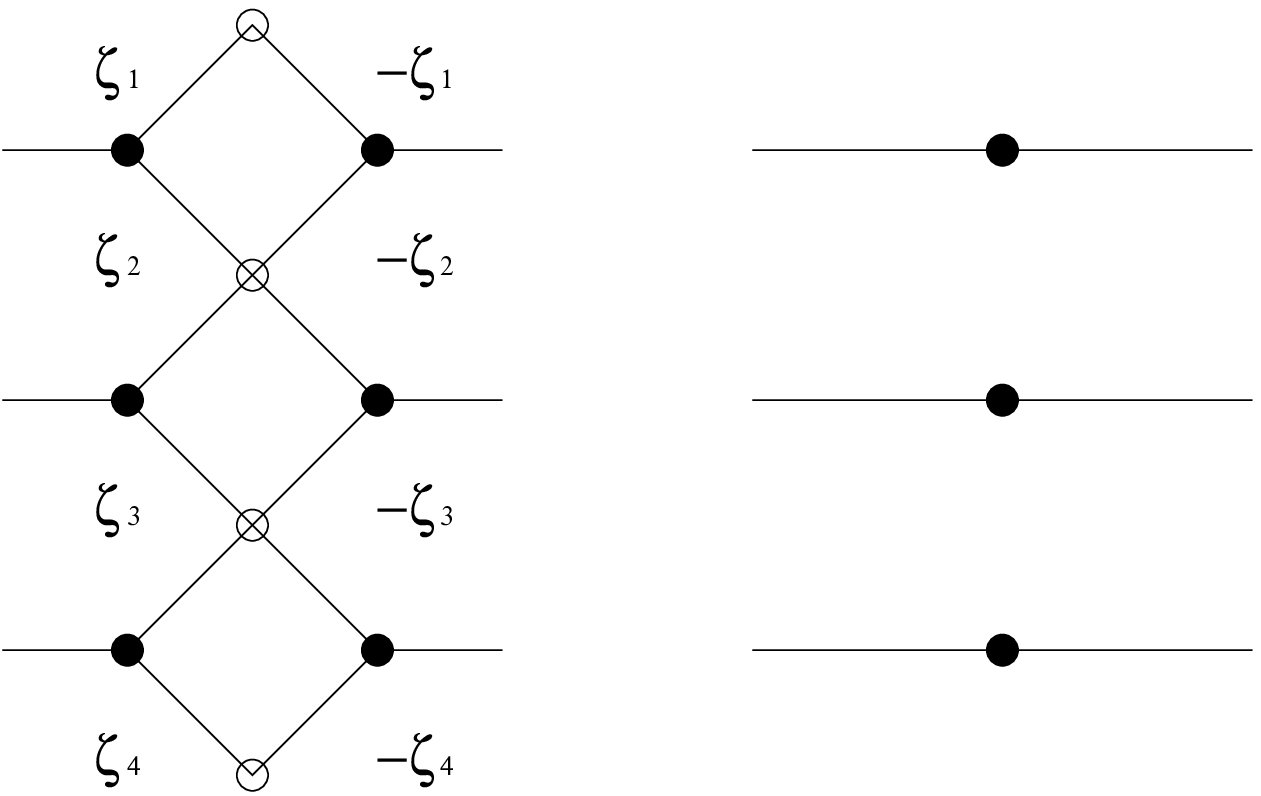}}
\caption{Left: a dimer with a pair of adjacent opposite
zigzag paths.  Right: the dimer with the paths removed.  In physical
terms, we are introducing performing a partial resolution by introducing
Fayet-Iliopoulos parameters for the faces on either
side of the diamonds \cite{Douglas:1997de, Morrison:1998cs, Feng:2000mi, GarciaEtxebarria:2006aq}.  (In particular note that the resolution of the double
conifold in \cite{GarciaEtxebarria:2006aq} is an example of this operation.)
For each pair of faces that meet at one of the points 
in the middle, their FI parameters should sum to zero.  All parameters on the
left should have the same sign.}
} \end{FIGURE}

If we want to create differing numbers of $(1,1)$ and $(-1,-1)$ paths, then we
run into the problem that we cannot pair them all.  We will need to
do something more complicated.  Let $m$ be the number of $(1,1)$ paths we
want to make, and let $n$ be the number of $(-1,-1)$ paths we want to make.
Assume $m>n$.  We first make $m-n$ $(1,1)$ paths.  Now we completely remove
$n$ pairs of adjacent $(1,0)$ and $(-1,0)$ paths and $n$ pairs of
adjacent $(0,1)$ and $(0,-1)$ paths.  Because the pairs are adjacent, the
condition that intersection orientations alternate along a path is preserved.
Now we want to insert $n$
pairs of adjacent $(1,1)$ and $(-1,1)$ paths, and we want to make sure that
there are no extra crossings.  This can be accomplished by making them
follow one of the $m-n$ already existing $(1,1)$ paths.  An example is given
in figure \ref{fig:uneven}.  Figure \ref{fig:unevendimer} shows what removing
or adding a pair of zigzag paths does to the dimer.
This procedure will be called Operation III.

\section{\label{sec:algorithm}An efficient inverse algorithm}

\subsection{Description of the algorithm}

In describing the algorithm we find it useful to draw toric diagrams rotated
90 degrees counterclockwise from their usual presentation.
Our convention will make
the algorithm easier to visualize, because it makes the windings of
the zigzag paths equal to, rather than perpendicular to, the vectors of
the toric diagram edges.

Let $X$ be a toric diagram for which we would like to construct a dimer.
Let $Y$ be
the smallest rectangle with horizontal and vertical sides that contains $X$.
Since $Y$ represents an orbifold of the conifold, we know a dimer for $Y$.
We will modify this dimer until we get a dimer for $X$.

\begin{FIGURE} {\label{fig:tangents}
\centerline{\includegraphics[height=1.25in]{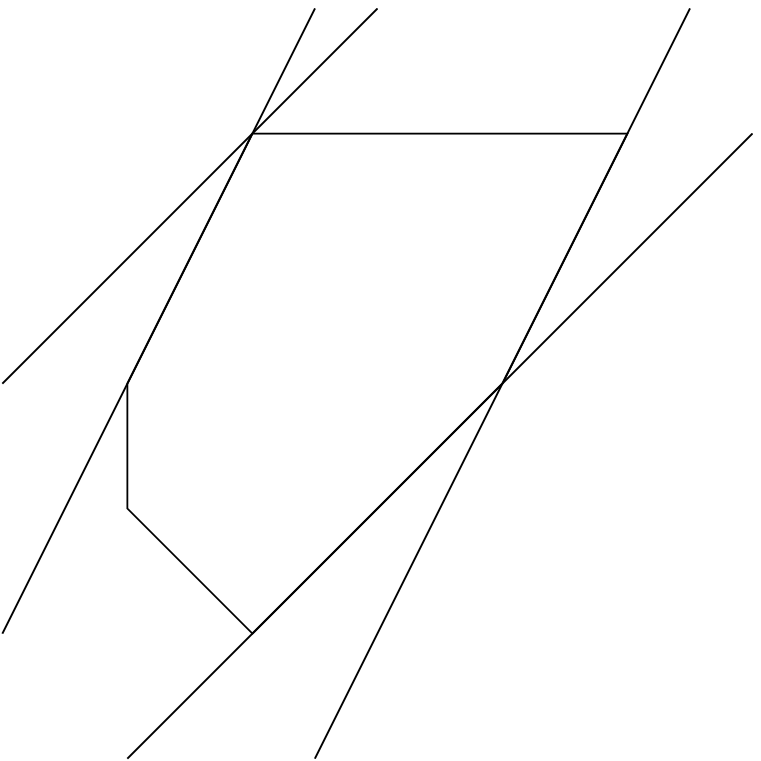}}
\caption{Some tangent lines to a convex polygon.}
} \end{FIGURE}

Before we begin, we need to make the following definition.
A tangent line to a convex polygon $P$ is a line $\ell$ such that
$\ell \cap P \subseteq \partial P$ and $\ell \cap P \ne \emptyset$.
Note that a convex polygon has exactly two tangent lines with a given slope.

\begin{FIGURE} {\label{fig:farey}
\centerline{\includegraphics[width=4.5in]{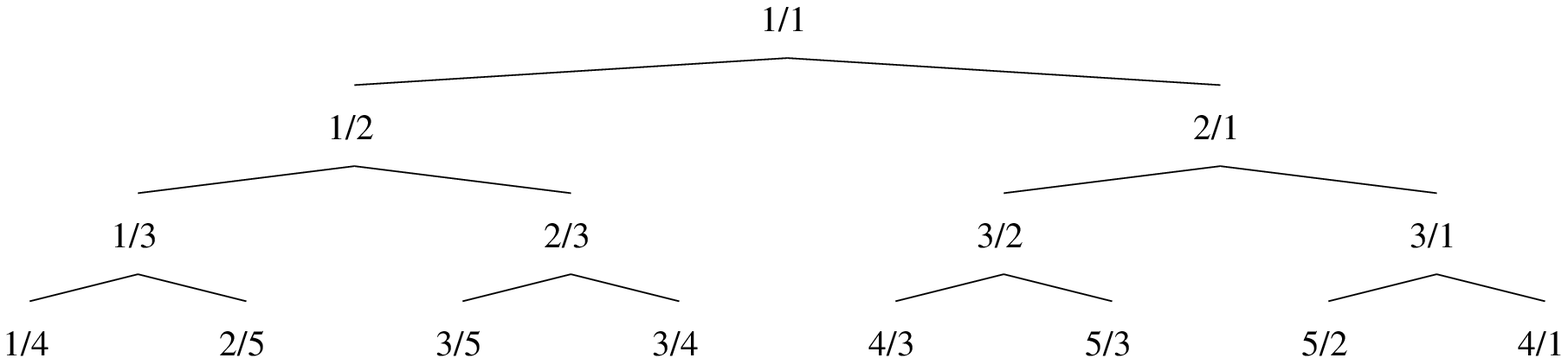}}
\caption{The Farey tree tells us the order in which to
make zigzag paths.  For example, in order to make $(3,4)$ zigzag paths
we first make $(1,1)$ zigzag paths, then $(1,2)$ paths, then $(2,3)$ paths. }
} \end{FIGURE}

We begin by finding the slope one tangent lines to $X$ and cutting $Y$ along
these lines to produce some $(1,1)$ and $(-1,-1)$ paths.
We use Operation I if the number of $(1,1)$ or $(-1,-1)$ paths desired is zero,
Operation II if the numbers are equal, and Operation III if the numbers
are both nonzero and unequal.
Next we want to
cut along the slope $1/2$ tangent lines to $X$ to produce $(2,1)$ and
$(-2,-1)$ paths.  In fact we already know how to do this, because
$SL_2(\mathbb{Z})$ equivalence reduces the problem of making
$(2,1)$ and $(-2,-1)$ paths from $(1,0)$, $(-1,0)$, $(1,1)$, and $(-1,-1)$
paths to the problem of making $(1,1)$ and $(-1,1)$ paths from
$(1,0)$, $(-1,0)$, $(0,1)$ and $(0,-1)$ paths.
Hence we can now cut $Y$ along the slope $1/2$ tangent lines to $X$.
Similarly, we can cut $Y$ along the slope $2$ tangent lines to $X$.
After this, we can make $(3,1)$ paths by combining $(1,0)$ and $(2,1)$ paths,
$(3,2)$ paths by combining $(1,1)$ and $(2,1)$ paths, etc.
We can eventually make paths of all slopes, with the order in which we make
the paths determined by the Farey tree.  (See figure \ref{fig:farey}.)
We can then enumerate over all negative slopes,
starting with $-1$.  When we are finished, we will have a dimer for $X$.

\begin{FIGURE}
{
  \centerline{
    \epsfig{figure=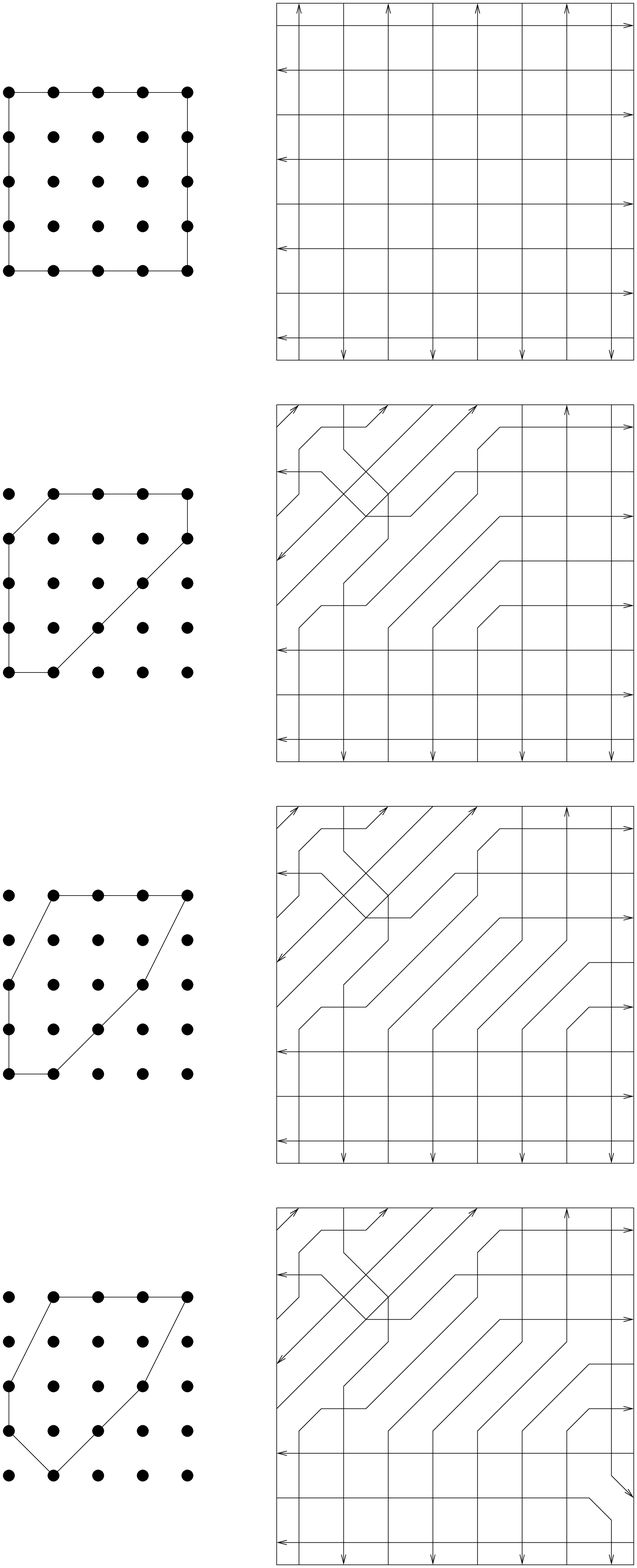, height=6.5in}
  }
  \caption{An example of the algorithm.
  Note that the cut
  made in the second diagram is the same as that of figure
  \ref{fig:uneven},
  although we have drawn it a little differently to make the spacings more
  equal.}
  \label{fig:example}
}
\end{FIGURE}
\begin{FIGURE} {
\centerline{\includegraphics[height=1.25in]{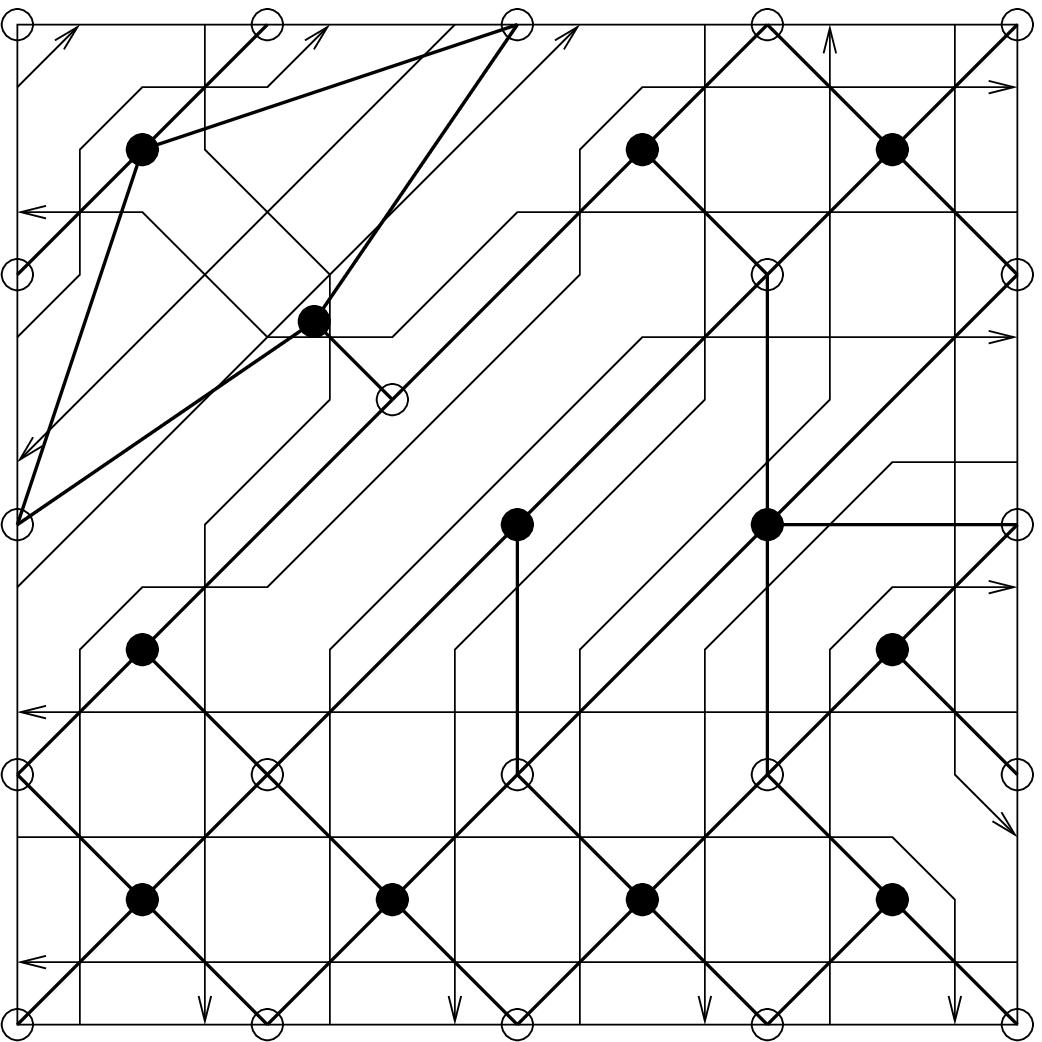}}
\caption{\label{fig:exampledimer} The dimer corresponding to the final
zigzag path diagram in figure \ref{fig:example}.}
} \end{FIGURE}

\begin{FIGURE}
{
  \centerline{
    \epsfig{figure=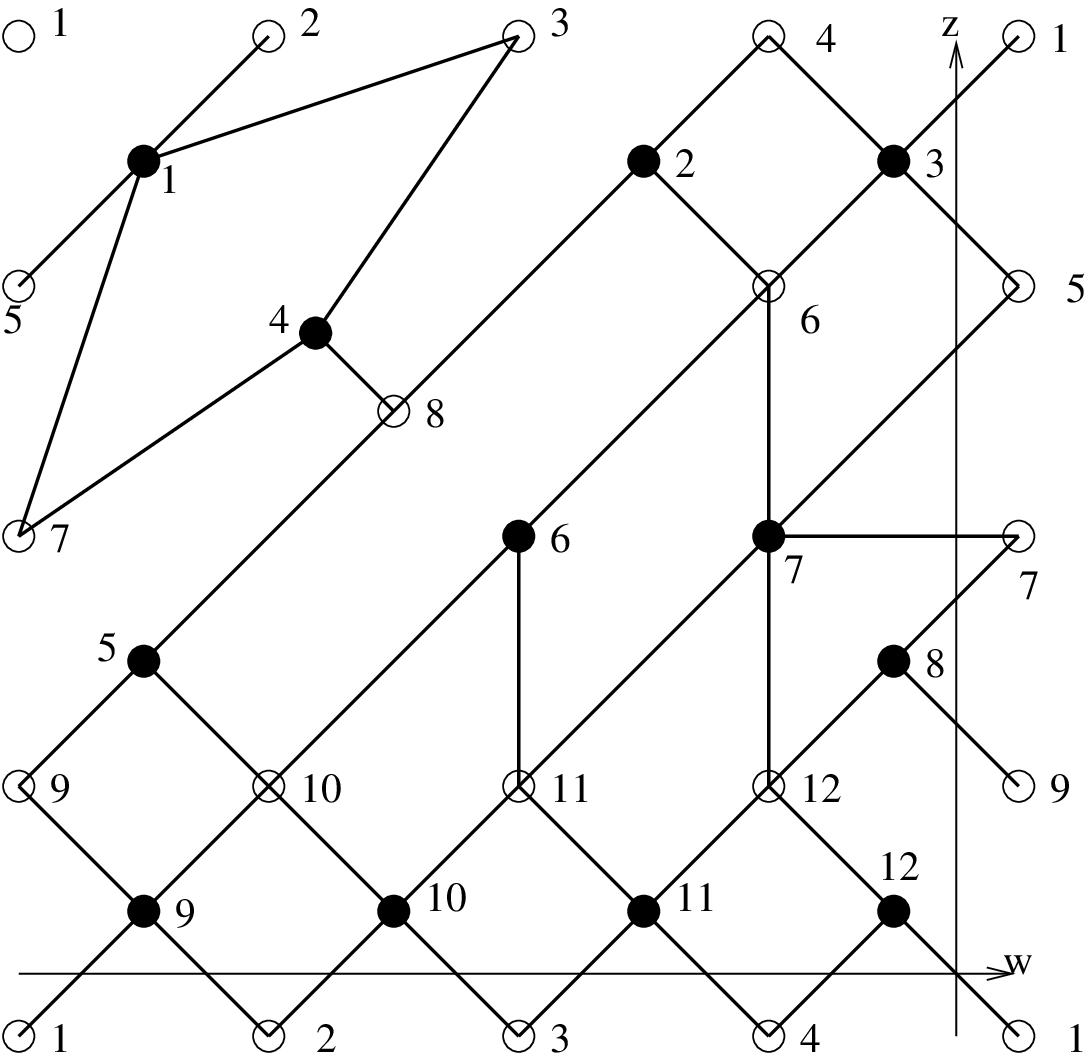, height=1.25in}
  }
  \centerline{}
  \centerline{
    $
    \left(
    \begin{array}{cccccccccccc}
      0 & 0 & z & 0 & 0 & 0 & 0 & 0 & -w & 0 & 0 & wz \\
      1 & 0 & 0 & 0 & 0 & 0 & 0 & 0 & w & -w & 0 & 0 \\
      1 & 0 & 0 & 1 & 0 & 0 & 0 & 0 & 0 & w & -w & 0 \\
      0 & 1 & 1 & 0 & 0 & 0 & 0 & 0 & 0 & 0 & w & -w \\
      -1 & 0 & z & 0 & 0 & 0 & z & 0 & 0 & 0 & 0 & 0 \\
      0 & 1 & -1 & 0 & 0 & 1 & 1 & 0 & 0 & 0 & 0 & 0 \\
      1 & 0 & 0 & -1 & 0 & 0 & z & z & 0 & 0 & 0 & 0 \\
      0 & -1 & 0 & 1 & 1 & 0 & 0 & 0 & 0 & 0 & 0 & 0 \\
      0 & 0 & 0 & 0 & -1 & 0 & 0 & z & 1 & 0 & 0 & 0 \\
      0 & 0 & 0 & 0 & 1 & -1 & 0 & 0 & 1 & 1 & 0 & 0 \\
      0 & 0 & 0 & 0 & 0 & 1 & -1 & 0 & 0 & 1 & 1 & 0 \\
      0 & 0 & 0 & 0 & 0 & 0 & 1 & -1 & 0 & 0 & 1 & 1 \\
    \end{array}
    \right)
    $
  }
  \centerline{}
  \centerline{
    $
    \det = (w^2 - w) z^4 + (-w^4 - 37w^3 - 137w^2 - 35w - 1)z^3 +
    (3w^4 - 175w^3 + 146w^2 - 2w)z^2$
  }
  \centerline{
        $+ (-3w^4-40w^3-w^2)z+w^4$
  }
  \caption{The dimer corresponding to the
    final zigzag path diagram in figure \ref{fig:example} and its Kasteleyn
    matrix.  The rows represent white nodes and the columns represent black
    nodes.}
  \label{fig:exampledimernumbered}
}
\end{FIGURE}

Figure \ref{fig:example} shows an example case of the algorithm.

\subsection{Proof of the algorithm}
We need to prove that we have the paths necessary to perform each step,
and that the finished dimer has properly ordered nodes and
has no extra crossings.

\begin{thm} \label{thm:algorithm}
At each step of the algorithm, the following are true:
\begin{enumerate}
\item If there are $m$ zigzag paths with winding $(a,b)$ and $n$
zigzag paths with winding $(-a,-b)$, then there are $\min(m,n)$ negatively
oriented pairs of $(a,b)$ and $(-a,-b)$ paths.  (This condition
ensures that we can always perform the next step of the algorithm.)
\item There are no extra crossings.
\item All nodes are properly ordered.
\end{enumerate}
\end{thm}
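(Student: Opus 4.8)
The plan is to prove all three statements simultaneously by induction on the steps of the algorithm, treating conditions (1)--(3) as a single joint invariant. The base case is the dimer for the bounding rectangle $Y$, which is an orbifold of the conifold: here the only windings are $(1,0)$, $(-1,0)$, $(0,1)$, $(0,-1)$, and one checks directly that each node sees these four directions in the correct circular order (so the dimer is properly ordered), that every pair of paths crosses the minimum number of times (no extra crossings), and that the horizontal and vertical paths split into the claimed number of negatively oriented opposite pairs.

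For the inductive step, suppose (1)--(3) hold before we cut along the two tangent lines of a new slope. By the Farey-tree ordering, the winding $(a,b)$ of the paths we are about to create is the sum of two already-present ``parent'' windings whose wedge product is $1$; applying an $SL_2(\mathbb{Z})$ transformation I may assume the parents are $(1,0)$ and $(0,1)$, so that we are making $(1,1)$ paths, and along the other tangent line $(-1,-1)$ paths. Writing $m$ and $n$ for the numbers of $(1,1)$ and $(-1,-1)$ paths to be made, I split into the three cases that define the step: Operation~I when $\min(m,n)=0$, Operation~II when $m=n>0$, and Operation~III when $0<\min(m,n)<\max(m,n)$. In each case, invariant (1) at the previous step is exactly what supplies the negatively oriented opposite pairs of parent paths that the operation consumes.

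The verification of (3) is local, since proper ordering is a condition at each individual node, so I would only inspect the nodes the operation creates or modifies. Because $(1,1)$ lies angularly strictly between its parents $(1,0)$ and $(0,1)$, and the Farey tree guarantees that all windings already present in the open wedge from $(1,0)$ to $(0,1)$ were placed consistently, each new node again sees its incident windings in the correct circular order. Condition (2) I would check pairwise: merging parent paths to form the two new diagonal families temporarily creates canceling pairs of crossings between them, and the ``move past each other'' sub-step built into Operations~II and~III is precisely what removes them, leaving the unsigned crossing number of every pair equal to $|(a,b)\wedge(c,d)|$. Finally I would re-establish (1) by exhibiting, for each winding direction produced, the $\min$-many negatively oriented opposite pairs that the next step will need, reading them off from the adjacency pattern the operation leaves behind.

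The main obstacle is Operation~III, the unequal case. Here one cannot simply pair off the $(1,1)$ and $(-1,-1)$ paths; one instead removes adjacent opposite pairs of parent paths and inserts adjacent $(1,1)$/$(-1,1)$-type pairs routed to follow an existing $(1,1)$ path. Showing that this insertion simultaneously keeps every affected node properly ordered, introduces no extra crossings, and leaves exactly the opposite-pair structure required for all future steps is the delicate part, because the three conditions are genuinely coupled there: the routing chosen to avoid extra crossings is the same routing that must preserve both the cyclic winding order at the new nodes and the availability of opposite pairs downstream. Once Operation~III is handled, Operations~I and~II follow as the degenerate and balanced specializations of the same analysis.
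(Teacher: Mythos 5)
Your overall strategy --- induction over the steps of the algorithm with (1)--(3) as a joint invariant, reduction of each step to the mediant case by an $SL_2(\mathbb{Z})$ transformation, case analysis over Operations I--III, and the Farey ordering as the source of the key non-existence facts --- is the same as the paper's. However, there are two genuine gaps. The first is in your verification of condition (2): you only treat the crossings between the two newly created families, arguing that the cancelling crossings between the new $(a+c,b+d)$ and $(-a-c,-b-d)$ paths are removed by the move-past step. You never address whether a new $(a+c,b+d)$ path acquires extra crossings with a \emph{pre-existing} path of some third winding $(e,f)$. When an $(a,b)$ path and a $(c,d)$ path are concatenated, their crossings with $(e,f)$ accumulate, so the unsigned count is $|af-be|+|cf-de|$ while the signed crossing number is $(a+c)f-(b+d)e$; these disagree exactly when $af-be$ and $cf-de$ have opposite signs, i.e.\ when the direction of $(e,f)$ lies strictly between those of the parents. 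The paper's proof of (2) rests entirely on the observation that the Farey ordering guarantees no such winding is yet present. You invoke the emptiness of that open wedge only for proper ordering (condition (3)); without also deploying it for condition (2), your assertion that ``the unsigned crossing number of every pair'' becomes minimal is unsupported.

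The second gap is that you explicitly label Operation III ``the delicate part'' and describe what must be shown there without showing it. That is precisely where the remaining work lies, and the paper does carry it out: the lone paths are produced exactly as in Operation I; deleting an adjacent opposite pair of parent paths trivially preserves proper ordering at the surviving nodes; and inserting an adjacent opposite pair preserves proper ordering because the pair is routed along an already-existing $(a+c,b+d)$ path and therefore inherits that path's freedom from extra crossings, which is exactly the condition needed for every intersection of the inserted paths to carry the sign of the corresponding crossing number. Identifying the hardest case and deferring it leaves the theorem unproved; to complete the argument you need to supply this routing argument (and, for condition (1), note that the inserted paths themselves form negatively oriented opposite pairs and that unpaired parent paths are consumed first so that $\min(m,n)$ pairs survive among the leftovers).
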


\begin{FIGURE} {\label{fig:pairtopair}
\centerline{\includegraphics[height=1.75in]{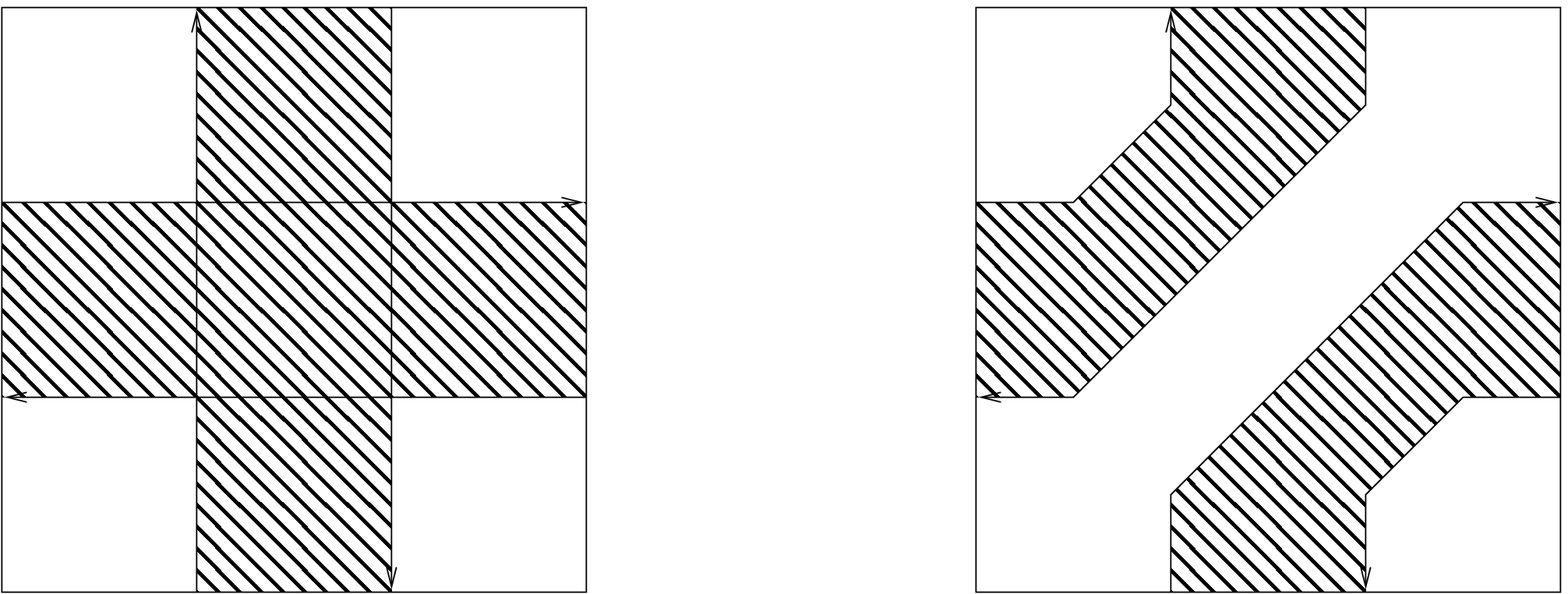}}
\caption{
Left: We start out with two negatively oriented pairs of opposite paths.
The shaded regions are free of crossings.  Right: The regions formed by
the merged pairs are still free of crossings.}
} \end{FIGURE}

\begin{proof}
It is clear that all of these conditions hold for the initial dimer.  Now let's
look at whether the first condition will be preserved.  Operation I will
preserve the condition for the winding of the paths being merged provided that
we merge unpaired paths when possible.  It will also satisfy the condition for
the windings of the newly created paths since there are no $(-a,-b)$ paths.
Operation II will preserve condition 1 for the windings of the paths being
merged since it only deletes negatively oriented pairs.  Figure
\ref{fig:pairtopair} illustrates
why Operation II creates negatively oriented pairs of opposite paths.
For Operation III we should again merge unpaired paths when possible.  It
is clear that the reinserted paths form pairs, and we can make these
pairs negatively oriented if we desire.

Now consider whether extra crossings are introduced.
Let the windings of the paths being merged be $(a,b)$, $(-a,-b)$,
$(c,d)$, and $(-c,-d)$, where $ad-bc=1$.  A path of winding $(e,f)$ will
have extra crossings with the new $(a+c,b+d)$ paths if
$af-be$ and $cf-de$ have opposite signs.  Equivalently, there will
be extra crossings if $f/e$ is between $b/a$ and $d/c$.  But because
of the Farey fraction ordering, there are no windings $(e,f)$ with this
property.  So extra crossings are not introduced.

Finally consider whether proper ordering is preserved.
Again let the windings of the paths being merged be $(a,b)$, $(-a,-b)$,
$(c,d)$, and $(-c,-d)$, $ad-bc=1$.  In Operation I, some nodes will see
an $(a,b)$ path or a $(c,d)$ path become an $(a+c,b+d)$ path.  Therefore
proper ordering is preserved provided there are no windings between
$(a,b)$ and $(c,d)$.  This is always the case because of the Farey fraction
ordering.  In Operation II, in addition to deletion we also need to move paths
past each other.  Some nodes are deleted and the others remain unchanged, so
proper ordering is preserved.  In Operation III, the process of making
the lone paths is the same as Operation I, so it preserves proper ordering.
Removing pairs also preserves proper ordering.
Inserting pairs of paths preserves proper ordering if each intersection between
a path in the pair and another path has the same sign as their crossing number,
i.~e.\ the paths in the pair do not have extra crossings.
Since we are inserting them along an existing path,
they will not have extra crossings if the existing path does not have any.
We have already showed that we never introduce extra crossings.
\end{proof}

\subsection{\label{sec:algextra}Allowing extra crossings}

If we want to produce diagrams with extra crossings, we can always 
just skip the steps for removing the extra crossings.  When we want to
create $(a,b)$ and $(-a,-b)$ paths, we just perform Operation I twice.
There is one
potential issue in that we have always assumed that the zigzag paths that we
join have just
one crossing.  We always join paths with oriented crossing number $\pm 1$,
but now the unoriented crossing number can be larger than the absolute
value of the oriented number.  But we recall that the only extra crossings
we create are between paths with windings of the form $(a,b)$ and $(-a,-b)$.
We may later merge these paths with some other paths, but
the extra crossings will always be between paths with oppositely signed
$x$-coordinates and oppositely signed $y$-coordinates.  We never merge such
pairs of paths.

\subsection{\label{sec:algunitarity}The number of independent solutions to
the $R$-charge equations}

We now exhibit the dimers required by Lemma \ref{lem:pathsminus1}.

\begin{lem} \label{lem:pm1construct}
The algorithm described in section \ref{sec:algextra} produces dimers for which
the set of all solutions to equations 
(\ref{eq:rvert}) and (\ref{eq:rface}) has dimension equal to the number of
zigzag paths minus one.
\end{lem}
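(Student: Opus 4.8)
The plan is to reduce the claim entirely to the counting statement of Lemma \ref{lem:pathsminus1}, since that lemma already tells us that for any collection of prime, non-all-parallel zigzag winding numbers with no self-intersecting path, the dimension of the solution space to (\ref{eq:rvert}) and (\ref{eq:rface}) depends only on the winding numbers, and equals the number of zigzag paths minus one \emph{provided} we can exhibit one dimer realizing that dimension. My task is therefore only to verify that the dimers output by the algorithm of Section \ref{sec:algextra} satisfy the hypotheses of Lemma \ref{lem:pathsminus1}, and then to compute the solution-space dimension directly for these particular dimers by a hands-on argument. The point is that the extra-crossing variant produces an especially transparent combinatorial structure.

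First I would check the hypotheses. The windings produced by the algorithm are exactly the tangent-line slopes cut during the Farey-tree sweep, so each is prime by construction (we only ever form $(a+c,b+d)$ from Farey-adjacent $(a,b),(c,d)$ with $ad-bc=1$, which keeps components coprime). They are not all parallel, since the toric diagram $X$ is two-dimensional and the cuts run in both the positive and negative slope ranges. The no-self-intersection condition is where I would invoke the results already proved: in Section \ref{sec:algextra} it is noted that the only extra crossings created are between oppositely-signed pairs $(a,b)$ and $(-a,-b)$, which are never subsequently merged; combined with Theorem \ref{thm:algorithm}, no zigzag path in the output ever crosses itself. Hence Lemma \ref{lem:pathsminus1} applies, and it suffices to verify the dimension count for these dimers.

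The core step is the dimension computation, and this is where I would do the real work. For the extra-crossing dimers I would use the unified equation (\ref{eq:rcomb}) on the zigzag path diagram, where $R$ is a function on vertices (one per edge of the dimer) and there is one linear constraint $\sum_{v\in f}\sigma_{v,f}(R(v))=2$ per face $f$. The number of free parameters before imposing constraints is the number of vertices of the zigzag path diagram, i.e.\ the number of edges $E$ of the dimer; the number of constraints is the number of faces $F$ together with the nodes $V$ (since a white/black node is itself a face of the zigzag diagram), so the relevant counts are governed by the Euler relation $V-E+F=0$ on the torus. The claim that the solution dimension equals (number of zigzag paths) $-1$ is then an assertion that the constraints have exactly a predictable rank; I would establish this by exhibiting the $\delta_Z$ functions (alternating $\pm$ along each zigzag path $Z$, zero elsewhere) as an explicit spanning set whose only relation is $\sum_Z \delta_Z = 0$, mirroring the argument in the proof of Theorem \ref{thm:bdysymmetry}, and then check linear independence of all but one of them directly using the simple crossing pattern the algorithm guarantees.

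The main obstacle I anticipate is proving that the $\delta_Z$ are the \emph{only} solutions, i.e.\ that the constraint rank is not larger than $E - (\text{paths} - 1)$ would require. For the properly-ordered case this followed from the upper bound given by Lemma \ref{lem:pathsminus1} itself, but here the dimers may have extra crossings and need \emph{not} be properly ordered, so I cannot reuse Theorem \ref{thm:bdysymmetry}. Instead I would argue combinatorially: given any solution to (\ref{eq:rcomb}), I would show its difference from a fixed reference solution is supported on zigzag paths by tracing how the face equations propagate the value of $R$ along a path (each face equation links consecutive vertices of a path through its $\sigma_{v,f}$ terms), concluding that the increments $R(v)-R_{\text{ref}}(v)$ must be constant in absolute value and alternating in sign along each path — exactly a combination of $\delta_Z$'s. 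The delicate part is handling the extra crossings, where a single pair of paths meets more than the minimal number of times; I expect the alternation of crossing signs along each path (which Operation-pairing guarantees, as used in Section \ref{sec:extra}) to make the propagation argument go through unchanged.
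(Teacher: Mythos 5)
There is a genuine gap, and it sits exactly where you place the ``real work.'' First, a structural worry: you open by reducing to Lemma \ref{lem:pathsminus1}, but the proof of that lemma explicitly defers its existence step to the present lemma (``In Lemma \ref{lem:pm1construct} we will exhibit \dots''), so any appeal to it here is circular; you partly sidestep this by promising a direct dimension count, but then the reduction buys nothing and the entire burden falls on that count. Second, and more seriously, the direct count is not an argument. Your plan is to show that the difference of two solutions is ``constant in absolute value and alternating in sign along each path,'' but that property characterizes a single $\delta_Z$, not a general element of their span: at an edge where $Z_r$ meets $Z_s$, the combination $\sum_Z c_Z\delta_Z$ takes the value $2(\pm c_{Z_r}\pm c_{Z_s})$, which varies along $Z_r$ as the crossing partner $Z_s$ varies. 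So the statement you propose to propagate is false for the very solutions you need to capture. Moreover, the propagation mechanism itself is missing: a node of valence $k>2$ contributes one linear relation among $k$ edge values, and a face with $2n>4$ sides one relation among $2n$ values; neither locally determines the value on one edge of a zigzag path from the value on the previous one. Even for the diamond grid the homogeneous solutions $(-1)^if(j)+(-1)^jg(i)$ are not obtained by local propagation but by actually solving the difference equation. Establishing that the constraint rank is exactly $E-(\text{paths}-1)$ is the entire content of the lemma, and your proposal asserts it rather than proves it.

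The paper's route is different and avoids all of this: it proves the statement by induction along the algorithm itself. The base case is the diamond-shaped grid (the orbifold of the conifold), where the full solution set of (\ref{eq:rvert}) and (\ref{eq:rface}) is computed explicitly (e.g.\ by Fourier analysis) to be $\tfrac12+(-1)^if(j)+(-1)^jg(i)$, giving dimension (rows)$+$(columns)$-1=$ (paths)$-1$. For the inductive step, each edge deletion performed by the algorithm identifies solutions on the new dimer with solutions on the old dimer that vanish on the deleted edge; since the deleted edge lies in a boundary perfect matching, some solution is nonzero there, so the deletion drops the solution dimension by exactly one while merging two zigzag paths into one, preserving the identity ``dimension $=$ paths $-1$.'' If you want to salvage your approach, you would need either to break the circularity with Lemma \ref{lem:pathsminus1} or to supply an actual rank computation; the inductive edge-deletion argument is the cleanest way to do the latter.
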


\begin{proof}
Our proof is by induction.  Our algorithm starts with a dimer that is a
diamond-shaped grid.  We denote the position of an edge in the grid by $(i,j)$.
We can see (e. g. by Fourier analysis) that the general solution to
(\ref{eq:rvert}) and (\ref{eq:rface}) is $\frac{1}{2}+(-1)^i f(j) +(-1)^j g(i)$
for arbitrary functions $f,g$.  The number of independent solutions is the
number of rows plus the number of columns minus one (the minus one come from
the fact that $f(j)=(-1)^j, g(j)=-(-1)^i$ produces the same solution as
$f(j)=0, g(j)=0$), which is the
number of zigzag paths minus one.

Now consider what happens when our algorithm deletes an edge of the toric
diagram.  If we have a solution to the equations (\ref{eq:rvert}) and
(\ref{eq:rface}) in the
new dimer, we can construct a solution to the equations in the old dimer
by assigning a value of zero to the deleted edge.  Conversely, if we have
a solution in the old dimer in which the deleted edge has value zero, then
we have solution in the new dimer as well.  We know that there exists a solution
in the old dimer where the deleted edge is nonzero, since the deleted edge
is contained in some boundary perfect matching.  So deleting the edge reduces
the dimension of the solution space of (\ref{eq:rvert}) and (\ref{eq:rface})
by one, and also reduces the number of zigzag paths by one.

\end{proof}

\section{Conclusions}

We showed that dimers that have the number of faces predicted by the
AdS dual theory and that have valence one nodes will have
many nice properties: they are ``properly ordered'',
their cubic anomalies are in agreement with the Chern-Simons coefficients of
the AdS dual, gauge-invariant chiral primary operators satisfy
the unitarity bound, corner perfect matchings are unique,
and zigzag path windings are in one-to-one correspondence with the $(p,q)$-legs
of the toric diagram.

We derived some simple bounds for the cubic anomaly $a$ in terms of the
area of the toric diagram (and hence in terms of the number of gauge groups).

We provided a precise, computationally feasible algorithm for
producing a dimer model for a given toric diagram based on previous
partial resolution techniques and the Fast Inverse Algorithm.

It would be interesting to see if our results could apply to
the three-dimensional dimers discussed in \cite{Lee:2006hw} and
the orientifold dimers discussed in \cite{Franco:2007ii}.

\section*{Acknowledgments}
I would like to thank Christopher Herzog for suggesting this problem to
me and for providing many helpful discussions.  I would like to thank
Daniel Kane for helpful discussions regarding Lemmas \ref{lem:pathsminus1} and
\ref{lem:pm1construct} and Alberto Zaffaroni for discussion.
I would also like to thank Amihay Hanany for introducing me to dimer models.
This work was supported in part by the NSF Graduate Fellowship Program and
NSF Grant PHY-075696.

\bibliography{algorithm}{}
\bibliographystyle{JHEP}
\end{document}